\newtheorem{theorem}{Theorem}
\newtheorem{example}{Example}
\newtheorem{assumption}{Assumption}
\newtheorem{lemma}{Lemma}
\newenvironment{manualtheorem}[1]{%
  \IfBlankTF{#1}
    {}
    {}%
  \manualtheoreminner
}{\endmanualtheoreminner}
\newenvironment{manualassumption}[1]{%
  \IfBlankTF{#1}
    {}
    {}%
  \manualassumptioninner
}{%
  \endmanualassumptioninner
}
\newcommand\blfootnote[1]{%
  \begingroup
  \renewcommand\thefootnote{}\footnote{#1}%
  \addtocounter{footnote}{-1}%
  \endgroup
}
\begin{document}

\title{Harnessing Bayesian Statistics to Accelerate Iterative Quantum Amplitude Estimation
}

\author{Qilin Li\textsuperscript{*}}
\affiliation{Department of Statistics,  University of Wisconsin-Madison, 
1205 University Ave.,
Madison, WI 53706, USA}

\author{Atharva Vidwans\textsuperscript{*}}
\affiliation{Department of Chemistry, University of Wisconsin-Madison, 
1101 University Ave.,
Madison, WI 53706, USA}

\author{Yazhen Wang}
\affiliation{Department of Statistics,  University of Wisconsin-Madison, 
1205 University Ave.,
Madison, WI 53706, USA}

\author{Micheline B. Soley\textsuperscript{\dag}}
\affiliation{Department of Chemistry, University of Wisconsin-Madison, 1101 University Ave., Madison, WI 53706, USA}
\affiliation{Department of Physics, University of Wisconsin-Madison, 1150 University Ave., Madison, WI 53706, USA}
\affiliation{Data Science Institute, University of Wisconsin-Madison, 1205 University Ave., Madison, WI 53706, USA}

\maketitle

\begin{abstract}
     {We establish a unified statistical framework that underscores the crucial role  statistical inference plays in Quantum Amplitude Estimation (QAE), a task essential to fields ranging from chemistry to finance and machine learning. We use this framework to harness Bayesian statistics for improved measurement efficiency with rigorous interval estimates at all iterations of Iterative Quantum Amplitude Estimation. We demonstrate  the resulting method,  Bayesian Iterative Quantum Amplitude Estimation (BIQAE), accurately and efficiently estimates both quantum amplitudes and molecular ground-state energies to high accuracy, and show in analytic and numerical sample complexity analyses that BIQAE outperforms all other QAE approaches considered. Both rigorous mathematical proofs and numerical simulations conclusively indicate Bayesian statistics is the source of this advantage, a finding that invites further inquiry into the power of statistics to expedite the search for quantum utility.}
\end{abstract}


\section{Introduction}

\blfootnote{\textsuperscript{*} These authors contributed equally to this work.}
\blfootnote{\textsuperscript{\dag} Corresponding Author: \href{mailto:msoley@wisc.edu}{msoley@wisc.edu}}

Reduction of the measurement cost of quantum amplitude estimation is essential to accelerate the utility of quantum computers. The inherently probabilistic nature of quantum computing frequently entails a large number of quantum circuit measurements to definitively determine the proportion of a quantum state in a specific target state \cite{brassard2000quantum,grover1998framework,abrams1999fast,wie2019simpler,aaronson2020quantum,nakaji2020faster,suzuki2020amplitude,harrow2020adaptive,arunachalam2022simpler,rall2023amplitude,plekhanov2022variational,giurgica2022low,callison2023improved}, and finding such a proportion is inherent to fundamental mathematical tasks on quantum computers, from integration to optimization \cite{montanaro2015quantum,yu2020practical,rao2020quantum}. These tasks in turn play a foundational role in fields ranging from chemistry \cite{kassal2008polynomial,gunther2024more,baker2020density,johnson2022reducingcostenergyestimation,kunitsa2024experimental} to particle physics \cite{agliardi2022quantum,de2023quantum,miyamoto2024quantum,de2025quantum,martinez2024loop,lee2025quantum, Williams:2025hza}, condensed matter \cite{johri2017entanglement,rall2020quantum,dong2022ground,piroli2024approximating,agrawal2024quantifying} to fluid mechanics \cite{gaitan2020finding,bharadwaj2023hybrid,penuel2024feasibility,gaitan2024circuit}, and finance \cite{stamatopoulos2020option,bouland2020prospects,stamatopoulos2022towards,alcazar2022quantum,herman2023quantum,wang2024option,woerner2019quantum,gomez2022survey,rebentrost2018quantum,miyamoto2022bermudan,egger2020credit,orus2019quantum,braun2021quantum} to machine learning \cite{wiebe2016quantum,wiedemann2023quantumpolicyiteration}. It is therefore expected that advancements to QAE would have broad impact. 


A Bayesian approach that harnesses past information to predict future outcomes is emerging as a force in quantum algorithm development. Bayesian variants of standard frequentist approaches to quantum computing have been proposed in a wide array of quantum computing algorithms, including QAE \cite{PRXQuantum.2.010346,koh2022foundations,2024bayesianquantumamplitudeestimation}, the quantum approximate optimization algorithm  \cite{wang2021quantum,stein2022quantum,tibaldi2023bayesian,kim2023quantum,song2023trainability,he2025self}, phase estimation \cite{wiebe2016efficient,paesani2017experimental,li2018frequentist,o2019quantum,martinez2019adaptive,qiu2021bayesian,van2021efficient,gebhart2021bayesian,qiu2022efficient,smith2024adaptive,yamamoto2024demonstrating,hurant2024few,zhou2024bayesian,direkci2024heisenberg,yamamoto2024demonstrating}, phase difference estimation \cite{sugisaki2021bayesian,sugisaki2022quantum,sugisaki2023projective}, the variational quantum eigensolver \cite{mcclean2016theory,wang2019accelerated,wakaura2021evaluation,iannelli2021noisy,self2021variational,johnson2022reducingcostenergyestimation,nicoli2023physics,sorourifar2024towards,rohrs2024bayesian,jiang2024error,huynh2024variational,pedrielli2025bayesian,he2025self}, and quantum annealing \cite{pelofske2020advanced,finvzgar2024designing}. In QAE specifically, Bayesian statistics has been found to be associated with key developments in driving down computational costs \cite{PRXQuantum.2.010346,koh2022foundations,2024bayesianquantumamplitudeestimation}. The Bayesian approach of Robust Amplitude Estimation (RAE), which employs fixed-length quantum circuits with Bayesian updated parameters, has been found to effectively reduce the impact of so-called estimation dead spots \cite{PRXQuantum.2.010346}. Likewise, the Bayesian approach of Bayesian Amplitude Estimation (BAE), which employs adaptive circuits with a Bayesian-informed scheduler, has successfully yielded the lowest number of oracle calls to reach a given target accuracy (quantum sample complexity) to date \cite{2024bayesianquantumamplitudeestimation}. 

However, the impact of Bayesian statistics itself in QAE measurement cost reduction has historically been difficult to isolate due to changes in the underlying QAE implementation required to inject Bayesian statistics in a noisy quantum environment. Implementation of Bayesian enhanced sampling in RAE coincides with abbreviation of the quantum circuit to avoid quantum decoherence effects \cite{PRXQuantum.2.010346}, and similarly BAE fuses a Bayesian approach with a dynamic grid expansion scheduling strategy \cite{2024bayesianquantumamplitudeestimation}. There is therefore a pressing need for a controlled experiment based on a clear-cut QAE algorithm capable of remaining unchanged except for the injection of Bayesian statistics in order to showcase the true power of what Bayesian inference offers QAE.

We aim to clearly demonstrate the direct impact that Bayesian inference offers to reduce measurement cost in QAE to provide a proof of the quantum speedup offered by Bayesian statistics and to provide conclusive numerical evidence of such a speedup. To do so, we establish a statistical framework for three key QAE methods---Classical Quantum Amplitude Estimation (Classical QAE) \cite{brassard2000quantum},  Amplified Amplitude Estimation (AAE)  \citep{simon2024amplified}, and Iterative Quantum Amplitude Estimation (IQAE) \cite{grinko2021iterative}---in the statistical perspective of asymptotic mean squared error (AMSE) and the normal approximation, and we use the resulting framework to introduce ``Bayesian Iterative Quantum Amplitude Estimation (BIQAE)'' based on the injection of Bayesian statistics into IQAE.  We show the resulting statistical framework quantifies a double-digit percentage speedup of BIQAE relative to IQAE that can be completely attributed to Bayesian inference with rigorous mathematical proofs, and that BIQAE outperforms all other QAE methods considered---including benchmark IQAE and state-of-the-art BAE---over many orders of magnitude in the target accuracy for estimation of arbitrary quantum amplitudes. BIQAE also lowers the quantum sample complexity of ground-state energy approximation relative to benchmark IQAE for all molecular test cases considered over a broad range of equilibrium and nonequilibrium bond distances. These findings set the stage for development of further schemes to capitalize on Bayesian inference in QAE and to develop new approaches to reduce measurement cost in quantum mechanics with a Bayesian rather than a frequentist perspective.

\section{Statistical Framework for QAE}
\label{sec:bg_rw}

\subsection{Preliminaries}
\label{sec:CAI}

We define the QAE problem in terms of the action of a quantum oracle $\mathcal{A}$ on an ($n+1$)-qubit zero state $\ket{0}_n\ket{0}$. The resulting output state may be expressed in terms of two normalized states $\ket{\psi_0}_n$ and $\ket{\psi_1}_n$, which need not be orthogonal, as
\begin{align*}
\mathcal{A}\ket{0}_n\ket{0}=\sqrt{1-a}\ket{\psi_0}_n\ket{0}+\sqrt{a}\ket{\psi_1}_n\ket{1},
\end{align*}
where the oracle $\mathcal{A}$ may be chosen freely without loss of generality with the states $\ket{\psi_0}_n$ and $\ket{\psi_1}_n$ defined accordingly. The objective of QAE is then to estimate the amplitude $a \in[0,1]$ of the quantum state $\ket{\psi_1}_n\ket{1}$.

As a benchmark, Classical QAE (or the ``standard-sampling'' approach) approximates the quantum amplitude $a$ as the sample mean of $n$ measurement outcomes of the quantum circuit associated with oracle \(\mathcal{A}\), where the ancilla states \(|0\rangle\) and \(|1\rangle\) are mapped to the outcomes \(0\) and \(1\), respectively. Where the \textit{quantum sample complexity} is defined by the total number of accesses to the oracle, the quantum sample complexity required to achieve a target accuracy of $\varepsilon$ with Classical QAE follows as on the order of $\mathcal{O}(1/\varepsilon^2)$~\cite{brassard2000quantum}. Restatement of the sample complexity in terms of asymptotic mean-square error (AMSE), as derived 
in Appendix~\ref{appendix:CAI}, then yields a sample complexity required to achieve a mean squared error (MSE) $\varepsilon^2$ of
\begin{align}
\label{eq:complexity_ClassicalQAE}
    N_{\text{oracle}} = \frac{1}{\varepsilon^2} a (1 - a).
\end{align}

The relatively high measurement cost of Classical QAE is addressed via amplified estimation (or the ``enhanced-sampling'' approach), which yields quadratic improvement in the sample complexity \cite{brassard2000quantum}. Let the amplitude $a$ be parametrized as 
\begin{align*}
    a = \sin^2{\theta},
\end{align*}
where $\theta \in \left[0, \frac{\pi}{2}\right]$, and let the Grover operator $\mathcal{Q}$ be defined as
\begin{align*}
    \mathcal{Q}\coloneqq \mathcal{A} \mathcal{S}_0 \mathcal{A}^\dagger \mathcal{S}_{\psi_0},
\end{align*}
where $\mathcal{S}_0 \coloneqq \mathbb{I}_{n+1} - 2\ket{0}_{n+1}\bra{0}_{n+1}$ and $\mathcal{S}_{\psi_0} \coloneqq \mathbb{I}_{n+1} -2\ket{\psi_0}_n\bra{\psi_0}_n \otimes \ket{0}\bra{0}$. Repeated application of the Grover operator $\mathcal{Q}$ $k$ times then results in the amplified state~\cite{brassard2000quantum}
\begin{align}
    \label{eq:fundamental}
    \mathcal{Q}^k \mathcal{A}\ket{0}_n\ket{0}&=\cos{\left((2k+1)\theta\right)}\ket{\psi_0}_n\ket{0}\nonumber\\
    &+ \sin{\left((2k+1)\theta\right)}\ket{\psi_1}_n\ket{1},
\end{align}
such that the \textit{amplified target probability} is
\begin{align}
    \label{eq:pk}
    p_k\coloneqq \sin^2{\left((2k+1)\theta\right)}.
\end{align}
Given that the probability $p_k$ is amplified relative to the original probability $a$, where enhanced sampling estimates $p_k$ with the same accuracy as $a$ and subsequently rescales based on inference of $\theta$, $p_k$ results in a higher-accuracy estimate of $a$ than direct measurement of $a$ itself (see Appendix~\ref{appendix:CAICoin} Example~\ref{appendix_example: coin}). 

Abstraction of the quantum amplitude $a$ in amplified estimation from the amplified state therefore follows an analogous procedure to abstraction of the quantum amplitude $a$ from the original state in Classical QAE---with  an added identifiability challenge. For a fixed number of Grover operators $k$, the circuit $\mathcal{Q}^k \mathcal{A}$ applied to the $(n+1)$-qubit zero state is measured $N_k$ times. Let $X_{i,k}$ represent the outcome of the $i$-th application. The sample mean is calculated as $\overline{X}_k=\frac{1}{N_k}\sum_{i=1}^{N_k} X_{i,k}$. This mean approximates $p_k$, which for known $k$ yields $\theta$, which is in turn used to calculate $a$ according to the pipeline schematic in Fig.~\ref{fig:detour}.
\begin{figure}[t] 
    \centering
    \includegraphics[width=1\linewidth]{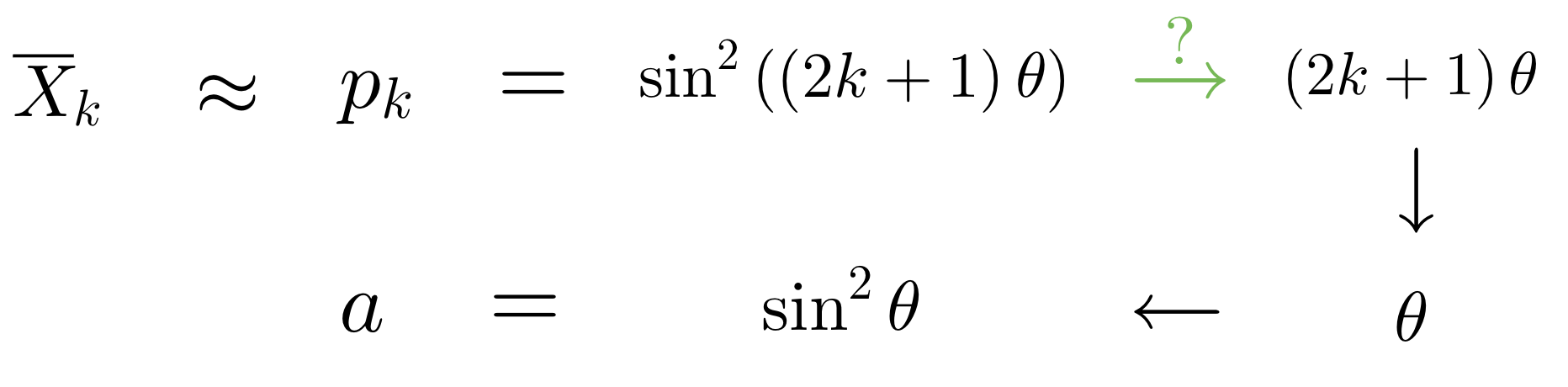}
    \caption{Pipeline of enhanced-sampling amplified estimation with identifiability challenge highlighted in green.}
    \label{fig:detour}
\end{figure}
For each value of $p_k$ there exist $2k+1$ valid solutions of Eq.~\eqref{eq:pk} 
\begin{align}
    \label{eq:mul_sol}
    (2k+1)\theta = 
    \begin{cases} 
        \arcsin{(\sqrt{p_k})} + l\cdot\frac{\pi}{2} & \text{if $l$ is even}, \\
        \arccos{(\sqrt{p_k})} + l\cdot\frac{\pi}{2} & \text{if $l$ is odd},
    \end{cases} 
\end{align} 
where $l = 0, 1, 2, \ldots, 2k$ (see Appendix~\ref{appendix:CAIIdentifiability}). Each choice of the $l$ value and its corresponding possible solution in Eq.~\eqref{eq:mul_sol} leads to a distinct estimate for $a$, each of which is well-reasoned; yet, based solely on information about a single $p_k$ $(k>0)$, it is impossible to determine which possible solution is the true solution. As a result, without further information, $a$ would remain unidentifiable, and additional information is required to resolve this ambiguity, thereby identifying the unique {\em amplified angle} \( (2k+1)\theta \) that corresponds to the true $l$ value for a given $k$. This \textit{quadrant index} \( l(k) \) may be readily interpreted geometrically as the quadrant index of the amplified angle
\begin{align}
    \label{eq:quadrant index}
    l(k) = \left\lfloor \frac{(2k+1)\theta}{(\pi/2)} \right\rfloor,
\end{align}
where \( \lfloor \cdot \rfloor \) denotes the floor function. 

\subsection{Amplified Amplitude Estimation}

The recently developed Amplified Amplitude Estimation (AAE) method \citep{simon2024amplified} resolves the identifiability problem in amplified estimation via the small-angle assumption and can be readily formulated in the statistical framework of AMSE (thus providing a conceptual stepping stone to IQAE and BIQAE), as follows:

To begin, we specify that the AAE approach 
considers cases in which the amplitude $a$ is sufficiently small that there exists only one solution to the identifiability challenge Eq.~\eqref{eq:mul_sol}.
\begin{assumption}
    \label{assump:smallprob}
    For a given $k$, $a$ is sufficiently small that $(2k+1)\theta$ does not exceed $\frac{\pi}{2}$.
\end{assumption}
With Assumption~\ref{assump:smallprob}, the quadrant index $l(k)$ in Eq.~\eqref{eq:mul_sol} must be zero, such that there exists only one valid solution. 

According to the proof outlined in Appendix~\ref{appendix:proof_thm123}, the quantum sample complexity of the approach can then be stated in the statistical framework of AMSE according to Theorem~\ref{thm:AAE}:
\begin{theorem}
    \label{thm:AAE} 
    Suppose Assumption~\ref{assump:smallprob} holds with a certain $k$. Then, 
    \begin{align}
        \label{eq:a0}
        \hat{a}_0\coloneqq\sin^2{\left(\frac{1}{2k+1}\arcsin{\sqrt{\bar{X}_k}}\right)}
    \end{align}
    is the Maximum Likelihood Estimator (MLE) for $a$. The corresponding asymptotic MSE is
    \begin{align*}
        \text{AMSE}_{\hat{a}_0}(a)=\frac{1}{(2k+1)^2}\frac{1}{N_k}a(1-a),
    \end{align*}
    and to achieve a target accuracy $\varepsilon$, the quantum sample complexity is
    \begin{align}
    \label{eq:complexity_AAE}
        N_\text{oracle} =(2k+1) N_k = \frac{1}{2k+1}\frac{1}{\varepsilon^2}a(1-a).
    \end{align}
\end{theorem}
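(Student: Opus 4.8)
The plan is to prove the three assertions of Theorem~\ref{thm:AAE} in turn: the MLE characterization of $\hat a_0$, the asymptotic variance via the delta method, and the resulting oracle count.

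First, for the MLE claim, I would note that the measurement outcomes $X_{1,k},\dots,X_{N_k,k}$ are i.i.d.\ Bernoulli with success probability $p_k=\sin^2((2k+1)\theta)$, so the sample mean $\bar X_k$ is the ordinary MLE of $p_k$. Under Assumption~\ref{assump:smallprob} the quadrant index $l(k)$ is forced to be $0$, so the map $\theta\mapsto p_k$ is a bijection from the relevant domain onto $[0,1]$ with inverse $\theta=\frac{1}{2k+1}\arcsin\sqrt{p_k}$; composing with $a=\sin^2\theta$ shows $a$ is a one-to-one reparametrization of $p_k$. The functional-invariance property of maximum likelihood estimation then immediately gives that $\hat a_0=\sin^2\!\big(\frac{1}{2k+1}\arcsin\sqrt{\bar X_k}\big)$, i.e.\ Eq.~\eqref{eq:a0}, is the MLE of $a$.

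Second, for the asymptotic MSE I would combine the central limit theorem with the delta method. The CLT gives $\sqrt{N_k}\,(\bar X_k-p_k)\Rightarrow\mathcal N\!\big(0,\,p_k(1-p_k)\big)$, and writing $\hat a_0=g(\bar X_k)$ with $g(x)=\sin^2\!\big(\frac{1}{2k+1}\arcsin\sqrt x\big)$ the delta method yields $\sqrt{N_k}\,(\hat a_0-a)\Rightarrow\mathcal N\!\big(0,\,g'(p_k)^2\,p_k(1-p_k)\big)$. A chain-rule computation through the substitution $u=\arcsin\sqrt x$ (using $\frac{d}{dx}\arcsin\sqrt x=\tfrac12 x^{-1/2}(1-x)^{-1/2}=1/\sin(2u)$) gives $g'(p_k)=\frac{1}{2k+1}\cdot\frac{\sin(2\theta)}{\sin(2(2k+1)\theta)}$; feeding in the identities $p_k(1-p_k)=\tfrac14\sin^2(2(2k+1)\theta)$ and $\tfrac14\sin^2(2\theta)=a(1-a)$, the $\sin^2(2(2k+1)\theta)$ factors cancel and the asymptotic variance collapses to $\frac{1}{(2k+1)^2}a(1-a)$, so $\mathrm{AMSE}_{\hat a_0}(a)=\frac{1}{(2k+1)^2}\frac{1}{N_k}a(1-a)$.

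Third, to obtain the oracle count I would use that one execution of $\mathcal Q^k\mathcal A$ invokes $\mathcal A$ or $\mathcal A^\dagger$ a total of $2k+1$ times, so $N_k$ shots cost $N_\text{oracle}=(2k+1)N_k$; setting $\mathrm{AMSE}_{\hat a_0}(a)=\varepsilon^2$ and solving gives $N_k=\frac{1}{(2k+1)^2}\frac{1}{\varepsilon^2}a(1-a)$, whence $N_\text{oracle}=\frac{1}{2k+1}\frac{1}{\varepsilon^2}a(1-a)$, which is Eq.~\eqref{eq:complexity_AAE}. The main obstacle I anticipate is the delta-method step: one must carry the chain rule carefully through the composition $\sin^2\circ(\tfrac{1}{2k+1}\,\cdot\,)\circ\arcsin\circ\sqrt{\,\cdot\,}$, check that $g$ is differentiable with nonvanishing derivative at $p_k$ (which requires $p_k\in(0,1)$ — i.e.\ not sitting on an estimation dead spot — and $a\in(0,1)$ for a nondegenerate Gaussian limit), and then recognize the trigonometric cancellation that makes the amplified factor $\sin^2(2(2k+1)\theta)$ vanish. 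That cancellation is the conceptual heart of the result, since it is exactly what turns the raw $a(1-a)$ variance into the $\frac{1}{(2k+1)^2}a(1-a)$ variance that produces the quadratic speedup; the Bernoulli MLE fact and the closing arithmetic for $N_\text{oracle}$ are routine by comparison.
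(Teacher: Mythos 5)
Your proposal is correct and follows essentially the same route as the paper's proof in Appendix~\ref{appendix:proof_thm123}: MLE invariance for Eq.~\eqref{eq:a0}, the delta method applied to $g(x)=\sin^2\bigl(\tfrac{1}{2k+1}\arcsin\sqrt{x}\bigr)$ with the cancellation of the $p_k(1-p_k)=\tfrac14\sin^2\bigl(2(2k+1)\theta\bigr)$ factor against $[g'(p_k)]^2$, and the count $N_\text{oracle}=(2k+1)N_k$. Your explicit non-degeneracy caveat ($p_k, a\in(0,1)$) is a point the paper leaves implicit, but the argument is otherwise identical.
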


Comparison of the quantum sample complexity of AAE Eq.~\eqref{eq:complexity_AAE} to that of Classical QAE Eq.~\eqref{eq:complexity_ClassicalQAE} then confirms AAE decreases the quantum sample complexity by a factor $\frac{1}{2k+1}$, thereby demonstrating its advantage over Classical QAE and emphasizing that the greater the number of applied $\mathcal{Q}$ operators, the greater the efficiency of the amplified estimator, provided Assumption~\ref{assump:smallprob} holds. Furthermore, in the case where $a=O(\varepsilon)$, the quantum sample complexity becomes $N_\text{oracle}=O\left(\frac{1}{\sqrt{\varepsilon}}\right)$, a conclusion that can be directly derived from Theorem~\ref{thm:AAE} where the order of $k$ is chosen to be $\Theta\left(\sqrt{\varepsilon}\right)$. The complexity $N_\text{oracle}=O\left(\frac{1}{\sqrt{\varepsilon}}\right)$ exceeds Heisenberg-limited scaling in this context but does not refute the Heisenberg limit, as it relies on prior knowledge.

\subsection{IQAE}
\label{sec:IQAE}

We proceed by formulating IQAE in the same statistical framework of AMSE with the additional introduction of the normal approximation, which transparently demonstrates IQAE's estimator for $a$ achieves a quadratic advantage, up to a $\log\log(1/\varepsilon)$ factor, without the need for the small-angle assumption. Note the key insight exploited by IQAE here is that one can obtain the same AMSE and quantum sample complexity as AAE, provided that the quadrant index is known (see Appendix~\ref{appendix:known_quad_idx} for a formal statement), thereby replacing the need for the small-angle assumption.

To begin the statistical derivation of IQAE, we consider that, since $\theta\in\left[ 0,\frac{\pi}{2}\right]$ implies the initial quadrant index must be zero ($l(0)=0$) per the definition of the quadrant index Eq.~\eqref{eq:quadrant index}, we may initiate the AAE process with zero Grover operators. 

Progressively increasing the number $k$ of Grover operators while carefully tracking the evolution of the quadrant index then provides the means to increase the efficiency of the estimator while preserving identifiability, as follows: Formally, let \( 0 = k_0 < k_1 < k_2 < \dots < k_t < \dots < k_{T-1} \) be the sequence of $T$ $k$-values processed in pursuit of the desired accuracy. Beginning with $t=0$, for each $t$, the amplitude is amplified with \( k_t \) Grover operators and \( N_t \) measurements are performed on the corresponding quantum circuit, which constitutes a $k_t$-dependent amplification and measurement process we term {\em stage} $t$.  With the definition \( K_t \coloneqq 2k_t+1 \), which represents the number of accesses to the oracle $\mathcal{A}$ in a single execution of the circuit with $k_t$ applications of the Grover operator, we organize these stages into a sequence \( \{(K_t, N_t):t=0,1,\ldots,T-1\} \) that we term the \textit{schedule} that is designed to ensure the feasibility of tracking the quadrant index at all stages.

To develop a schedule that is both efficient and facilitates statistical analysis, we consider the distinct advantages of schedules previously employed in Maximum-Likelihood QAE (MLQAE) \citep{suzuki2020amplitude} and Iterative QAE (IQAE) \citep{grinko2021iterative}. MLQAE puts forward both linear and exponential \( K \)-schedules with preset values of \( N_t \) for all $t$. Prior numerical results indicate such an exponential schedule with fixed $N_t$ values  achieves a quadratic advantage at the expense of the lack of a theoretical guarantee of its feasibility. IQAE, in contrast, adaptively selects both $K_t$ and $N_t$ such that the quadrant index of each stage $l_t$ is estimated based on measurements carried out in the preceding stage, which continuously refines a confidence interval at confidence level $\alpha$ for $a$ and guarantees that its radius eventually becomes smaller than a prescribed accuracy threshold $\varepsilon$. A rigorous theoretical analysis indicates IQAE achieves optimal complexity up to a multiplicative $\log\log(1/\varepsilon)$ factor; however, derivation of this complexity upper bound is relatively intricate due to IQAE's fully adaptive algorithmic structure, which requires accounting for many possible algorithmic behaviors. 

Based on these points, we develop two hybrid approaches that combine the analytical advantages of MLQAE's exponential $K$-schedule with IQAE's beneficial confidence intervals. 

\subsubsection{Base-3 Hybrid-Scheduled IQAE\label{sec:Base3}}

We first introduce a hybrid $(K,N)$-schedule for IQAE to combine the benefits of exponential $K$-schedules and confidence intervals: a base-3 exponential $K$-schedule 
\begin{align*}
    K_t = 3^t,\quad t=0,1,\ldots,T-1
\end{align*}
with an adaptive $N$-schedule in which the algorithm incrementally performs batches of $N_{\rm incre}$ measurements (termed \textit{iterations}) until the confidence interval for $\theta$ is narrow enough that \( l(k_{t+1}) \) is determined with a desired degree of high confidence $\alpha_t$. We set $\alpha_t=\alpha/T$ such that, by the union bound, the choice of $\alpha_t$ ensures that the final confidence interval for $a$ achieves the overall confidence level $\alpha$. As derived in Appendix~\ref{appendix:proof_T}, to obtain the desired confidence interval for $a$ with a radius no more than $\varepsilon$, the proposed schedule requires the number of stages to be just
\begin{align}
    \label{eq:T}
    T = \left\lceil \log_3 \left( \frac{\pi}{2} \frac{1}{\varepsilon} \sqrt{a(1-a)}\right) \right\rceil.
\end{align}
To aid analysis, the resulting hybrid schedule admits both ternary-expansion and geometric interpretations: Combined with the formula for the quadrant index Eq.~\eqref{eq:quadrant index}, imposition of the hybrid $(K,N)$-schedule ensures \( l(k_t) \) represents the \( t \)-th digit in the ternary expansion of \( \theta/(\pi/2) \). At stage \( t \), the task of identifying the quadrant index is then equivalent to determining the \( (t+1) \)-th digit in this ternary expansion given its first \( t \) digits. Alternatively, from a geometric perspective, the task of identifying the quadrant index is equivalent to determining the \textit{reference interval} to which the amplified angle \( 3^t\theta \) belongs, where the \textit{reference interval} at stage \( t \) is defined as one of the three equal-angle divisions of the quadrant containing \( 3^t\theta \)
\begin{align*}
    l_t\cdot\frac{\pi}{2} + j \cdot \frac{\pi}{6} + \left[0, \frac{\pi}{6}\right], \quad j=0,1,2,
\end{align*}
where $l_t$ is used as an abbreviation of $l(k_t)$, the quadrant index at stage $t$.

To ascertain the quantum sample complexity, we consider the simplest case in which the confidence interval for the amplified angle is relatively broad at each stage relative to the reference intervals:

\begin{assumption}
    \label{assump:schedule3}
    Let $\varepsilon_t$ be the radius of the confidence interval for $\theta$ obtained at stage $t$. Assume that 
    $$\varepsilon_t \geq \frac{\gamma}{2} \frac{1}{3^{t+1}} \frac{\pi}{2}$$ for some constant $\gamma \in ( 0, 1]$, and for all $t = 0, 1, \ldots, T-2$.
\end{assumption}
The hybrid schedule and normal approximation then lead to the following upper bound for the quantum sample complexity (see detailed proof in Appendix~\ref{appendix:proof_schedule3}):
\begin{theorem}
    \label{thm:schedule3}
    Under Assumption~\ref{assump:schedule3}, the accumulated quantum sample complexity of the proposed base-3 hybrid schedule to obtain a confidence interval for $a$ of radius \(\varepsilon\) with an overall confidence level of \(\alpha\) \((\leq 0.1)\) is upper bounded by
    \begin{align*}
    N_{\text{oracle}} 
    &\leq \frac{6}{\pi} \left( \frac{3}{\gamma^2} + 2 \right) \frac{1}{\varepsilon} \log\left( \frac{2}{\alpha} 
    \left\lceil \log_3 \left( \frac{\pi}{4\varepsilon} \right) \right\rceil \right) \\
    &\quad \times 
    \sqrt{a(1-a)}.
\end{align*}
\end{theorem}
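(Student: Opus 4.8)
The plan is to bound $N_{\text{oracle}}=\sum_{t=0}^{T-1}K_tN_t$ stage by stage and then sum a geometric series, distinguishing two regimes. In the \emph{intermediate} stages $t=0,\dots,T-2$ the algorithm draws batches until the confidence interval for $\theta$ is narrow enough to decide which reference interval contains $3^t\theta$ (equivalently, to identify $l(k_{t+1})$), and Assumption~\ref{assump:schedule3} caps how far below that resolution threshold the realized radius $\varepsilon_t$ may drop. In the \emph{terminal} stage $t=T-1$ there is no further quadrant index to resolve, so $N_{T-1}$ is taken only as large as is needed to make the final interval for $a$ have radius at most $\varepsilon$. Since $\alpha_t=\alpha/T$ is the same at every stage, I write $z\coloneqq z_{\alpha/(2T)}$ for the common normal quantile, and the union bound over the $T$ stages will deliver overall confidence $\alpha$.

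The bridge between radii and counts is the arcsine variance stabilization already used in Theorem~\ref{thm:AAE}. Applying the delta method to $\hat{\phi}_t=\arcsin\sqrt{\bar{X}_t}$, the stage-$t$ interval for the amplified angle $K_t\theta$ has half-width $z/(2\sqrt{N_t})$ — independent of $a$, because the Jacobian of $\arcsin\sqrt{\cdot}$ exactly cancels the binomial standard error $\sqrt{p_{k_t}(1-p_{k_t})/N_t}$, so there is no dead-spot blow-up — and hence the interval for $\theta$ has radius $\varepsilon_t=z/(2K_t\sqrt{N_t})$ with $K_t=3^t$, i.e.\ $N_t=z^2/(4\cdot 3^{2t}\varepsilon_t^2)$; likewise the terminal interval for $a=\sin^2\theta$ has radius $2\sqrt{a(1-a)}\,\varepsilon_{T-1}=z\sqrt{a(1-a)}/(3^{T-1}\sqrt{N_{T-1}})$, consistent with $\mathrm{AMSE}_{\hat{a}_0}$. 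For $t\le T-2$, Assumption~\ref{assump:schedule3} gives $\varepsilon_t\ge\tfrac{\gamma}{2}\,3^{-(t+1)}\tfrac{\pi}{2}$, hence $N_t\le 36z^2/(\gamma^2\pi^2)$, a bound uniform in $t$, so $\sum_{t=0}^{T-2}K_tN_t\le\tfrac{36z^2}{\gamma^2\pi^2}\cdot\tfrac{3^{T-1}-1}{2}\le\tfrac{18z^2}{\gamma^2\pi^2}\,3^{T-1}$. For $t=T-1$, requiring the interval for $a$ to have radius at most $\varepsilon$ forces $N_{T-1}\le z^2a(1-a)/(3^{2(T-1)}\varepsilon^2)$ up to lower-order batching/ceiling terms, so $K_{T-1}N_{T-1}\le z^2a(1-a)/(3^{T-1}\varepsilon^2)$.

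The two pieces are glued with Eq.~\eqref{eq:T}, which is calibrated so that $3^{T-1}\in\big[\tfrac{\pi}{6\varepsilon}\sqrt{a(1-a)},\,\tfrac{\pi}{2\varepsilon}\sqrt{a(1-a)}\big)$: inserting the upper endpoint into the intermediate sum (which increases in $3^{T-1}$) and the lower endpoint into the terminal cost (which decreases in $3^{T-1}$) gives $N_{\text{oracle}}<\tfrac{z^2}{\pi\varepsilon}\sqrt{a(1-a)}\,(9/\gamma^2+6)=\tfrac{3z^2}{\pi}(3/\gamma^2+2)\,\varepsilon^{-1}\sqrt{a(1-a)}$. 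Finally the Gaussian tail bound $\Pr(Z>z_\beta)\le\tfrac12 e^{-z_\beta^2/2}$ yields $z^2\le 2\log(T/\alpha)$, and since $\sqrt{a(1-a)}\le\tfrac12$ implies $T\le\lceil\log_3(\pi/(4\varepsilon))\rceil$, we obtain $z^2\le 2\log\!\big(\tfrac{2}{\alpha}\lceil\log_3(\pi/(4\varepsilon))\rceil\big)$; substituting produces the stated bound. The hypothesis $\alpha\le 0.1$ is used only to keep the per-stage levels $\alpha_t$ small enough that the normal intervals are valid and the quantile estimate is clean.

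I expect the main obstacle to be twofold. First, establishing the uniform-in-$a$ radius $\varepsilon_t=z/(2K_t\sqrt{N_t})$ rigorously within the normal-approximation framework: once $l(k_t)$ is known the in-quadrant angle is an unambiguous smooth function of $p_{k_t}$ on the interior, so the delta method applies there, and one must check that the variance-stabilizing transform genuinely removes the $p_{k_t}(1-p_{k_t})\to 0$ degradation near the quadrant boundaries rather than merely hiding it. Second, verifying that Eq.~\eqref{eq:T} pinches $3^{T-1}$ in the displayed interval — this is exactly what makes the intermediate geometric sum (controlled by the \emph{upper} bound on $3^{T-1}$) and the terminal cost (controlled by the \emph{lower} bound on $3^{T-1}$) recombine into the single coefficient $3/\gamma^2+2$. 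By contrast, relating the adaptive batch-stopping rule at each intermediate stage to Assumption~\ref{assump:schedule3}, and executing the union bound over the $T$ stages, is comparatively routine bookkeeping.
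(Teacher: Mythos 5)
Your proposal is correct and follows essentially the same route as the paper's proof: the same split into intermediate stages (bounded via Assumption~\ref{assump:schedule3} and a geometric sum) plus a terminal stage (bounded via the required final radius for $a$), the same pinching of $3^{T-1}$ between $\frac{\pi}{6\varepsilon}\sqrt{a(1-a)}$ and $\frac{\pi}{2\varepsilon}\sqrt{a(1-a)}$ from Eq.~\eqref{eq:T}, and the same normal-quantile bound (you invoke the Gaussian tail bound directly where the paper cites the Mills ratio, but the resulting estimate is the same). All constants check out against the paper's derivation in Appendix~\ref{appendix:proof_schedule3}.
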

The proposed base-3 upper bound thereby achieves the same order quantum sample complexity $\mathcal{O}\left( \frac{1}{\varepsilon}\log{\log{\frac{1}{\varepsilon}}} \right)$ as that of standard-scheduled IQAE \cite{grinko2021iterative}, and thus achieves the optimal order up to a double logarithmic factor. Note the use of the normal approximation in this derivation provides the conceptual and mathematical basis for the creation of Normal-BIQAE, introduced in Sec.~\ref{sec:biqae_normal}.

\subsubsection{Base-3/Base-5 Hybrid-Scheduled IQAE\label{sec:Base3Base5}}

To observe the same $\varepsilon$-dependence of the upper bound of the sample complexity with hybrid-scheduled IQAE in extreme cases in which Assumption~\ref{assump:schedule3}  does not hold, we introduce a second hybrid $(K,N)$-schedule for IQAE that includes another base in its $K$-schedule, as follows: 

In base 5, the reference intervals are
\begin{align*}
    l_t\cdot\frac{\pi}{2} + j \cdot \frac{\pi}{10} + \left[0, \frac{\pi}{10}\right], \quad j=0,1,\ldots,4,
\end{align*}
Simultaneous monitoring of both sets of reference intervals for base 3 and base 5 to determine if either set
completely contains the confidence interval therefore guarantees that
\begin{align}
 \label{eq:epsilont35}
 \varepsilon_t \geq \frac{1}{2}\frac{1}{K_{t}}\frac{1}{15}\frac{\pi}{2}, \quad\text{for $t = 0, 1, \ldots, T-2$},
\end{align}
which reflects the fact that, where the amplified angle falls close to the boundaries of the reference intervals defined for base 3 and thus violates Assumption~\ref{assump:schedule3}, the amplified angle nonetheless rests sufficiently far from the reference interval boundaries in base 5 to avoid entrapment. 

Replacement of Assumption~\ref{assump:schedule3} with Eq.~(\ref{eq:epsilont35}) then yields the accumulated complexity bound for the base-3/base-5 schedule of
\begin{align}
    \label{bound:schedule35}
    N_{\text{oracle}}\leq \frac{227}{\pi} \frac{1}{\varepsilon} 
    \log\left( \frac{2}{\alpha} 
    \left\lceil \log_3 \left( \frac{\pi}{4\varepsilon} \right) \right\rceil \right) \sqrt{a(1-a)},
\end{align}
as formally shown in Appendix~\ref{appendix:35_IQAE}.

The resulting bound thus retains the same order upper bound as base-3 hybrid-scheduled IQAE in Theorem~\ref{thm:schedule3}, and thus the same order as in  standard-scheduled IQAE \cite{grinko2021iterative}, with a difference only in the constant factor.

\subsubsection{IQAE Workflow}

We execute IQAE  according to the workflow depicted in Fig.~\ref{fig:schematic_iqae}, which forms the structural scaffolding for BIQAE. Repeated sets of measurements of the quantum circuit are fed into a classical \texttt{ComputeCI} module to compute the confidence interval for $\theta$ based on all measurements at the current stage, where confidence intervals are computed as normal approximated intervals for hybrid-scheduled IQAE or as Chernoff-Hoeffding or Clopper-Pearson confidence intervals for standard-scheduled IQAE (see Appendix~\ref{appendix:eq_intv}). The \texttt{FindNextK} module then determines the following $K\coloneq2k+1$, as outlined in the aforementioned strategy for hybrid-scheduled IQAE or via a linear search over a candidate set of $K$ values for standard-scheduled IQAE. The updated $K$-value is then employed both to update the form of the quantum circuit and to inform the classical \texttt{ComputeCI} module, and the process continues such that the confidence interval for $a$ continuously shrinks until its radius is smaller than the target accuracy $\varepsilon$. 
Note this workflow only permits IQAE to retain memory of the confidence interval determined in the most recent stage, without preference within said confidence interval.

\begin{figure}[htbp]
    \centering
    \begin{subfigure}{\linewidth}
        \centering
        \includegraphics[width=1.1\linewidth]{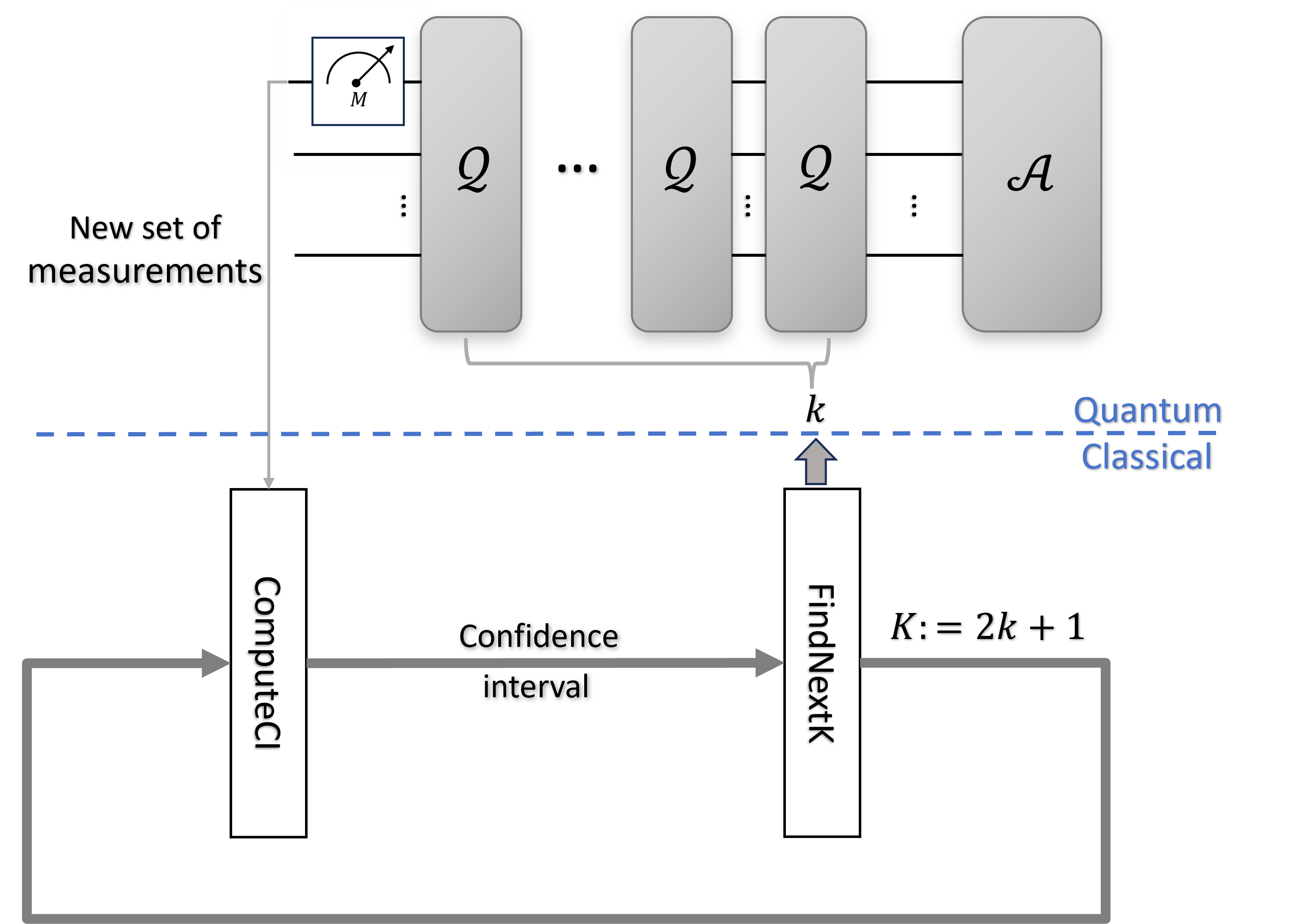}
        \caption{IQAE}
        \label{fig:schematic_iqae}
    \end{subfigure}

    \vspace{3em}

    \begin{subfigure}{\linewidth}
        \centering
        \includegraphics[width=1.1\linewidth]{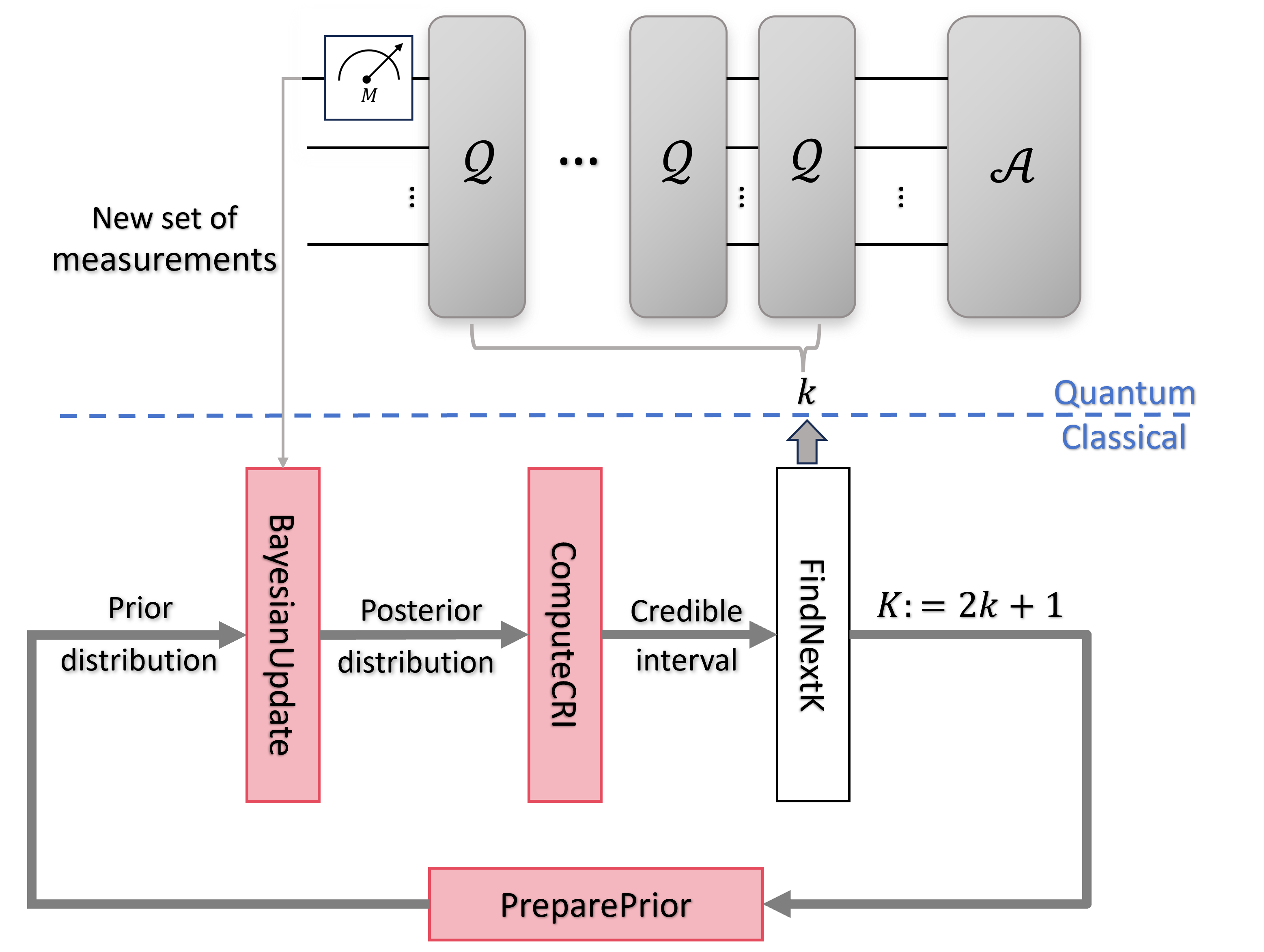}
        \caption{BIQAE}
        \label{fig:schematic_biqae}
    \end{subfigure}

    \caption{Schematic illustrations of IQAE and BIQAE algorithms. (a) In IQAE, measurements of amplified-amplitude quantum circuits (top) are fed to a classical computer (bottom), which both iteratively improves the estimate of the confidence interval of $\theta$ (\texttt{ComputeCI}) and increments the number of Grover operators $k$ of the quantum circuit (\texttt{FindNextK}) until the quantum amplitude is estimated with the desired accuracy. (b) BIQAE introduces three new modules to improve IQAE's efficiency via Bayesian inference (highlighted in red):  \texttt{BayesianUpdate} and \texttt{PreparePrior}, which maintain and refine the prior and posterior distribution of the amplified target probability; and \texttt{ComputeCRI}, the Bayesian substitute to  \texttt{ComputeCI}, which calculates a credible interval for $\theta$ based on the posterior distribution in place of confidence interval.}
    \label{fig:iqae_vs_biqae}
\end{figure}

\section{BIQAE Method}
\label{sec:method}

\subsection{BIQAE Framework}\label{sec:framework}

To harness and refine not only the interval estimate but also the {\em posterior distribution} of the amplitude,
we incorporate Bayesian inference into IQAE as follows:

For each stage~$t$, we consider a prior distribution over $p_{k_t}$ that models understanding of $p_{k_t}$ thus far as
\begin{align}
    \label{eq:prior}
    p_{k_t} \sim \pi_t^{\text{prior}}(p_{k_t}),
\end{align}
where $\pi_t^{\text{prior}}(\cdot)$ denotes a prior distribution whose form remains to be specified. We hold this prior distribution fixed from the conclusion of stage~$t-1$ throughout the remainder stage~$t$.

In each iteration within stage~$t$, we increment $N_t$ by $N_{\rm incre}$ to enhance the accuracy of the estimate. Let $N_t$ denote the total number of measurements gathered at stage~$t$. We map the measurement outcomes to 0 and 1. Then, the distribution of the sample average $\bar{X}_t$ is given by
\begin{align}
    \label{eq:likelihood}
    \bar{X}_t|p_{k_t} \sim \frac{1}{N_{t}}{\rm Bin}\left(N_{t}, p_{k_t}\right).
\end{align}    
We refer to this distribution of the sample average $\bar{X}_t$ given $p_{k_t}$ as the likelihood model and denote the density function by $\mathcal{L}_t\left(\bar{X}_t \middle| p_{k_t}\right)$.

A Bayesian update then integrates the $N_t$ measurements collected at stage~$t$ with the prior distribution to compute a posterior distribution that reflects an updated understanding about $p_{k_t}$. This computation is carried out using Bayes’ rule as
\begin{align}
    \label{eq:posterior}
    \pi_t^{\text{post}}\left(p_{k_t} \middle| \bar{X}_t\right) \propto \pi_t^{\text{prior}}(p_{k_t}) \cdot \mathcal{L}_t\left(\bar{X}_t \middle| p_{k_t}\right),
\end{align}
where $\pi_t^{\text{post}}\left(p_{k_t} \middle| \bar{X}_t\right)$ denotes the posterior distribution.

An interval estimate for the angle $\theta$ is then obtained by first computing an interval estimate for $p_{k_t}$ and then translating this interval to $\theta$. The interval estimate for $p_{k_t}$ is derived from the posterior distribution of $p_{k_t}$ by taking its upper and lower $\alpha_t/2$-quantiles
\begin{align}
\label{eq:ci_pkt}
\left[p_{k_t}^l, p_{k_t}^u\right] 
&= \left[ \left[\Pi_t^{\text{post}}\right]^{-1}\left( \frac{\alpha_t}{2} \right), \right. \nonumber\\
&\left.  \quad\quad\left[\Pi_t^{\text{post}}\right]^{-1}\left( 1 - \frac{\alpha_t}{2} \right) \right],
\end{align}
where $\Pi_t^{\text{post}}$ represents the cumulative distribution function of the posterior distribution. This approach, known as \textit{credible intervals} in Bayesian statistics, replaces the confidence intervals in IQAE, as the latter is incompatible with the posterior distribution.

To obtain the $\theta$ location information required to determine the next $k$, the resulting credible interval for $p_{k_t}$ is then transformed by
\begin{align}
    \label{eq:intv_theta}
    \left[\theta_t^l, \theta_t^u\right]=f_t\left([p_{k_t}^l, p_{k_t}^u]\right),
\end{align}
where $f_t(\cdot)$ denotes the deterministic functional relationship between $\theta$ and $p_{k_t}$
\begin{align*}
    f_t(x) \coloneqq
    \begin{cases} 
        \frac{1}{K_t}\left(\arcsin{(\sqrt{x})} + l_t \frac{\pi}{2}\right) & \text{if \(l_t\) is even}, \\
        \frac{1}{K_t}\left(\arccos{(\sqrt{x})} + l_t  \frac{\pi}{2}\right) & \text{if \(l_t\) is odd}.
    \end{cases}
\end{align*}
To track the evolution of the bases such that $l_t$ is known, let $\left[\theta_{t-1}^l, \theta_{t-1}^u\right]$ denote the interval estimate for $\theta$ from the previous stage. Then, $l_t$ is readily computed as
\begin{align*}
    l_t = \left\lfloor\frac{K_t\theta_{t-1}^l}{\pi/2}\right\rfloor.
\end{align*}

Based on the interval estimate obtained from Eq.~\eqref{eq:intv_theta}, a linear search is performed to find the largest feasible $k$-value as in standard-scheduled IQAE. Denote the $k$-value output as $k_{\rm next}$.  If $k_{\rm next} = k_t$, the $k$-value remains identical, and the algorithm continues with a new iteration. If instead $k_{\rm next} > k_t$, the number of $\mathcal{Q}$ operators in the associated quantum circuit is increased to $k_{t+1}\coloneqq k_{\rm next}$ to improve measurement efficiency. A new stage $t+1$ then begins, which requires preparation of a new prior distribution for $p_{k_{t+1}}$ before resumption of the iteration process.

To prepare the new prior upon identification of a greater feasible $k$-value, 
the posterior distribution for $p_{k_t}$ is transformed into a prior distribution of the target probability in the upcoming stage as
\begin{align}
    \label{eq:p2p}
    p_{k_{t+1}} = \sin^2\left(K_{t+1}\cdot f_t(p_{k_{t}})\right),
\end{align}
namely
\begin{align}
    \label{eq:PreparePrior}
    p_{k_t}\sim \pi_t^{\rm post} 
    \xrightarrow{\sin^2\left(K_{t+1}\cdot f_t(\cdot)\right)} 
    p_{k_{t+1}}\sim \pi_{t+1}^{\rm prior}.
\end{align}
In practice, this transformation is implemented approximately based on the chosen form of prior distribution.

\subsection{BIQAE Workflow\label{sec:biqaeworkflow}}

We execute BIQAE according to the workflow diagram provided in Fig.~\ref{fig:schematic_biqae}, which differs from the workflow of IQAE in (i) the addition of a \texttt{BayesianUpdate} module, which updates the posterior distribution based on previous measurements and the prior distribution according to Bayes' rule, (ii) the insertion of a \texttt{PreparePrior} module, which prepares the prior based on the posterior distribution of the preceding stage, and (iii) the substitution of IQAE's \texttt{ComputeCI} module, which computes frequentist confidence intervals, with BIQAE's \texttt{ComputeCRI} module, which computes  Bayesian credible intervals in BIQAE. Note that whereas IQAE feeds  measurements into the \texttt{ComputeCI} module to compute the confidence interval, BIQAE inputs new measurements into the \texttt{BayesianUpdate} module to refine the posterior distribution of the amplitude $a$ (thereby integrating prior knowledge with new measurements) and subsequently passes the posterior distribution on to \texttt{ComputeCRI} to compute the credible interval. This flow of information ensures the algorithm learns from both the previous interval estimate and the shape of the distribution within this interval to further increase the measurement efficiency of the quantum amplitude estimate.

\subsection{Implementation of BIQAE}
\label{sec:biqaeimpl}

We introduce two approaches to implement the aforementioned Bayesian framework: (i) BIQAE with a normal prior (Normal-BIQAE) for theoretical insights into quantification of the improvement brought about by injection of Bayesian inference into IQAE and (ii) BIQAE with a beta prior (Beta-BIQAE) for superior practical performance. 
    
\subsubsection{Normal-BIQAE}
\label{sec:biqae_normal}

\paragraph{Prior.} To model the probability distributions in BIQAE according to a normal prior distribution, we adopt as the prior Eq.~\eqref{eq:prior} a distribution of the form
\begin{align}
    \label{eq:prior_normal}
    p_{k_t} \sim \mathcal{N}(\mu_{0, t}, \Delta_t^2)
\end{align}
for normal distribution $\mathcal{N}$, prior mean $\mu_{0, t}$, and prior variance $\Delta_t^2$ at stage $t$.

\paragraph{\texttt{BayesianUpdate.}} 
To leverage the associated conjugate normal prior, we approximate the likelihood model in Eq.~\eqref{eq:likelihood} in \texttt{BayesianUpdate} by the normal distribution
\begin{align}
    \label{eq:likelihood_normal}
    \bar{X}_t|p_{k_t} \sim \mathcal{N}\left(p_{k_t}, \frac{1}{N_{t}}p_{k_t}(1-p_{k_t})\right).
\end{align}
Combining Eqs.~\eqref{eq:posterior}, \eqref{eq:prior_normal}, and \eqref{eq:likelihood_normal}, the posterior distribution  follows as 
\begin{align}
    p_{k_t}|\bar{X}_t &\sim \mathcal{N}\left(\mu_{\text{post},t}, \sigma^2_{\text{post},t} \right), \nonumber \\
    \mu_{\text{post},t} &= \frac{\frac{\mu_{0,t}}{\Delta_t^2}+\frac{\bar{X}_t}{\hat{\sigma}_t^2/N_{t}}}{\frac{1}{\Delta_t^2}+\frac{1}{\hat{\sigma}_t^2/N_t}}, \nonumber \\
    \sigma^2_{\text{post},t} &= \left(\frac{1}{\Delta_t^2}+\frac{1}{\hat{\sigma}_t^2/N_{t}}\right)^{-1},
    \label{eq:posterior_normal}
\end{align}
where $\hat{\sigma}_t^2\coloneqq \bar{X}_t(1 - \bar{X}_t)$ is an asymptotic unbiased estimator for the variance.

\paragraph{\texttt{ComputeCRI.}} Since the resulting posterior distribution is normal, the credible interval for $p_{k_t}$ Eq.~\eqref{eq:ci_pkt} in the \texttt{ComputeCRI} module becomes
\begin{align}
    \label{eq:ci_pkt_normal}
    \left[p_{k_t}^l, p_{k_t}^u\right]=\mathrm{trunc}_{[0,1]}(\mu_{\text{post},t} \pm z_{\alpha_t/2} \cdot \sigma_{\text{post},t}),
\end{align}
where \(z_{\alpha_t/2}\) is the \((1 - \alpha_t/2)\)-quantile of the standard normal distribution. 

\paragraph{\texttt{PreparePrior.}}
An approximation is then required to prepare a normal prior from the normal posterior distribution  Eq.~\eqref{eq:posterior_normal}  since, although the posterior distribution is normal and the posterior-to-prior transformation Eq.~(\ref{eq:PreparePrior}) is smooth, the exact prior distribution after transformation is no longer a normal distribution. Assuming $N_t$ is sufficiently large that the normality in the central limit theorem applies, we thus approximate the prior distribution for $p_{k_{t+1}}$ in the \texttt{PreparePrior} module by the normal distribution according to the delta method (see Appendix~\ref{appendix:biqae_normal}), which yields
\begin{align*}
    p_{k_{t+1}} & \sim \mathcal{N}\left( \mu_{0, t+1}, \Delta_{t+1}^2 \right), \\
    \mu_{0, t+1} & = \sin^2\left(K_{t+1} \cdot f_t(\mu_{\text{post},t})\right) ,\\
    \Delta_{t+1}^2 & = \left(\frac{K_{t+1}}{K_t}\right)^2 \frac{\mu_{0, t+1}(1 - \mu_{0, t+1})}{\mu_{\text{post},t}(1 - \mu_{\text{post},t})} \cdot \sigma^2_{\text{post},t}.
\end{align*}

\paragraph{Quantum Sample Complexity.} Where $N_t$ is sufficiently large that the normality in central limit theorem applies, we derive for the resulting Normal-BIQAE implementation an upper bound on the quantum sample complexity at stage $t$ required to obtain an interval estimate for $\theta$ with radius $\varepsilon_t$ of
\begin{align}
    \label{complexity_bound}
    N_{\text{oracle},t} \leq \frac{z^2_{\alpha_t/2}}{4K_t}\left(\frac{1}{\varepsilon_t^2} - \frac{1}{\varepsilon_{t-1}^2} \right),
\end{align}
where $\varepsilon_{t-1}$ is the radius of the interval estimate for $\theta$ obtained at stage $t-1$ (see Appendix~\ref{appendix:complexity_normal}). Compared to the upper bound of Normal-IQAE (see Appendix~\ref{appendix:proof_schedule3})
\begin{align}
    \label{complexity_bound_iqae}
    N_{\text{oracle},t} = \frac{z^2_{\alpha_t/2}}{4K_t}\frac{1}{\varepsilon_t^2},
\end{align}
Normal-BIQAE therefore reduces the complexity by a factor of \( \varepsilon_t^2 / \varepsilon_{t-1}^2 \). 

\subsubsection{Beta-BIQAE}
\label{sec:biqae_beta}

\paragraph{Prior.} 
In Beta-BIQAE, we assume a prior Eq.~\eqref{eq:prior} in the form of a beta distribution
\begin{align}
    \label{eq:prior_beta}
    p_{k_t} \sim {\mathrm {Beta}}(a_{0, t}, b_{0,t}),
\end{align} 
where $a_{0, t}, b_{0,t}$ are the shape parameters. 

\paragraph{\texttt{BayesianUpdate.}} A beta-distributed prior allows us to directly use the binomial distribution Eq.~\eqref{eq:likelihood} to derive a beta-distributed posterior, which avoids the need for normal approximation. Using Eqs.~\eqref{eq:likelihood}, \eqref{eq:posterior}, and \eqref{eq:prior_beta}; the posterior distribution is 
\begin{align}
    p_{k_t}|\bar{X}_t & \sim {\mathrm {Beta}}(a_{\text{post}, t}, b_{\text{post}, t}), \nonumber\\
    a_{\text{post}, t} & = a_{0, t} + N_t\bar{X}_t, \nonumber\\
    b_{\text{post}, t} &= b_{0, t} + N_t\left(1-\bar{X}_{t}\right).
    \label{eq:posterior_beta}
\end{align}

\paragraph{\texttt{ComputeCRI.}} The interval estimate Eq.~\eqref{eq:ci_pkt} in the \texttt{ComputeCRI} module then follows as 
\begin{align*}
\left[p_{k_t}^l, p_{k_t}^u\right] 
&= \left[ \text{BetaCDF}^{-1}\left(\frac{\alpha_t}{2}, a_{\text{post}, t}, b_{\text{post}, t}\right), \right. \\
&\quad \left. \text{BetaCDF}^{-1}\left(1 - \frac{\alpha_t}{2}, a_{\text{post}, t}, b_{\text{post}, t}\right) \right]
\end{align*}
to accommodate the beta-distributed posterior, where $\text{BetaCDF}^{-1}(\cdot,a, b)$ is the inverse of the cumulative density function of the beta distribution with shape parameters $a$ and $b$. 

\paragraph{\texttt{PreparePrior.}}Analogously to Normal-BIQAE, given that the exact prior distribution of $p_{k_{t+1}}$ after the transformation Eq.~\eqref{eq:PreparePrior} is no longer a beta distribution, an appropriate beta distribution is required to approximate the true distribution. However, unlike Normal-BIQAE, Beta-BIQAE does not admit use of the delta method to enable the derivation of a closed-form approximation. We therefore employ a sampling procedure in \texttt{PreparePrior} to obtain an asymptotically optimal beta approximation of the exact prior distribution, where optimality is defined as minimization of the distance (Kullback–Leibler divergence) between the chosen approximation and the exact prior (see Appendix~\ref{appendix:biqae_beta} for a rigorous establishment of this optimality result). 

Specifically, given the posterior distribution derived in Eq.~\eqref{eq:posterior_beta}, we prepare the prior as follows:
\begin{enumerate}
    \item Draw a sufficiently large sample size $R$ (for example, $R=1000$) of independent and identically distributed (i.i.d.) values \(\{y_1, y_2, \ldots, y_R\}\) from the beta posterior in Eq.~\eqref{eq:posterior_beta} to accurately approximate the underlying distribution.
    \item Map each value \(y_i\) to the parameter space for \(p_{k_{t+1}}\) through the transformation described in Eq.~\eqref{eq:PreparePrior} to obtain
    \[
    \tilde{y}_i \coloneqq \sin^2\left(K_{t+1} \cdot f_t(y_i)\right)
    \]
    where \(\{\tilde{y}_1, \tilde{y}_2, \ldots, \tilde{y}_R\}\) is identical to a sample drawn from the exact prior distribution of $p_{k_{t+1}}$.
    \item Treat \(\{\tilde{y}_1, \tilde{y}_2, \ldots, \tilde{y}_R\}\) as if they were drawn from a beta distribution and use the Maximum Likelihood Estimator (MLE) to infer its shape parameters. \item Use the maximum likelihood estimates as the shape parameters $(a_{0, t+1}, b_{0,t+1})$ of the beta distribution Eq.~\eqref{eq:prior_beta}.
\end{enumerate}

The chosen sampling procedure is performed only at transitions between stages, namely, a total of $T-1$ times where $T=\mathcal{O}\left(\log(1/\varepsilon)\right)$ 
and requires less than a second of classical computing time per application, thereby circumventing the high classical cost associated with Monte Carlo techniques that are required where the form of the prior is not specified to allow conjugacy \cite{2024bayesianquantumamplitudeestimation}.

In practice, the per-stage sample size $N_t$ employed in implementations of IQAE and BIQAE is typically quite small (often $N_t<100$). Whereas Normal-BIQAE depends on a normal approximation, such that as $N_t$ decreases  the performance of Normal-BIQAE deteriorates (namely, both the quantum sample complexity and the loss of nominal coverage at level $1-\alpha$ increase, see Appendix~\ref{appendix:beta_vs_normal}); Beta-BIQAE makes no such approximation, such that Beta-BIQAE is capable of handling a broader range of sample sizes, including the practical small sample regime (see Section~\ref{sec:math_simul} and Appendix~\ref{appendix:beta_vs_normal}).

\subsubsection{BIQAE with a Noninformative Prior\label{sec:connections}}

We further recognize that, although not originally derived according to a Bayesian perspective, IQAE constitutes implementation of BIQAE with a noninformative prior, as derived in detail in Appendix~\ref{appendix:eq_intv}. Precisely, IQAE with Chernoff-Hoeffding confidence intervals (termed Normal-IQAE or IQAE-CH) corresponds to BIQAE with a noninformative normal prior, and IQAE with Clopper-Pearson confidence intervals (termed Beta-IQAE or IQAE-CP)  corresponds to BIQAE with a noninformative beta prior.

\begin{figure*} 
    \centering
    \includegraphics[width=\textwidth]{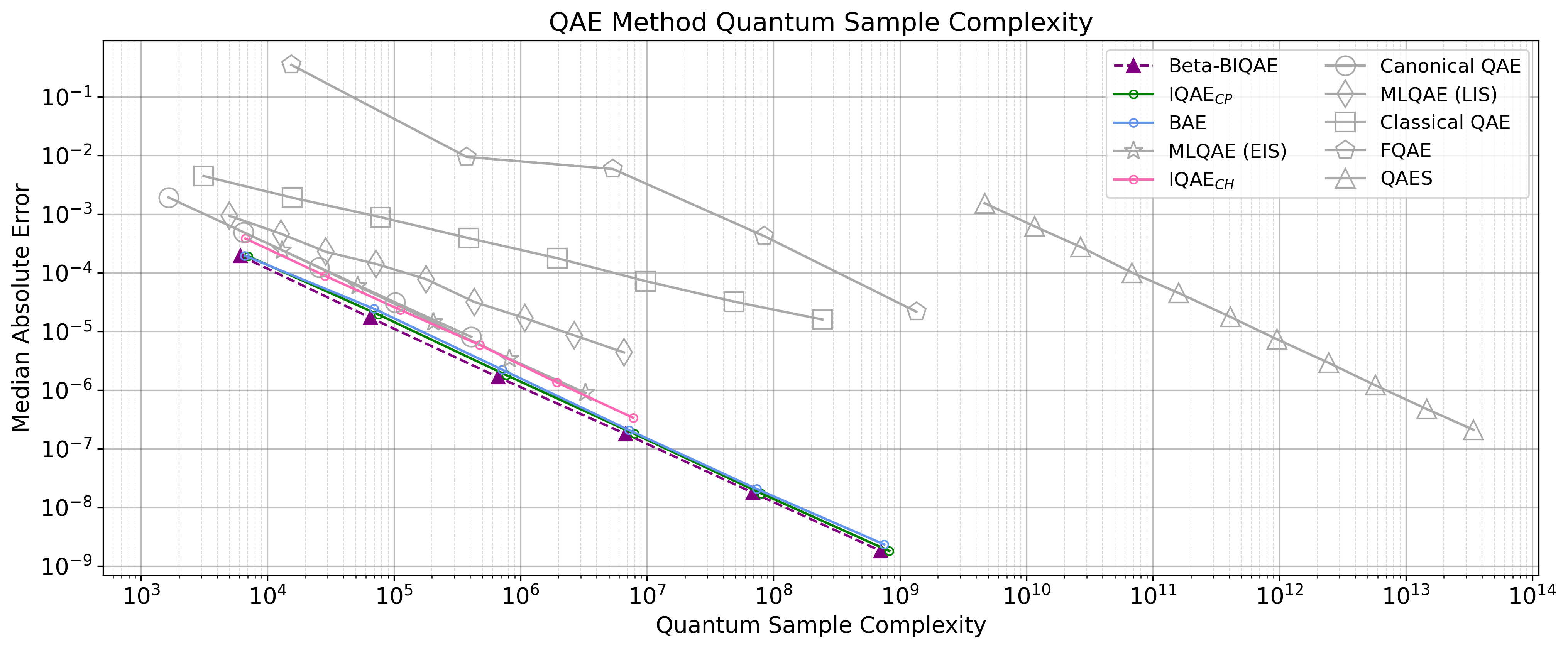}
    \caption{Quantum sample complexity comparison of BIQAE to benchmark QAE approaches for quantum amplitude $a=0.5$ over 200 repetitions. Legend is ordered from highest to lowest performance, with Bayesian and IQAE methods emphasized in color (BIQAE, purple triangles; IQAE-CP, green circles; BAE, blue circles; and IQAE-CH, pink circles).}
    \label{fig:compare_all}
\end{figure*}

\subsection{Computational Methods}

To evaluate the broad applicability and measurement efficiency of BIQAE, we conduct numerical experiments on the ability of BIQAE to estimate both amplitudes of specified quantum states and ground-state energies of diatomic and triatomic molecules. We have made code for BIQAE and all computational tests readily accessible via Github at \url{http://www.github.com/Kirin0570/BIQAE} \cite{qilin2025biqae}. For benchmark methods, we employ the open-source code \cite{bae_repo} that accompanies ref.~\citep{2024bayesianquantumamplitudeestimation} with  minor modifications for output formatting and storage. 

To examine the performance of BIQAE for quantum state amplitude estimation relative to existing QAE techniques, we compare the quantum sample complexity of Beta-BIQAE, Canonical QAE \cite{nielsen2010quantum}, Classical QAE \cite{brassard2000quantum}, IQAE \cite{grinko2021iterative}, BAE \cite{2024bayesianquantumamplitudeestimation}, MLQAE \cite{suzuki2020amplitude}, FQAE (Faster Quantum Amplitude Estimation) \cite{nakaji2020faster}, and QAES (QAE, Simplified) \cite{aaronson2020quantum}. Experiments are performed with the standardized quantum circuit design presented in Appendix~\ref{appendix:circ_math_simul} at a variety of target accuracies, with each experiment independently repeated with a stated number of \textit{repetitions} to compute summary statistics of the performance metrics. Given that BIQAE, IQAE, FQAE and Classical QAE are able to produce interval estimates whereas BAE, MLQAE, and Canonical QAE provide only point estimates; we simulate all approaches based on confidence intervals with a confidence level of \(\alpha=0.05\) during execution of the algorithm and subsequently compare all QAE methods as a whole from a point estimation perspective in which we reduce the confidence interval to a point estimate by taking its center. To further aid direct comparison with literature results for IQAE \cite{grinko2021iterative}, (i) we adopt as the weight of each shot of the quantum circuit the number of the applications of the Grover operator (\(k_t\))  instead of the number of accesses to the oracle $\mathcal{A}$ (\(K_t =2k_t+1\)) 
and (ii) in cases where the quantum amplitude is prespecified, we consider a value of $a=0.5$---an amplitude that is simultaneously expected to yield the worst-case sample complexity because all derived bounds exhibit dependence on $a(1-a)$ either directly or through its square root. A hyperparameter of incremental shot batch size $N_\text{incre}=10$ is employed for all relevant experiments with IQAE and BIQAE.

To assess the accuracy and efficiency of BIQAE for molecular ground-state energy approximation, we employ BIQAE to approximate the energy of the ground-state wavefunction of four molecules --
H\textsubscript{2}, LiH, HF, and BeH\textsubscript{2}. For each molecule, the electronic Hamiltonian is generated at the Hartree-Fock level of electronic structure theory in the Born-Oppenheimer approximation with frozen core orbitals, Jordan-Wigner mapping, and qubit tapering \cite{jordan1928uber,bravyi2017tapering,ryabinkin2018constrained,sun2018pyscf,mcardle2020quantum,qiskit2024}. The equilibrium bond length is assumed unless otherwise stated. The ground-state wavefunction is then prepared using the variational quantum eigensolver, as detailed in  Appendix~\ref{appendix:vqe_exps}, prior to QAE for expectation value estimation. Where the accuracy is fixed, we compare the quantum sample complexity of Classical QAE, IQAE-CP, and Beta-BIQAE. Where the quantum sample complexity is  fixed, we compare Classical QAE and Beta-BIQAE using the same number of shots in Classical QAE as that needed to reach the desired target accuracy for each Pauli string $\varepsilon=10^{-3}$ with Beta-BIQAE. We further restrict the target accuracy to $\varepsilon=10^{-4}$ for HF given that the error entailed by energy estimation grows as the number and coefficient magnitude of Pauli terms increases. All molecular results represent averages over 200 repetitions.

\section{Numerical Simulations\label{sec:results}}

\subsection{Quantum Sample Complexity Scaling Analysis}\label{sec:math_simul}

Quantum sample complexity analysis indicates Beta-BIQAE outperforms all other QAE methods considered for estimation of the benchmark quantum amplitude $a=0.5$. As shown in Fig.~\ref{fig:compare_all}, Beta-BIQAE requires a lower quantum sample complexity to estimate the quantum amplitude than BAE, IQAE (-CP and -CH), Canonical QAE, MLQAE (EIS [Exponentially Incremental Schedule] and LIS [Linearly Incremental Schedule]), Classical QAE, FQAE, and QAES; an advantage that persists over five orders of magnitude in the specified median absolute error. Linear regression analysis detailed in Appendix~\ref{ext_res:scaling_analysis} further indicates Beta-BIQAE's scaling (slope $-1.0088$, intercept $0.0211$, fit $R^2=0.9999$) robustly offers the expected quadratic improvement over Classical QAE's scaling (slope $-1.9821$, intercept $-1.1693$, fit $R^2=0.996$). Beta-BIQAE's scaling also exhibits an intercept that is a multiplicative factor of five lower than the leading alternative, IQAE-CP.
\footnote{Note that whereas Beta-BIQAE outperforms all other QAE methods considered in terms of quantum sample complexity, Normal-BIQAE outperforms all other QAE methods considered with the exception of IQAE-CP. The more favorable performance of Beta- over Normal-BIQAE is consistent with the fact that the normal approximation (and the accuracy of variance estimation) that underlies Normal-BIQAE may become inaccurate for small sample sizes.}

The superior scaling of Beta-BIQAE relative to IQAE-CP is found to be robust to different target accuracies. As shown in Fig.~\ref{fig:loglog_plot_45}, Beta-BIQAE improves upon the measurement cost of IQAE-CP by a near-constant factor of approximately 14\% across six orders of magnitude in the target accuracy (see also interval radius and coverage rate comparison of Beta-BIQAE, IQAE-CP, and BAE in Appendix~\ref{appendix:biqae_vs_bae}). This constant factor improvement is consistent with the predicted upper bound on the sample complexity Eq.~\eqref{complexity_bound} for Normal-BIQAE (in the absence of an analytical result for Beta-BIQAE), as the average interval radius ratio $\varepsilon_{t-1}/\varepsilon_{t}$ observed for the majority of the Beta-BIQAE process remains approximately $2.5$ in Fig.~\ref{fig:radius_ratio}, which corresponds to a 16\% improvement.\footnote{We note that the observed interval radius ratio remains relatively unchanged until the algorithm's final stages, in which the ratio falls. The change observed in the final stages suggests that initial progression with a linear search strategy fluctuates about an exponential $K$-schedule (in which a base-$b$ hybrid schedule implies an accuracy improvement rate per stage and concomitant radius ratio of $b$) and final stages satisfy the stopping criterion (namely, the final stage of each repetition is terminated early before the interval estimate becomes sufficiently small to allow advancement to the next larger $K$-value, which contributes a smaller ratio value to the average).} 

The double-digit percentage improvement of Beta-BIQAE relative to IQAE-CP is also found to persist across the full range of possible quantum amplitudes $a\in[0,1]$, as evidenced by Fig.~\ref{fig:all_angle_plot_epsilon_1e-08} for a target accuracy of $\varepsilon=10^{-8}$. A consistent improvement in the quantum sample complexity of Beta-BIQAE relative to IQAE-CP of approximately 12\% to 15\% is visible for all $a$ values considered. Additionally, close fitting of the quantum sample complexity as a function of $a$ to the model $N_\text{oracle} = \beta\sqrt{a(1-a)}$ (where $\beta$ is a constant depending on $\varepsilon$) reveals the observed complexities are consistent with the theoretical expectation detailed in Eqs.~(\ref{bound:schedule35}) and~(\ref{complexity_bound}) 
that the complexities of IQAE and BIQAE follow the decomposition
    \begin{align}\label{eq:sharedscaling}
        N_\text{oracle} = C \cdot\frac{1}{\varepsilon} (\log\frac{1}{\alpha} + \log\log\frac{1}{\varepsilon})\cdot\sqrt{a(1-a)},
    \end{align}
where $C$ is a constant independent of $a$ that is lower in BIQAE than IQAE and where the effect of differing input amplitude values is entirely captured in the multiplicative factor $\sqrt{a(1-a)}$. In Appendix~\ref{appendix:biqae_vs_iqae}, we demonstrate that this favorable scaling of Beta-BIQAE relative to IQAE-CP for arbitrary $a$ values holds over six orders of magnitude $\varepsilon\in[10^{-2},10^{-7}]$.

    \begin{figure} 
        \includegraphics[width=0.5\textwidth]{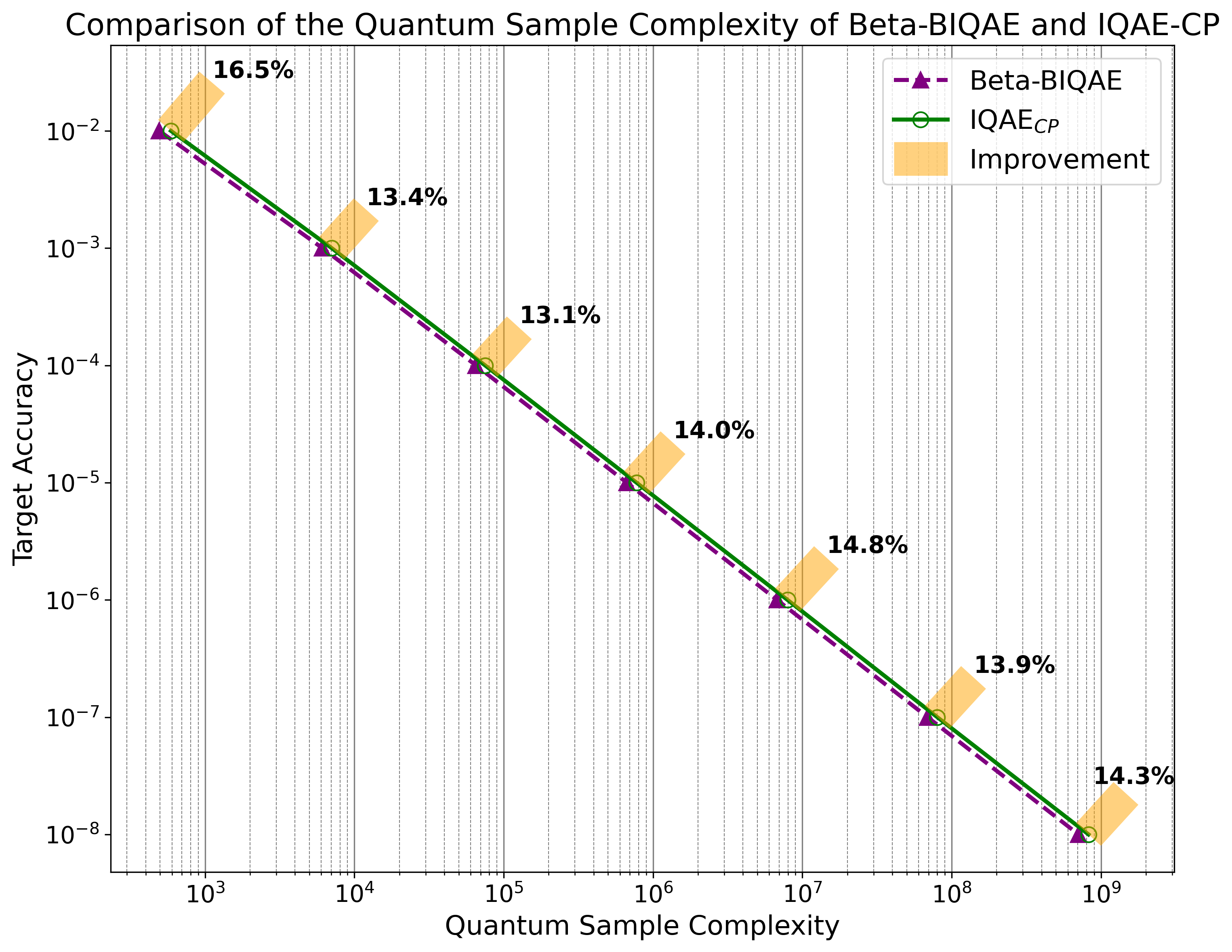}
        \caption{Quantum sample complexity comparison between Beta-BIQAE (dashed purple line with triangles) and IQAE-CP (solid green line with circles) across six orders of magnitude of target accuracy $\varepsilon$ (percentage improvements of Beta-BIQAE over IQAE-CP accentuated in text and as yellow bars).}
        \label{fig:loglog_plot_45}
    \end{figure}
    
    \begin{figure}
        \begin{center}
            \includegraphics[width=0.48\textwidth]{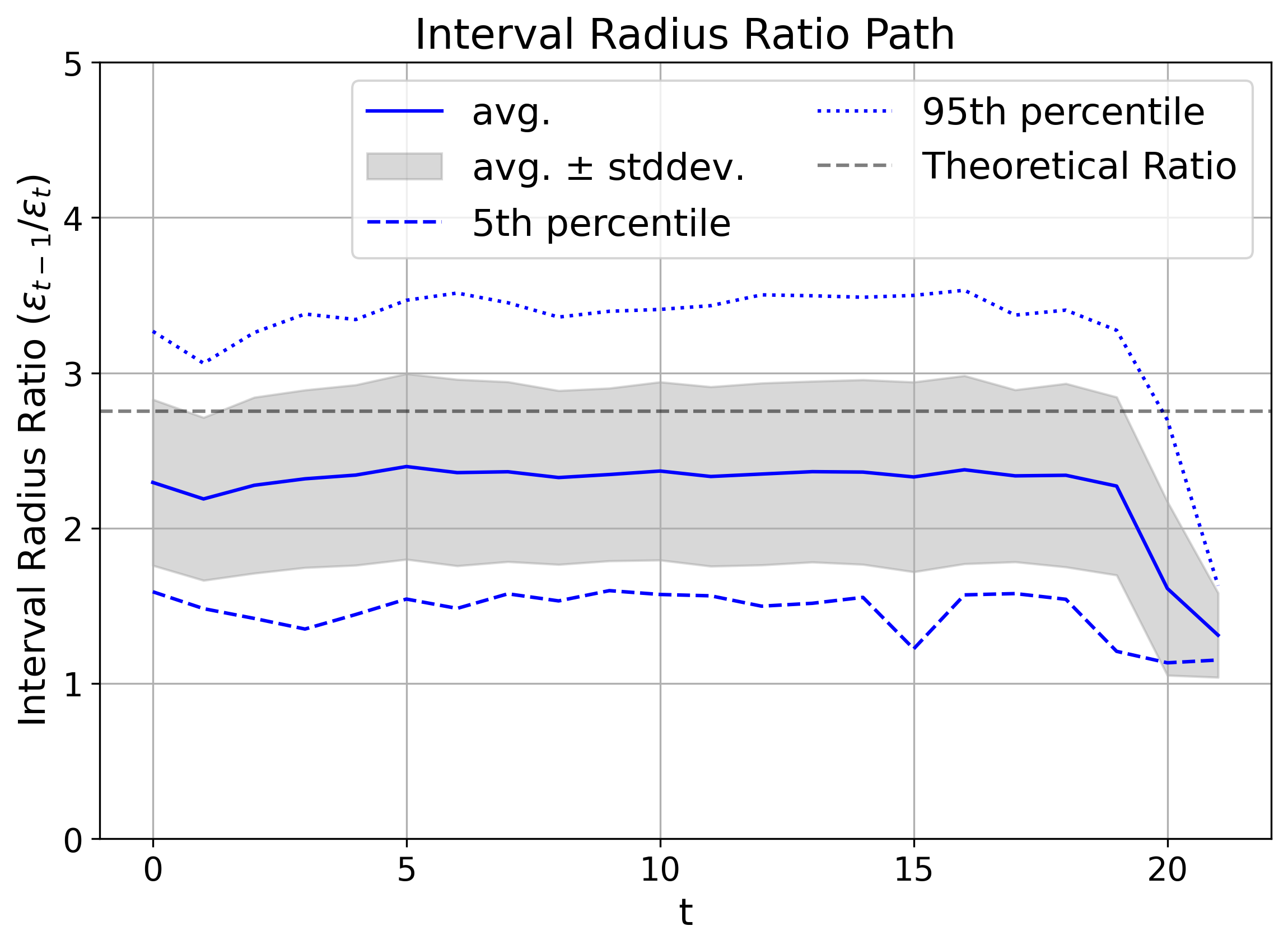}
        \end{center}
        \caption{Radius ratio progression across stages in Beta-BIQAE (average value, solid blue line; values within one standard deviation, shaded gray area; $5^\text{th}$ percentile, dashed blue line; $95^\text{th}$ percentile, dotted blue line). Results are aggregated from 1000 repetitions with a target accuracy of $\varepsilon=10^{-8}$. A reference line (dashed gray line) is included to mark the theoretically expected ratio from Eq.~\eqref{complexity_bound} corresponding to the observed 14.3\% improvement in the quantum sample complexity analysis.}
        \label{fig:radius_ratio}
    \end{figure}
    
    \begin{figure}
        \centering
        \includegraphics[width=0.5\textwidth]{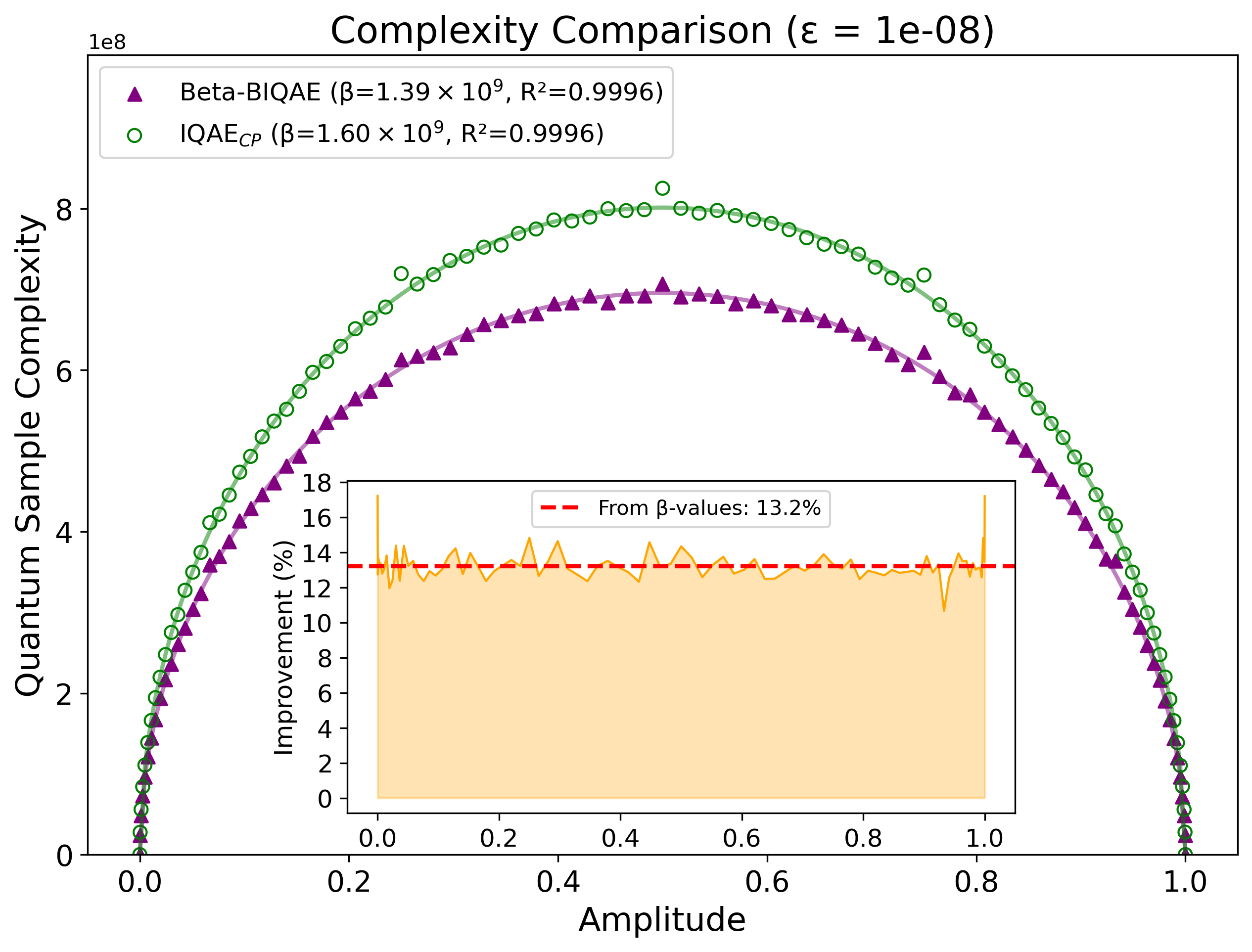}
        \caption{Comparison of the quantum sample complexities of Beta-BIQAE (purple triangles with fit solid line) and IQAE-CP (green circles with fit solid line) across quantum amplitudes $a\in[0,1]$ at a fixed target accuracy $\varepsilon=10^{-8}$, averaged over 1000 repetitions. Insets highlight Beta-BIQAE's consistent percentage improvement over IQAE-CP (yellow solid line values with red dotted line from fitted $\beta$).}
        \label{fig:all_angle_plot_epsilon_1e-08}
    \end{figure}


\color{black}
\begin{figure*}[t]
    \centering
    \begin{tabular}{cc}  
        \includegraphics[width=0.5\textwidth]{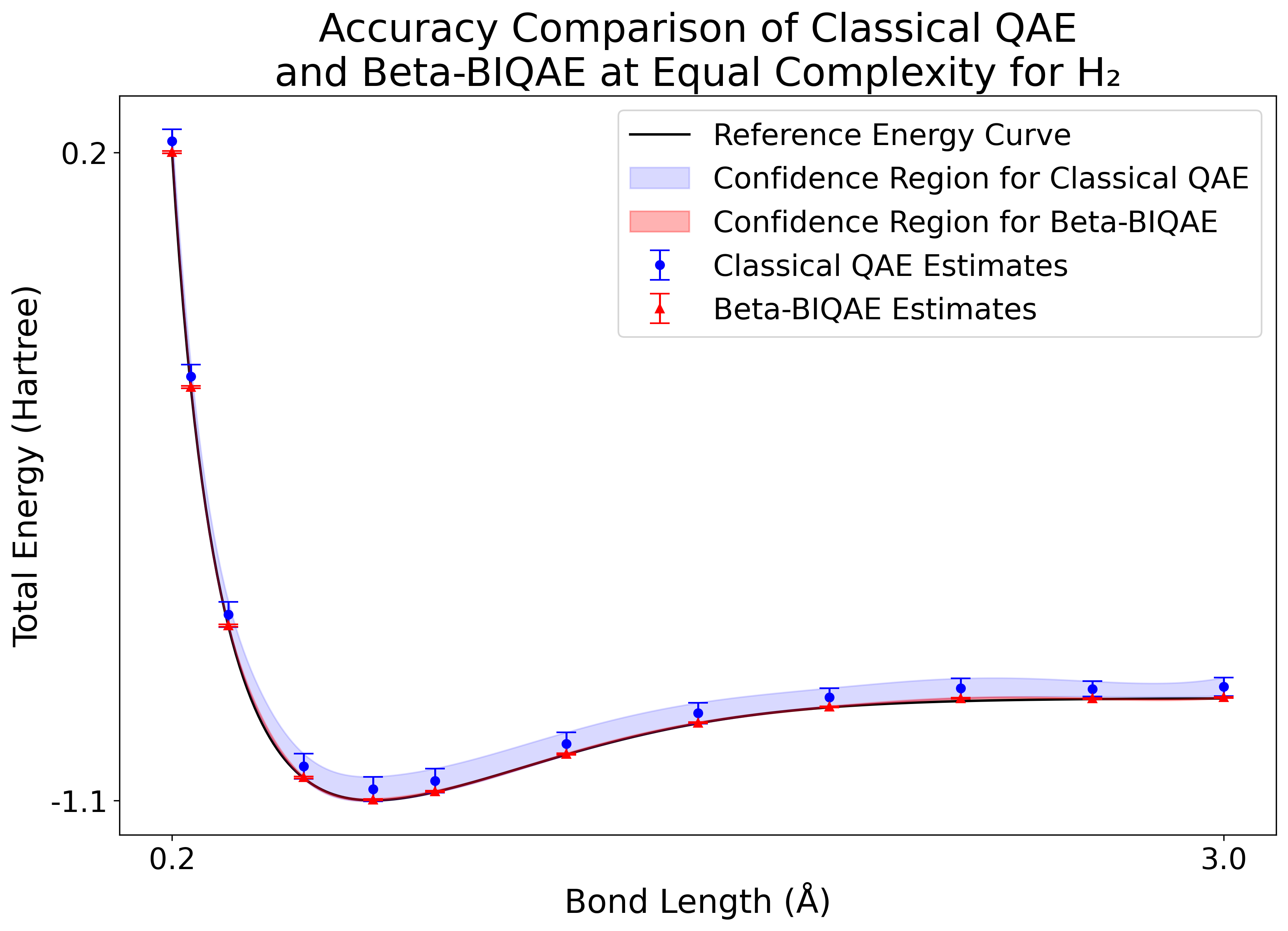} &  
        \includegraphics[width=0.5\textwidth]{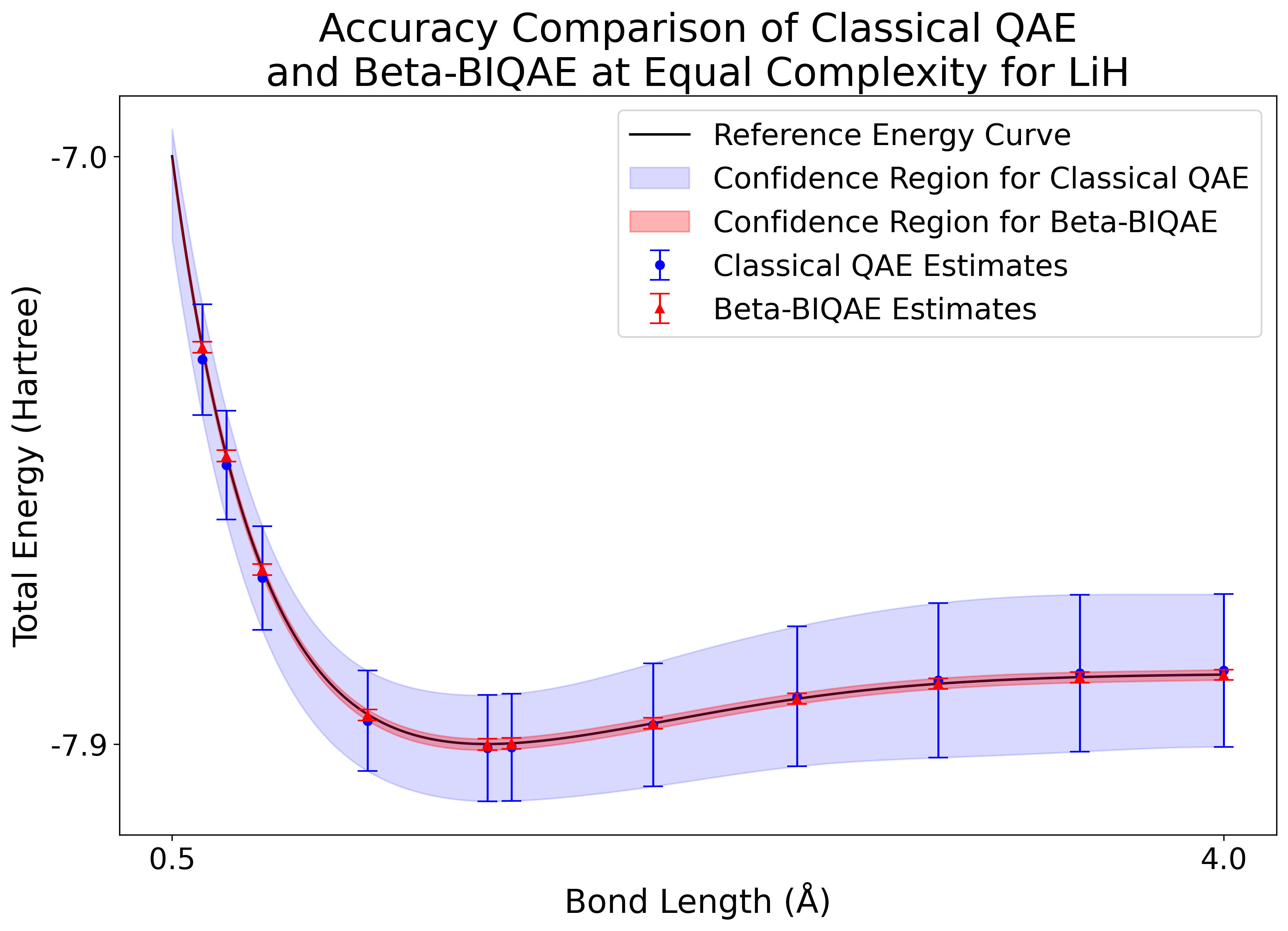} \\

        \includegraphics[width=0.5\textwidth]{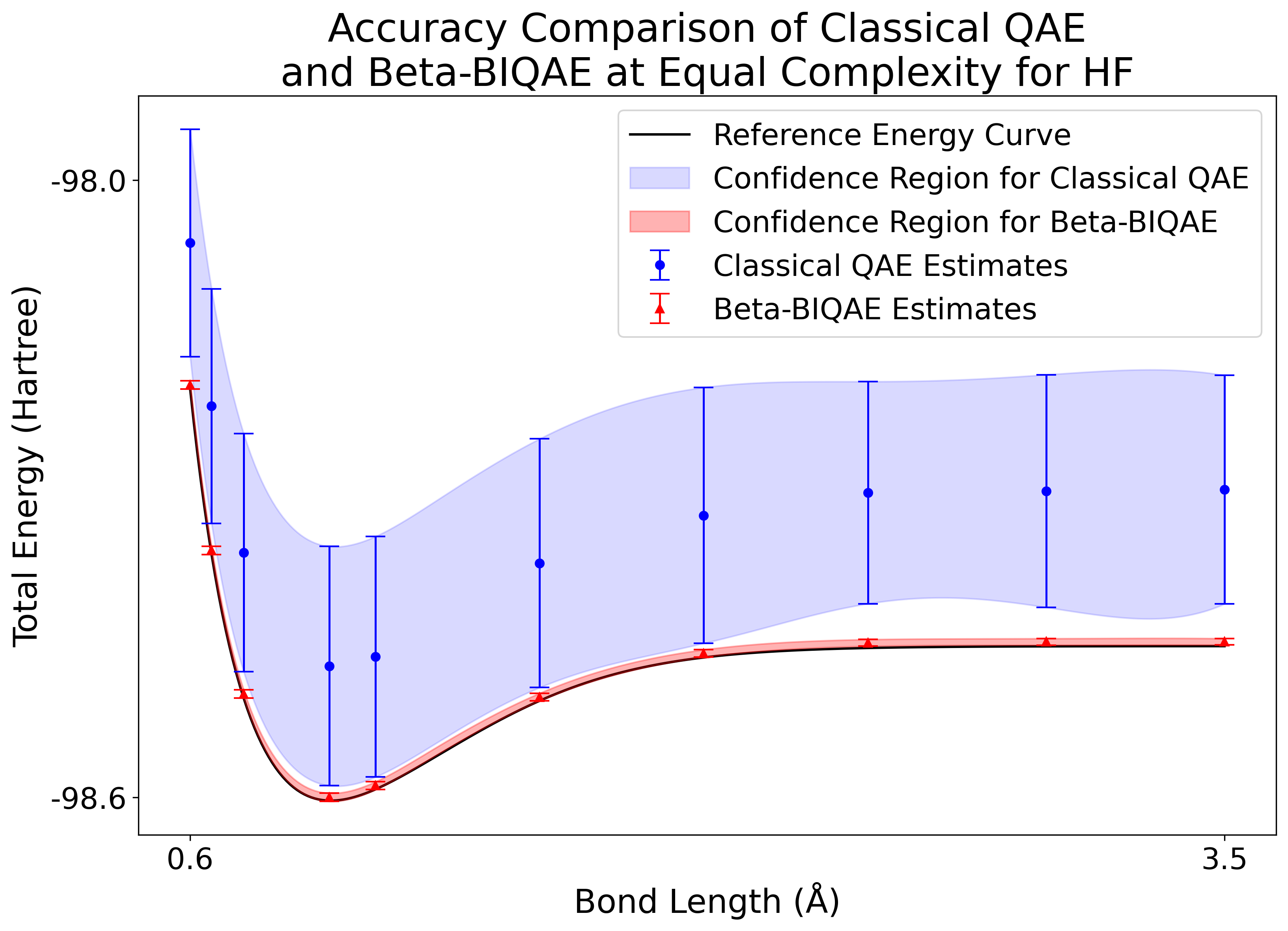} &  
        \includegraphics[width=0.5\textwidth]{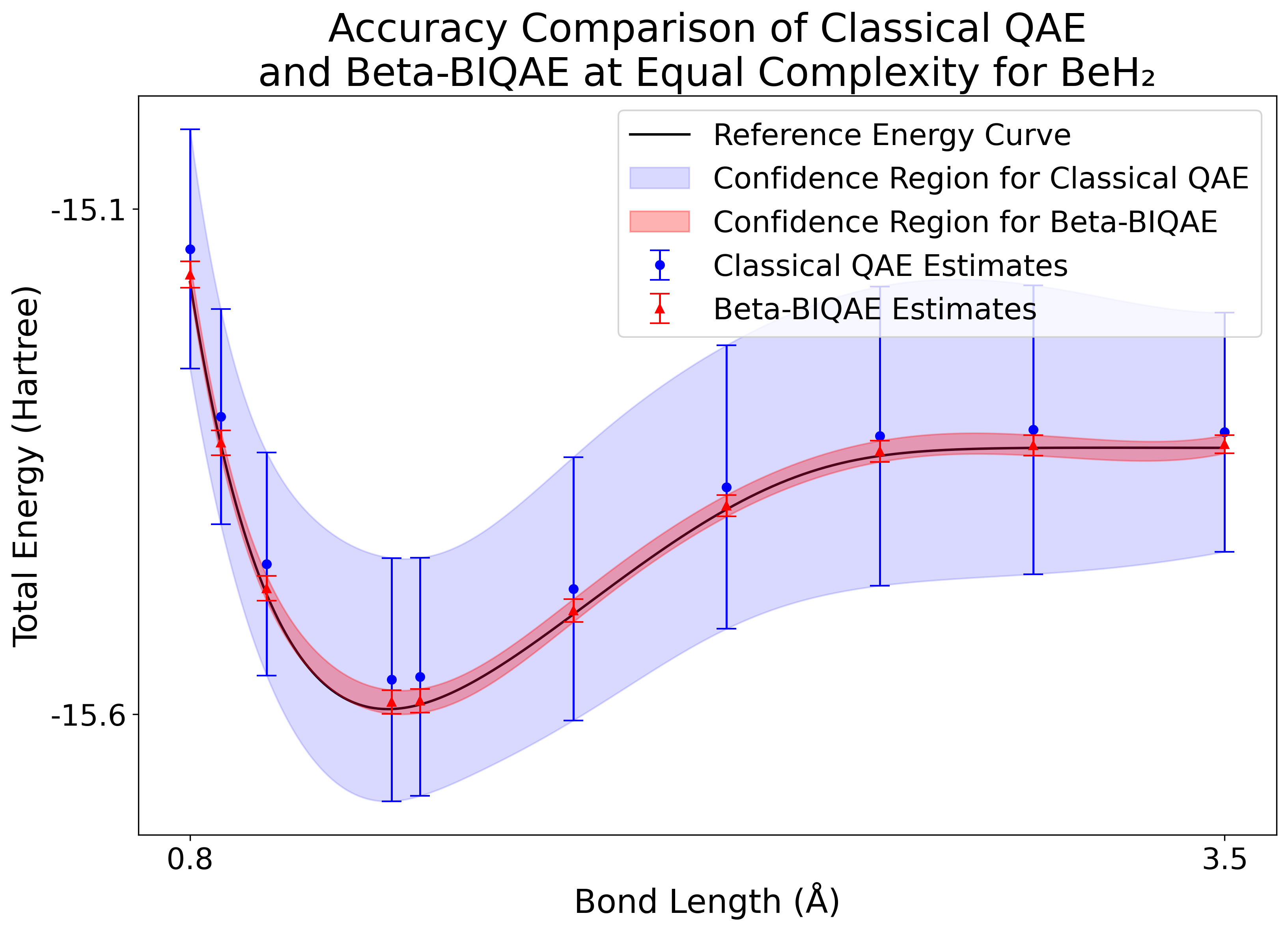}
    \end{tabular}

    \caption{Comparison of Beta-BIQAE (red) and Classical QAE (blue) accuracy for H\textsubscript{2}, LiH, HF, and BeH\textsubscript{2} ground-state energies at equal quantum sample complexity for Classical QAE and Beta-BIQAE. Error bars represent interval estimates, points indicate interval centers, and shaded areas depict the confidence region, with the energy curve from exact diagonalization of the Hamiltonian included for reference (black). Results between sampled bond lengths are interpolated with quartic splines. }
    \label{fig:compare_acc}
\end{figure*}

\begin{figure*}[t]
    \centering
    \begin{tabular}{cc}  
        \includegraphics[width=0.5\textwidth]{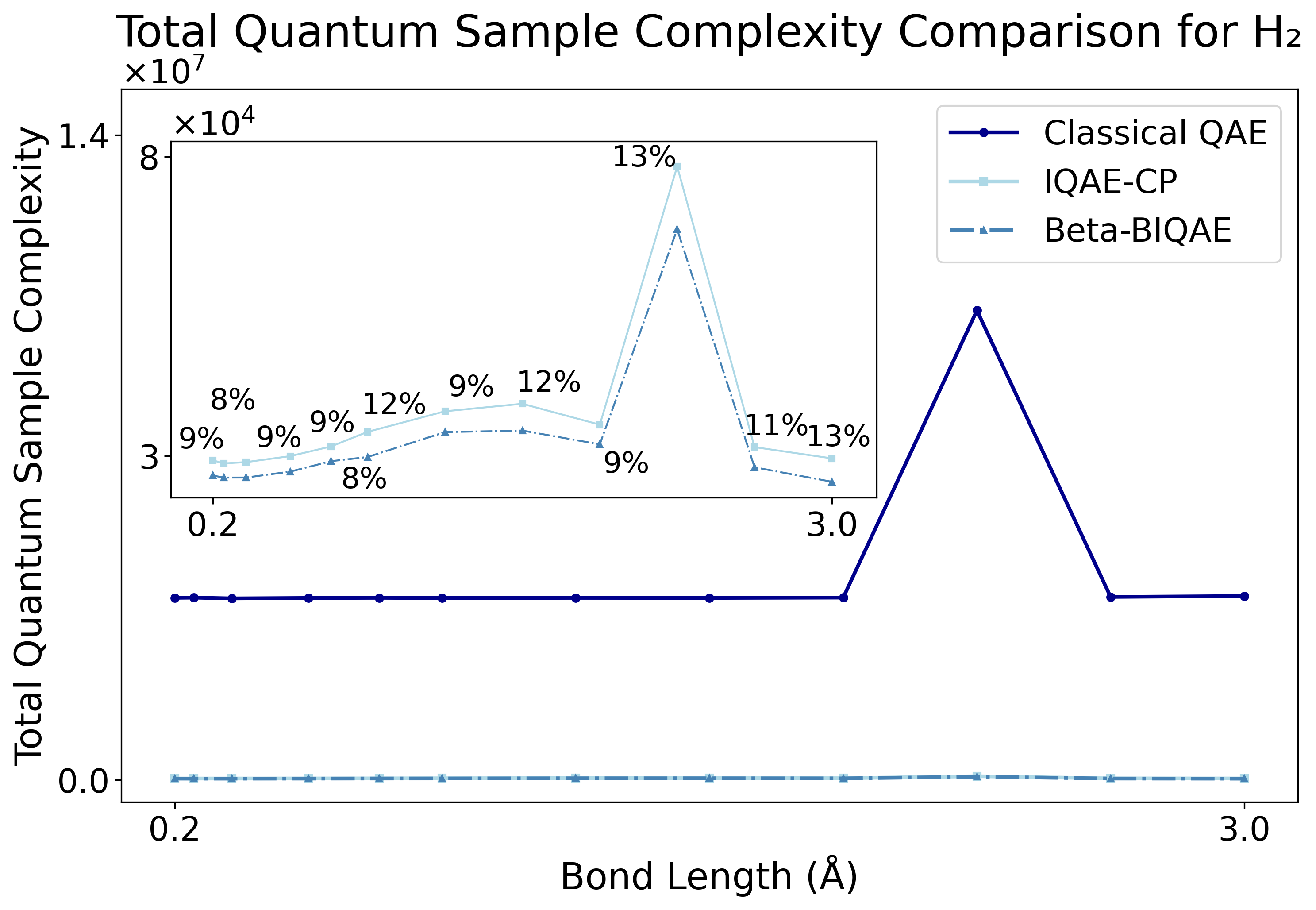} &  
        \includegraphics[width=0.5\textwidth]{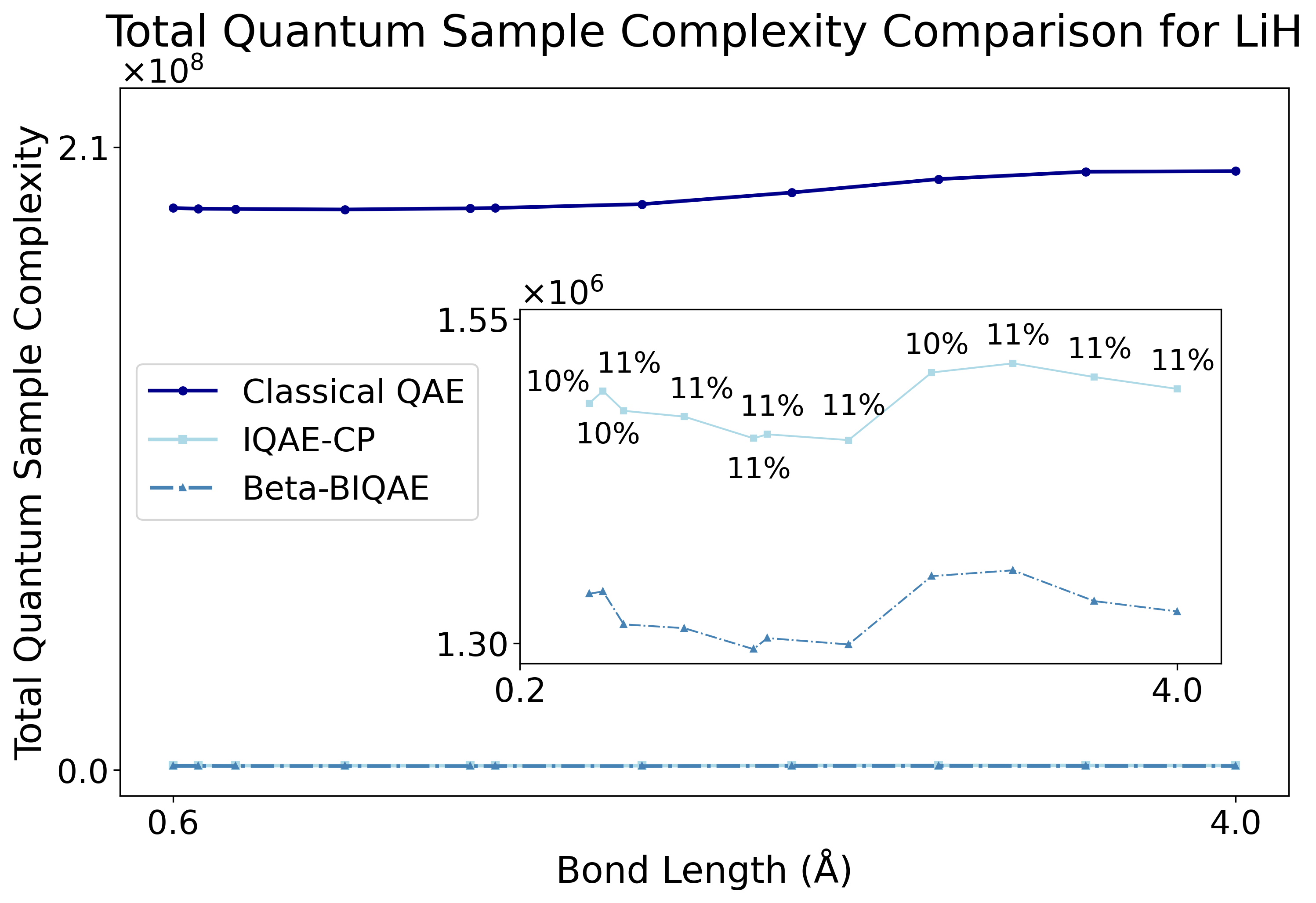} \\

        \includegraphics[width=0.5\textwidth]{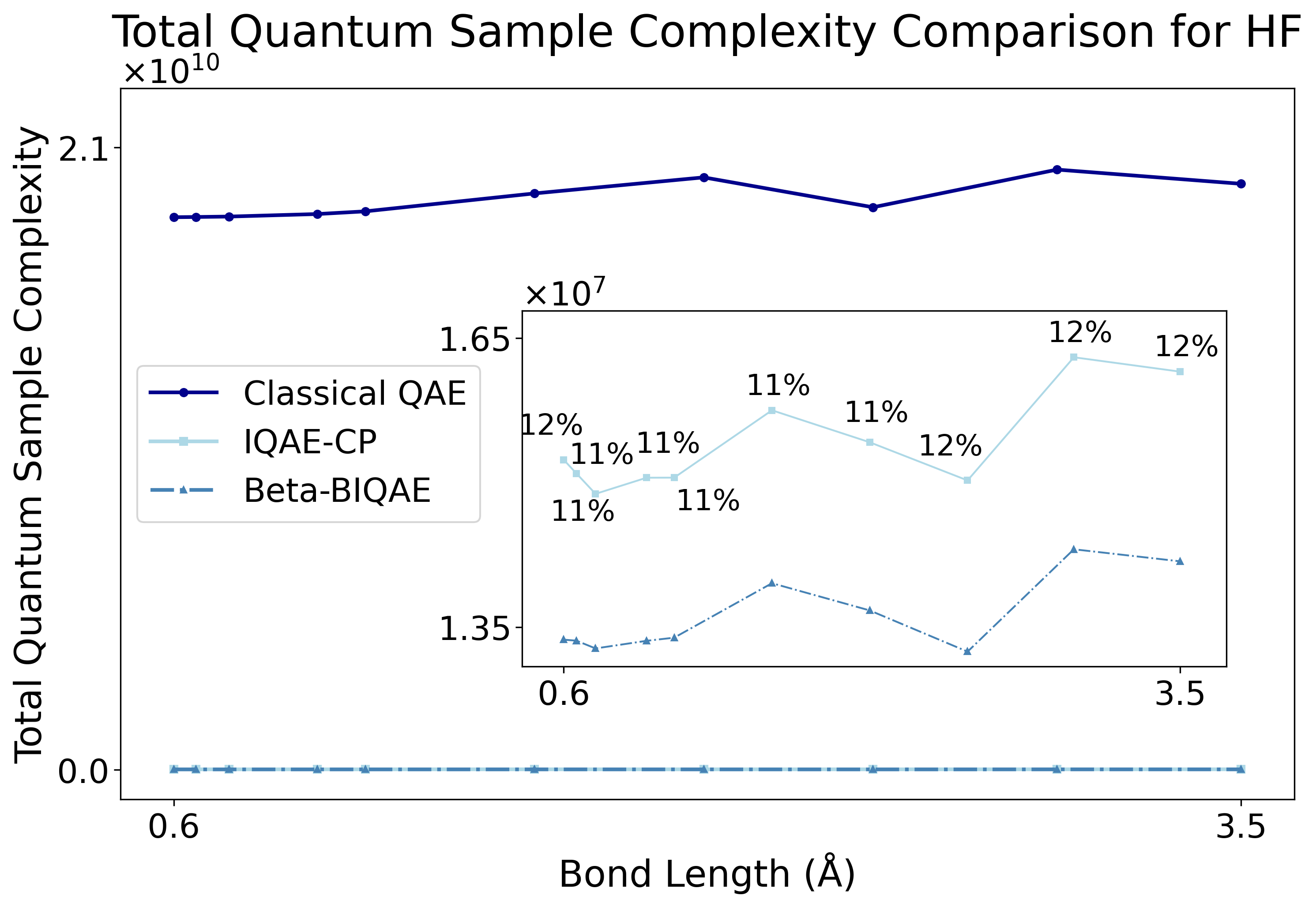} &  
        \includegraphics[width=0.5\textwidth]{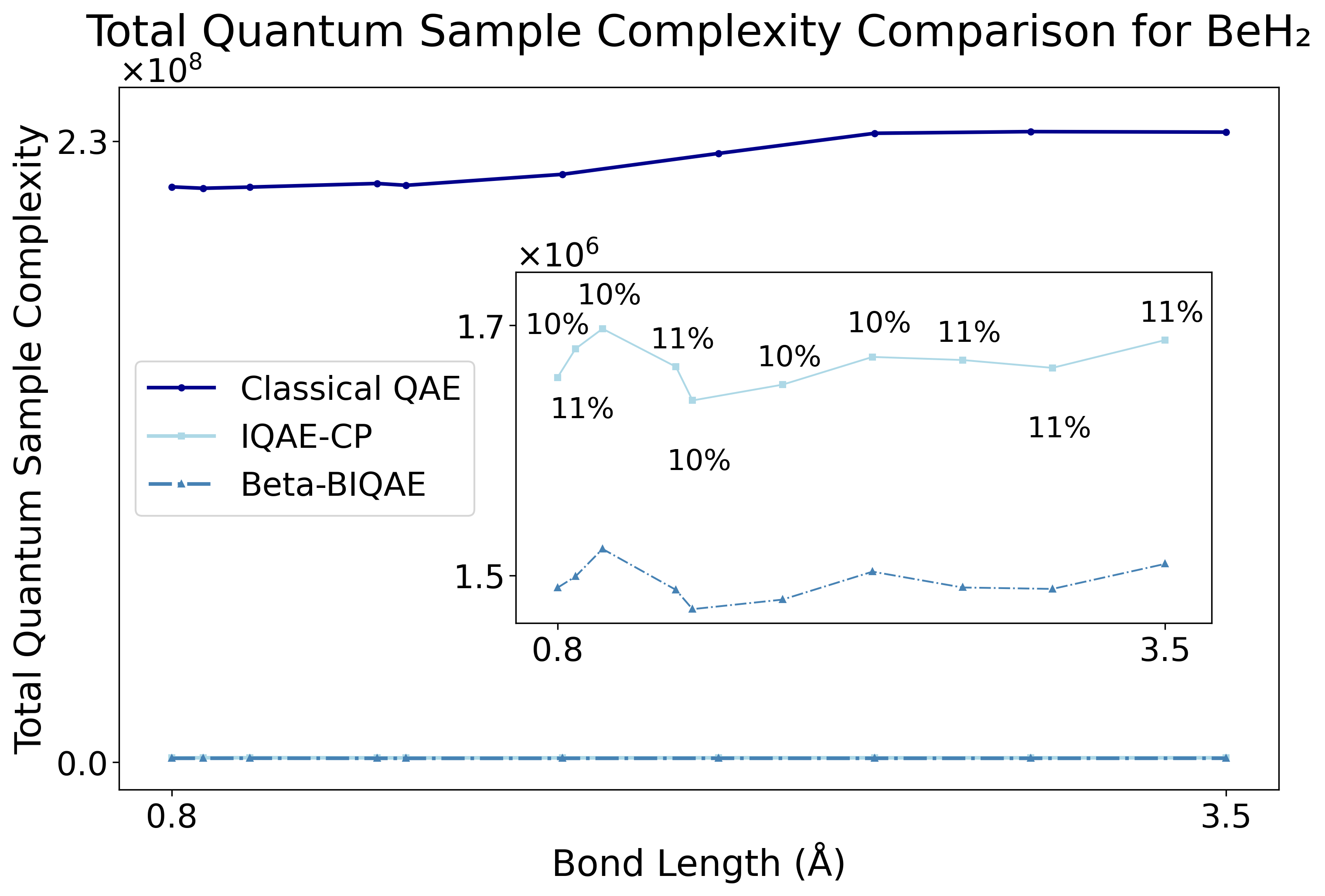}
        
    \end{tabular}

    \caption{Comparison of quantum sample complexities for estimation of molecular ground-state energies as a function of interatomic bond length for Classical QAE (dark blue circles), IQAE-CP (light blue squares), and Beta-BIQAE (medium blue triangles) with inset illustration of percentage improvement achieved by Beta-BIQAE over IQAE-CP.}
    \label{fig:multi_BL}
\end{figure*}

\begin{figure*}[t]
    \centering
    \begin{tabular}{cc}  
        \includegraphics[width=0.5\textwidth]{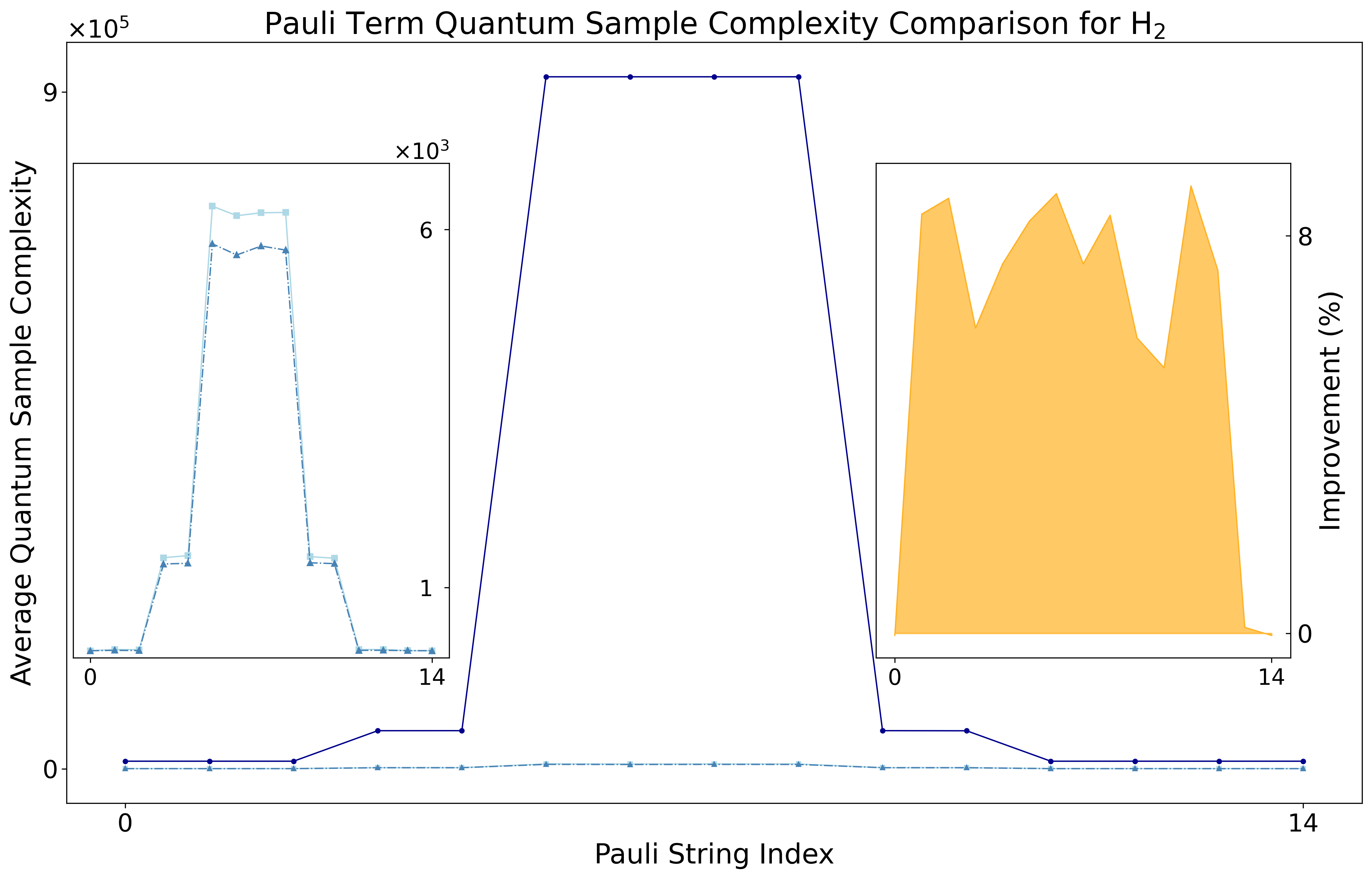} &  
        \includegraphics[width=0.5\textwidth]{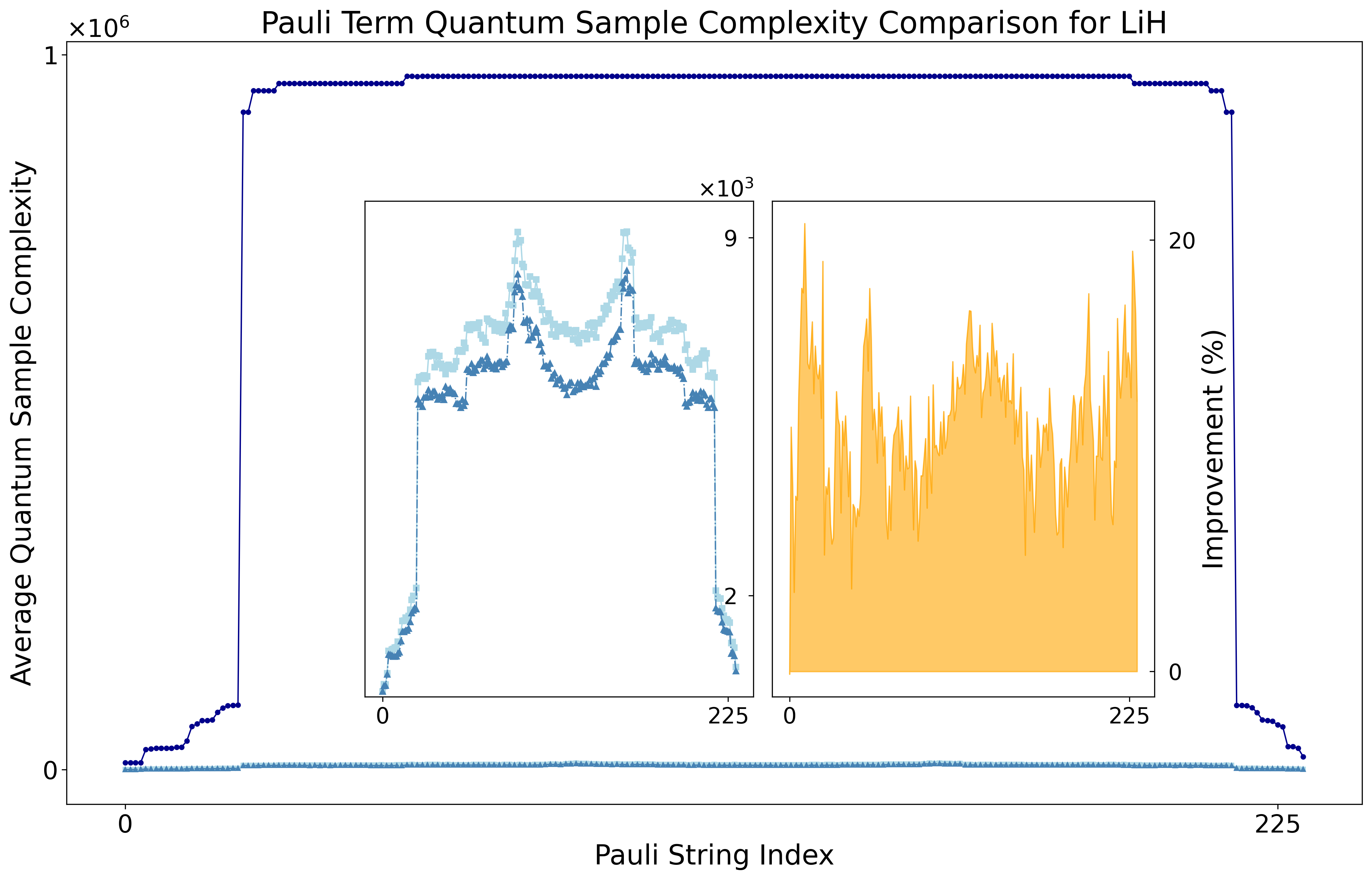} \\
        \includegraphics[width=0.5\textwidth]{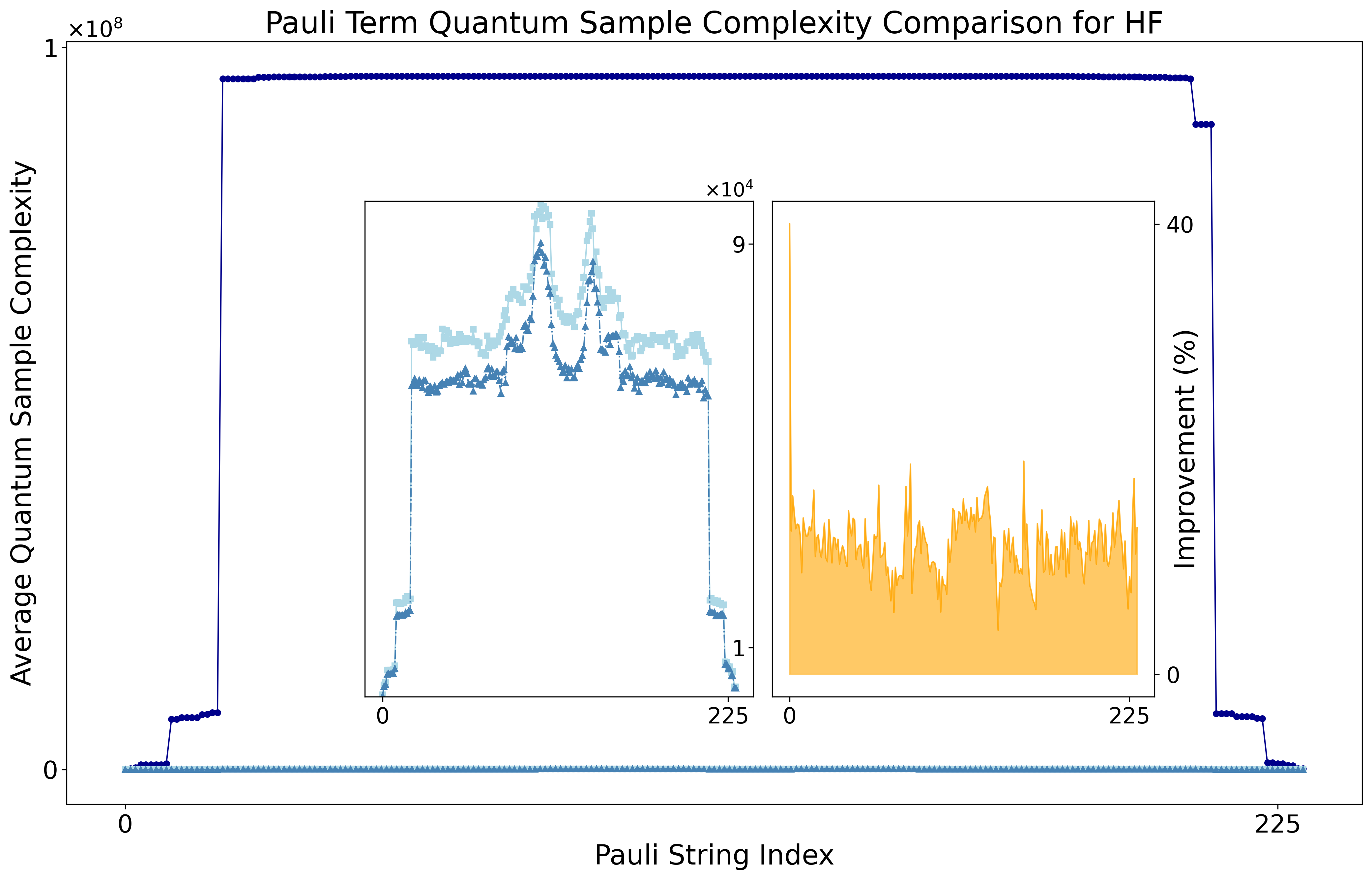} &  
        \includegraphics[width=0.5\textwidth]{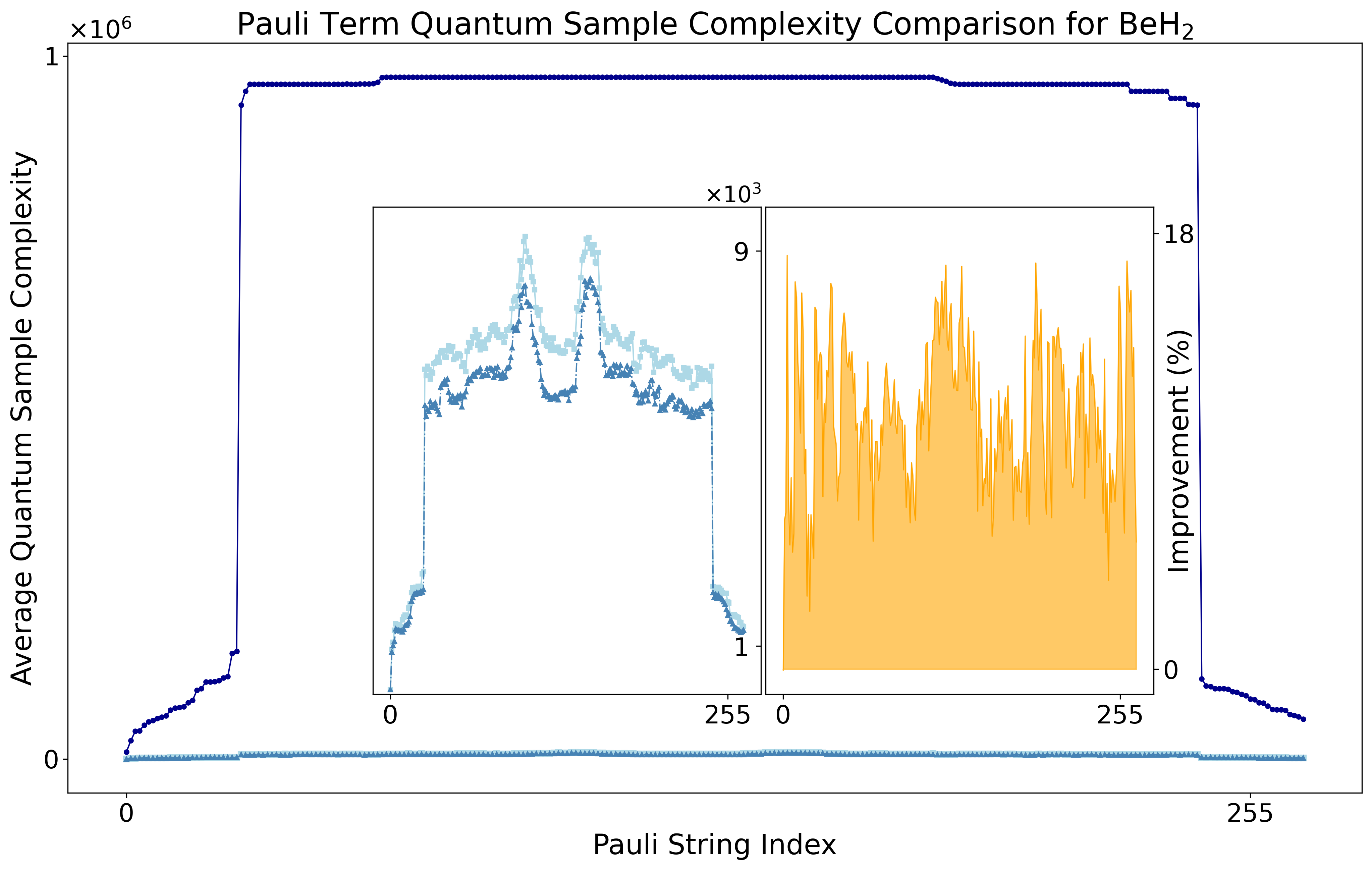} \\
        \multicolumn{2}{l}{\includegraphics[height=0.5cm]{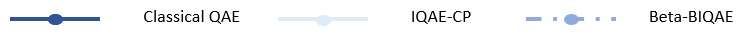}}
    \end{tabular}
    
    \caption{Quantum sample complexity comparison for H\textsubscript{2}, LiH, HF, and BeH\textsubscript{2} ground-state energies with Classical QAE (dark blue circles), IQAE-CP (light blue squares), and Beta-BIQAE (medium blue triangles). Insets indicate Beta-BIQAE outperforms IQAE-CP via magnification of average quantum sample complexity results (left) and percentage improvement of Beta-BIQAE over IQAE-CP (right). Pauli strings are sorted in descending order of magnitude of the corresponding expectation value.}
    \label{fig:sc_all}
\end{figure*}

\subsection{Molecular Ground-State Energy Estimation}
\label{sec:vqe_simul}

As shown in Fig.~\ref{fig:compare_acc}, Beta-BIQAE estimates molecular ground-state energies with significantly greater accuracy than Classical QAE for all molecules at all bond lengths given a fixed quantum sample complexity. 
The interval estimates are significantly smaller for Beta-BIQAE than Classical QAE by an average factor of $\times32.84$ for HF, $\times14.2$ by H$_2$, $\times12.16$ for BeH$_2$, and $\times12.13$ for LiH. Additionally, in the majority of cases considered, the interval centers for Beta-BIQAE lie closer to the exact diagonalization value than Classical QAE, on average by a difference of 0.137 Hartree, 0.020 Hartree, 0.004 Hartree, and 0.002 Hartree for HF, H$_2$, LiH, and BeH$_2$, respectively. Note all of these improvements exceed chemical accuracy.  

Examination of the quantum sample complexity required to yield a given target accuracy likewise indicates Beta-BIQAE significantly reduces measurement cost for energy estimation relative to Classical QAE and IQAE-CP: Fig.~\ref{fig:multi_BL} indicates Beta-BIQAE requires a sample complexity 
two to three orders of magnitude lower than Classical QAE to reach a target accuracy of $\varepsilon=10^{-3}$ (three in the $\varepsilon=10^{-4}$ case of HF) across all bond lengths considered. 
Furthermore, Beta-BIQAE consistently requires a sample complexity 8\% to 13\% lower than IQAE-CP across all molecular tests. 

The quantum sample complexities to compute expectation values of individual Pauli terms to the specified target accuracies similarly demonstrate the lower measurement cost entailed by BIQAE relative to Classical QAE and IQAE-CP (see Fig.~\ref{fig:sc_all}). 
The average quantum sample complexity of Beta-BIQAE is lower than that of Classical QAE for all Pauli terms by up to three orders of magnitude for $\varepsilon=10^{-3}$ and up to four orders of magnitude for $\varepsilon=10^{-4}$. Likewise, the average quantum sample complexity of Beta-BIQAE is lower than that of IQAE-CP by up to a double-digit percentage for all Pauli terms for all molecules considered. 
In particular, this advantage over IQAE-CP averages a percent improvement of 10.16\% for $\text{H}_2$, 10.5\% for $\text{BeH}_2$, 10.72\% for $\text{LiH}$, and 11.4\% for $\text{HF}$, in agreement with the expected scaling result for arbitrary quantum amplitudes of $\sim10-16\%$. Moreover, a trend is clearly visible in which the sample complexity is lowest for extreme Pauli string indices with a plateau at intermediate Pauli string index values. Since the Pauli string indices are ordered in descending order of expectation value magnitude, this plateau is consistent with the prediction that higher quantum sample complexities are needed to estimate intermediate quantum amplitudes per the predicted quantum sample complexity dependence on $a$ of Beta-BIQAE, Classical QAE, and IQAE-CP $N_\text{oracle}\propto\sqrt{a(1-a)}$ in Eqs.~\eqref{eq:complexity_ClassicalQAE} and \eqref{eq:sharedscaling}. Note Beta-BIQAE's measurement cost reduction only requires a number of Grover operators comparable to that of IQAE-CP, and thus is not associated with a significant increase in circuit depth, as demonstrated in the extended results presented in Appendix~\ref{ext_res:vqe_circ_depth}. 

\section{Discussion}
\label{sec:Discussion}



The success of BIQAE for quantum amplitude estimation and molecular ground-state energy approximation shows conclusively the value of Bayesian inference as a way to reduce the cost of QAE and quantum computing algorithms generally. Since QAE is widely used as a subroutine in quantum computing tasks, it is therefore natural to explore BIQAE's implementation to accelerate a variety of calculations currently sought after as targets for quantum utility, including in tandem with supplementary cost reduction techniques that would further accelerate QAE. For example, the successful identification of molecular ground states with BIQAE demonstrated here invites the use of BIQAE to facilitate identification of molecular excited states in conjunction with quantum tomography techniques~\cite{PhysRevA.110.052407, cai2016optimal, wang2013asymptotic}. Consequently, the methods developed in this work are expected to have the potential to impact a wide range of quantum algorithms, thereby underscoring the broader significance of the results. 

Given the measurement cost reduction observed here for implementation of Bayesian inference in IQAE at the level of interval estimates, the next step for BIQAE would be to inject Bayesian inference at the level of hyperparameters or $(K,N)$-scheduling decisions. Bayesian optimization \cite{thompson1933likelihood,kushner1964new,movckus1974bayesian,shahriari2015taking} could be employed to optimize BIQAE's hyperparameters (such as the sample size $N_t$ and incremental shot batch size $N_\text{incre}$), which is expected to improve the constant prefactor associated with  the quantum sample complexity at fixed target accuracy. Likewise, the schedule employed here to facilitate direct comparison between BIQAE and IQAE literature results could be substituted with a Bayesian-informed schedule (such as that of BAE \cite{2024bayesianquantumamplitudeestimation}) to make the most of limited measurement information.

In order to realize BIQAE's gains, a key step would be to modify the algorithm for implementation on current and future quantum computers. The far-term method presented here could be readily modified for today's near-term intermediate scale quantum computers. Importantly, practical implementation of BIQAE with today's quantum computers will require consideration of current realistic noise conditions. Under the noise model considered in BAE \cite{2024bayesianquantumamplitudeestimation}, measurement outcome probabilities decay exponentially toward $p_k=\tfrac{1}{2}$ as the number of Grover operators increases $k\rightarrow\infty$, such that there exists a fundamental tradeoff that accompanies the addition of Grover operators between stochastic noise reduction and biased noise amplification in the absence of error correction. This tradeoff presently prevents IQAE from fully achieving its predicted efficiency gains, and would therefore be likely to similarly impact a naive implementation of BIQAE in the high Grover operator number limit on near-term quantum computers. Noise-aware schedules and restrictions on the number of Grover operators that have increased the robustness of prior QAE variants \cite{PRXQuantum.2.010346,2024bayesianquantumamplitudeestimation} provide a possible route to circumvent such obstacles in BIQAE, a direction that we leave as an open question for future research. More generally, strategies may be imported from QAE methods recently successfully implemented on both superconducting and trapped-ion quantum computers \cite{rao2020quantum1,vazquez2021efficient,tanaka2021amplitude,tanaka2022noisy,certo2022benchmarking,giurgica2022low1,dalal2023noise,herbert2024noise,kunitsa2024experimental}. These strategies may straightforwardly be used in conjunction with the most recent NISQ accommodations of RAE \cite{PRXQuantum.2.010346} and BAE \cite{2024bayesianquantumamplitudeestimation}. 
Ultimately, the power of BIQAE could be demonstrated without modification on far-term quantum computers if sufficiently reliable logical qubits and low-error gates are developed.  Based on the results, it would be fruitful to identify new avenues for Bayesian inference to accelerate the frequentist tasks ubiquitous in quantum approaches to mathematical and scientific applications \cite{montanaro2015quantum,yu2020practical,rao2020quantum,kassal2008polynomial,gunther2024more,baker2020density,johnson2022reducingcostenergyestimation,kunitsa2024experimental,agliardi2022quantum,de2023quantum,miyamoto2024quantum,de2025quantum,martinez2024loop,lee2025quantum,johri2017entanglement,rall2020quantum,dong2022ground,piroli2024approximating,agrawal2024quantifying,gaitan2020finding,bharadwaj2023hybrid,penuel2024feasibility,gaitan2024circuit,stamatopoulos2020option,bouland2020prospects,stamatopoulos2022towards,alcazar2022quantum,herman2023quantum,wang2024option,woerner2019quantum,gomez2022survey,rebentrost2018quantum,miyamoto2022bermudan,egger2020credit,orus2019quantum,braun2021quantum,wiebe2016quantum,wiedemann2023quantumpolicyiteration}.

\section{Acknowledgments}

M.~B.~S.~acknowledges support for this research provided by the Office of the
Vice Chancellor for Research and Graduate Education (OVCRGE) at the University of Wisconsin-Madison with funding from the Wisconsin Alumni Research Foundation (WARF). Y.~W.~acknowledges partial support for this research by OVCRGE from WARF and NSF Grant DMS-2514240.  AI was used to refine early drafts of the manuscript.

\bibliographystyle{quantum}
\bibliography{references}   

\onecolumn\newpage
\appendix

\section{Algorithmic and Computational Details}
\label{appendix:bg_rw}

\subsection{Quantum Sample Complexity of Classical QAE}
\label{appendix:CAI}

As derived in Appendix~\ref{appendix:proof_thm123}, the quantum sample complexity of Classical QAE is formally stated in a statistical framework as follows:
\begin{theorem}
    \label{appendix_thm:ClassicalQAE}
    Consider \(n\) instances of the quantum circuit associated with oracle \(\mathcal{A}\). Let \(X_i\) denote the outcome of the \(i^\text{th}\) instance, where the state \(|0\rangle\) is mapped to the outcome \(0\) and \(|1\rangle\) is mapped to \(1\). Then, the outcomes \(X_1, \ldots, X_n\) are independent and identically distributed (i.i.d.) Bernoulli random variables with success probability \(p = a\). 

    The sample mean \(\bar{X} = \frac{1}{n}\sum_{i=1}^n X_i\) serves as the Uniformly Minimum Variance Unbiased Estimator (UMVUE) for \(a\). The mean squared error (MSE) of \(\bar{X}\) with respect to \(a\) is 
    \[
        \text{MSE}_{\bar{X}}(a) = \frac{1}{n} a (1 - a).
    \]
    To achieve a target accuracy \(\varepsilon\), the required number of shots is
    \[
        N_{\text{shots}} = \frac{1}{\varepsilon^2} a (1 - a).
    \]
    Since each shot requires one query to the oracle \(\mathcal{A}\), the total number of accesses to the oracle is then
    \[
        N_{\text{oracle}} = N_{\text{shots}} = \frac{1}{\varepsilon^2} a (1 - a).
    \]
\end{theorem}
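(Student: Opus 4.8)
The plan is to establish the four claims of Theorem~\ref{appendix_thm:ClassicalQAE} in sequence, each following from the previous by a short standard argument. First I would verify the distributional claim: by the defining action of the oracle, $\mathcal{A}\ket{0}_n\ket{0}=\sqrt{1-a}\ket{\psi_0}_n\ket{0}+\sqrt{a}\ket{\psi_1}_n\ket{1}$, so measuring the ancilla qubit in the computational basis yields outcome $1$ with probability $|\sqrt{a}|^2=a$ (using that $\ket{\psi_1}_n$ is normalized) and outcome $0$ with probability $1-a$. Since the $n$ circuit instances are prepared and measured independently, the outcomes $X_1,\dots,X_n$ are i.i.d.\ $\mathrm{Bernoulli}(a)$, which gives the first assertion.

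Next I would address the UMVUE claim. The sample mean $\bar X=\frac1n\sum_i X_i$ is unbiased for $a$ since $\mathbb{E}[X_i]=a$. To see it is the \emph{uniformly minimum variance} unbiased estimator, I would invoke the Lehmann--Scheff\'e theorem: $T=\sum_i X_i$ is a complete sufficient statistic for the Bernoulli family (a one-parameter exponential family with natural parameter $\log\frac{a}{1-a}$ ranging over an open set), and $\bar X=T/n$ is a function of $T$ that is unbiased for $a$; hence it is the unique UMVUE. This is the only step with any subtlety, and even there the subtlety is purely in citing the right classical theorem rather than in any computation, so I do not expect a genuine obstacle anywhere in the proof.

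Third, the MSE is immediate: since $\bar X$ is unbiased, $\mathrm{MSE}_{\bar X}(a)=\mathrm{Var}(\bar X)=\frac1n\mathrm{Var}(X_1)=\frac1n a(1-a)$. Finally, to reach target accuracy $\varepsilon$ in the sense that the MSE equals $\varepsilon^2$, I set $\frac1n a(1-a)=\varepsilon^2$ and solve for $n$, giving $N_{\text{shots}}=\frac{1}{\varepsilon^2}a(1-a)$; since each shot consists of a single application of $\mathcal{A}$ followed by a measurement, the number of oracle accesses equals the number of shots, $N_{\text{oracle}}=\frac{1}{\varepsilon^2}a(1-a)$, which matches Eq.~\eqref{eq:complexity_ClassicalQAE}. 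The main ``obstacle,'' such as it is, is simply bookkeeping: making precise the identification of ``target accuracy $\varepsilon$'' with ``$\mathrm{MSE}=\varepsilon^2$'' and noting that the $a(1-a)$ prefactor means the worst case is $a=1/2$; everything else is textbook.
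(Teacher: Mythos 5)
Your proof is correct, but it takes a more direct and self-contained route than the paper. The paper does not prove Theorem~\ref{appendix_thm:ClassicalQAE} on its own: in Appendix~\ref{appendix:proof_thm123} it proves the general amplified-estimation result (Theorem~\ref{appendix_thm:knownbase}) via the MLE invariance principle and the delta method, and then obtains the Classical QAE statement as the special case $k=0$, $l(0)=0$, where $\hat a=\sin^2\!\bigl(\arcsin\sqrt{\bar X}\bigr)=\bar X$ and the factor $1/(2k+1)^2$ reduces to $1$. You instead verify the Bernoulli$(a)$ distribution of the ancilla measurement directly from the action of $\mathcal{A}$, compute $\mathrm{MSE}_{\bar X}(a)=\mathrm{Var}(\bar X)=a(1-a)/n$ exactly (no delta method needed at $k=0$), and solve $\mathrm{MSE}=\varepsilon^2$ for $n$. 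Two small points in your favor: (i) the UMVUE claim in the theorem is actually justified by your Lehmann--Scheff\'e argument (complete sufficient statistic $\sum_i X_i$ for the one-parameter exponential family), whereas the paper's proof only establishes the MLE property and leaves the UMVUE assertion implicit; and (ii) your MSE is exact rather than asymptotic, which matters since the theorem states $\mathrm{MSE}$ rather than $\mathrm{AMSE}$ here. The paper's approach buys economy --- one proof covers Theorems~\ref{thm:AAE}, \ref{appendix_thm:ClassicalQAE}, and \ref{appendix_thm:knownbase} at once --- while yours buys a cleaner, assumption-free derivation of the benchmark case.
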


\subsection{Mechanistic Foundations of Amplitude Estimation}
\label{appendix:CAICoin}

To aid readers new to the field, we provide the following brief, simplified example of how amplitude amplification improves measurement efficiency:

\begin{example}
    \label{appendix_example: coin}
    Suppose one seeks to estimate the head probability of a coin while in possession of (i) the ability to estimate the head probability of a biased coin with accuracy $0.02$ and (ii) a technique that amplifies the head probability by a factor of $\times4$. If the unknown head probability is $0.15$ and one tosses the coin repeatedly, one will obtain an interval estimate of $[0.13, 0.17]$ without amplification or $[0.58, 0.62]$ with amplification.

    At first glance, the interval lengths appear to be equal for both approaches. However, the interval in the amplified case must be rescaled by the inverse of the original factor (namely, $\times 1/4$) to reorient to the original head probability. 
    The interval after rescaling is $[0.145, 0.155]$, which is {\em four times} narrower than the interval obtained without amplification. Thus, using the amplified approach improves accuracy in the sample case by a factor of four.
\end{example}

\subsection{Identifiability Challenge of Amplitude Estimation}
\label{appendix:CAIIdentifiability}

Fig.~\ref{appendix_fig:mul_sol} provides a visual demonstration of the need for additional information to identify the unique amplitude to which a measured probability corresponds in amplitude-amplification-based QAE methods. Given the relationship between the amplified angle and the amplified target probability Eq.~\eqref{eq:pk}, there exist multiple  angles that correspond to each observed probability Eq.~\eqref{eq:mul_sol}. However, these individual solutions may be distinguished uniquely by specification of their base or quadrant index Eq.~\eqref{eq:quadrant index}.

\begin{figure}[ht]
    \centering
    \includegraphics[width=0.8\linewidth]{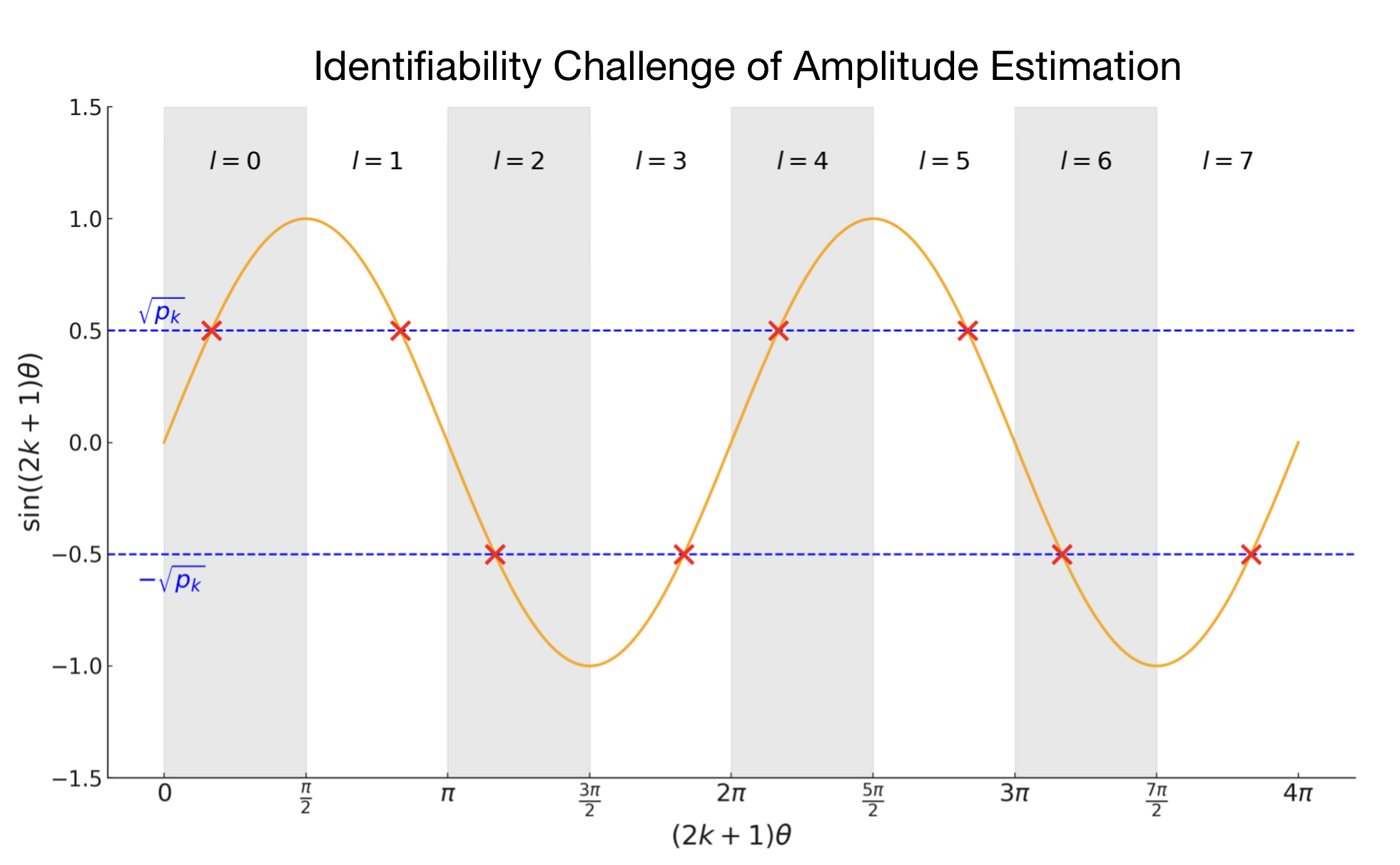}
    \caption{Illustration of the identifiability challenge in amplitude estimation, in which the relationship between the amplified angle and amplified target probability (yellow solid line) entails multiple valid solutions (red points) for each measured amplitude ($p_k=0.5$ shown, blue dashed line), which are distinguished uniquely by their base or quadrant index (from left to right, \(l=0, 1, 2,3,\ldots\), alternating shaded gray and white vertical areas).}
    \label{appendix_fig:mul_sol}
\end{figure}

\subsection{Resolution of the Identifiability Challenge with Arbitrary Quadrant Index}
\label{appendix:known_quad_idx}

To demonstrate that the quadrant index $l(k)$ need not be zero to uniquely estimate the quantum amplitude $a$ (a requirement of Theorem~\ref{thm:AAE}), we relax the small-angle assumption of Assumption~\ref{assump:smallprob} to mere knowledge of the value of $l(k)$, as follows:

\begin{assumption}
    \label{appendix_assump:knownbase}
    For a given $k$, the corresponding quadrant index $l(k)$ is known.
\end{assumption}

The proof provided in Appendix~\ref{appendix:proof_thm123} then indicates that Theorem~\ref{thm:AAE} extends to

\begin{theorem}
    \label{appendix_thm:knownbase}
    Suppose Assumption~\ref{appendix_assump:knownbase} holds with a certain $k$. Then, 
    \begin{align*}
        \hat{a}\coloneqq
        \begin{cases} 
            \sin^2{\left[\frac{1}{2k+1}\left(\arcsin{\left(\sqrt{\bar{X}_k}\right)} + l(k)\cdot\frac{\pi}{2}\right)\right]} & \text{if $l(k)$ is even}, \\
            \sin^2{\left[\frac{1}{2k+1}\left(\arccos{\left(\sqrt{\bar{X}_k}\right)} + l(k)\cdot\frac{\pi}{2}\right)\right]} & \text{if $l(k)$ is odd}
        \end{cases}
    \end{align*}
    is the Maximum Likelihood Estimator (MLE) for $a$. The corresponding asymptotic MSE is
    \begin{align*}
        \text{AMSE}_{\hat{a}}(a)=\frac{1}{(2k+1)^2}\frac{1}{N_k}a(1-a),
    \end{align*}
    and the required number of accesses to the oracle $\mathcal{A}$ to achieve a target accuracy $\varepsilon$ is 
    \begin{align*}
        N_\text{oracle} = \frac{1}{2k+1}\frac{1}{\varepsilon^2}a(1-a).
    \end{align*}
\end{theorem}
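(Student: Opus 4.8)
The plan is to reduce Theorem~\ref{appendix_thm:knownbase} to Theorem~\ref{thm:AAE} by exploiting the fact that the only role Assumption~\ref{assump:smallprob} plays is to pin down the quadrant index to $l(k)=0$, and that once the quadrant index is known (Assumption~\ref{appendix_assump:knownbase}) the inversion from the amplified target probability $p_k$ back to $\theta$ is a deterministic bijection. Concretely, I would argue as follows. First, recall from Eq.~\eqref{eq:fundamental} and Eq.~\eqref{eq:pk} that the circuit $\mathcal{Q}^k\mathcal{A}$ applied to $\ket{0}_n\ket{0}$ yields a Bernoulli$(p_k)$ outcome with $p_k=\sin^2((2k+1)\theta)$, so $\bar X_k\sim\frac{1}{N_k}{\rm Bin}(N_k,p_k)$ exactly as in the AAE setting; this part of the statistical model is identical and does not use any assumption on the size of $a$.

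Second, I would establish the MLE claim. The likelihood of the data depends on $a$ only through $p_k$, so by the invariance of maximum likelihood estimation under reparametrization, $\hat p_k=\bar X_k$ is the MLE of $p_k$ and $\hat a = g^{-1}(\bar X_k)$ is the MLE of $a$, where $g:\theta\mapsto\sin^2((2k+1)\theta)$ restricted to the quadrant $[\,l(k)\tfrac{\pi}{2K},(l(k)+1)\tfrac{\pi}{2K}\,]$ is a monotone bijection onto $[0,1]$ (here $K=2k+1$). Inverting $g$ on that quadrant gives exactly the two-case formula in the statement: on even quadrants the branch of $\arcsin$ is used and on odd quadrants the $\arccos$ branch, each shifted by $l(k)\cdot\pi/2$ and then divided by $2k+1$ and fed through $\sin^2$. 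One should check the endpoint behavior (the formula is continuous and well-defined at $p_k\in\{0,1\}$) but this is routine. Note that $\hat a$ lies in $[0,1]$ automatically, so no truncation is needed.

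Third, for the asymptotic MSE I would invoke the standard delta method / asymptotic normality of the MLE. Since $\bar X_k$ is a sample mean of i.i.d.\ Bernoulli$(p_k)$ variables, $\sqrt{N_k}(\bar X_k-p_k)\Rightarrow\mathcal N(0,p_k(1-p_k))$, and applying the delta method with the map $h\coloneqq g^{-1}$ gives $\text{AMSE}_{\hat a}(a)=\frac{1}{N_k}p_k(1-p_k)\,[h'(p_k)]^2$. The computation of $h'(p_k)$ is the one genuine calculation: differentiating $a=\sin^2\theta$ and $p_k=\sin^2(K\theta)$ gives $\frac{da}{dp_k}=\frac{\sin(2\theta)}{K\sin(2K\theta)}$, and then $p_k(1-p_k)\left(\frac{da}{dp_k}\right)^2 = \frac{\sin^2(2K\theta)}{4}\cdot\frac{\sin^2(2\theta)}{K^2\sin^2(2K\theta)} = \frac{\sin^2(2\theta)}{4K^2} = \frac{a(1-a)}{K^2}$, using $\sin^2(2\theta)=4\sin^2\theta\cos^2\theta=4a(1-a)$. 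This yields $\text{AMSE}_{\hat a}(a)=\frac{1}{(2k+1)^2}\frac{1}{N_k}a(1-a)$, independent of the quadrant index $l(k)$ — which is the crux of why relaxing Assumption~\ref{assump:smallprob} to Assumption~\ref{appendix_assump:knownbase} costs nothing asymptotically. Finally, setting $\text{AMSE}_{\hat a}(a)=\varepsilon^2$ gives $N_k=\frac{1}{(2k+1)^2\varepsilon^2}a(1-a)$, and multiplying by the per-shot oracle cost $2k+1$ yields $N_\text{oracle}=(2k+1)N_k=\frac{1}{2k+1}\frac{1}{\varepsilon^2}a(1-a)$, exactly as claimed.

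The main obstacle I anticipate is not any single hard computation but rather the care needed in the bookkeeping of branches: verifying that the piecewise $\arcsin$/$\arccos$ formula is precisely the inverse of $g$ on the quadrant selected by $l(k)$ (including checking monotonicity direction on odd vs.\ even quadrants and continuity at quadrant boundaries), and confirming that the derivative factor $[h'(p_k)]^2\,p_k(1-p_k)$ collapses to $a(1-a)$ uniformly in $l(k)$. A secondary subtlety is that the delta method requires $h$ to be differentiable at $p_k$, which fails at $p_k\in\{0,1\}$ (the boundary of the quadrant), so strictly speaking the AMSE statement is an interior/asymptotic statement; since the earlier Theorem~\ref{thm:AAE} already adopts this convention, I would simply mirror it rather than belabor the edge cases.
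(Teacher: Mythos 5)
Your proposal is correct and follows essentially the same route as the paper's proof in Appendix~\ref{appendix:proof_thm123}: the MLE claim via the invariance principle applied to the quadrant-resolved inverse map $f_k$, and the AMSE via the delta method, where the $l(k)$-dependence cancels because $[f_k'(p_k)]^2 p_k(1-p_k)$ reduces to $\frac{1}{4(2k+1)^2}$ on every branch. Your implicit-differentiation shortcut for $da/dp_k$ is just a repackaging of the paper's explicit chain-rule computation through $f_k'$, so no substantive difference remains.
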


Note this theorem not only establishes that amplified estimation is feasible for arbitrary amplitudes given prior knowledge of the quadrant index, but also that the estimator increases in efficiency as $k$ grows---insights implicit in the design of IQAE \citep{grinko2021iterative} that are broadly applicable to the wider class of amplified-estimation-based QAE methods.

\subsection{Pictorial Representation and Accumulated Complexity Bound of Base-3/Base-5 Hybrid-Scheduled IQAE}
\label{appendix:35_IQAE}

To overcome the problem that base-3 hybrid-scheduled IQAE struggles where the amplified angle $K_t\theta$ falls too close to the reference interval boundary (due to the increasing difficulty of constructing an interval estimate fully contained in a single reference interval as the amplified angle nears said boundaries), we introduce a second set of reference intervals in base 5, as depicted in Fig.~\ref{appendix_fig:boundaries}. 
Introduction of the second set of reference intervals ensures that the interval estimate must be fully contained in a single reference interval either in base 3 or 5 whenever its length is less than the smallest separation between reference interval boundaries (specifically, 6 degrees).
Replacement of the Assumption~\ref{assump:schedule3} that underpins base-3 hybrid-scheduled IQAE with the lower bound on the radius of the confidence interval radius Eq.~\eqref{eq:epsilont35} then results in the following accumulated complexity bound, as derived and proven in Appendix~\ref{appendix:proof_schedule35}:

\begin{theorem}
    \label{appendix_thm:schedule35}
    Using the base-3/base-5 \(K\)-schedule, the accumulated sample complexity to achieve a accuracy of \(\varepsilon\) with an overall confidence level of \(\alpha\) \((\leq 0.1)\) is upper bounded by  Eq.~\eqref{bound:schedule35}.
\end{theorem}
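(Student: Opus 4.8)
The plan is to follow the same strategy used to prove Theorem~\ref{thm:schedule3} for the base-3 schedule, with the modified lower bound on the per-stage interval radius Eq.~\eqref{eq:epsilont35} in place of Assumption~\ref{assump:schedule3}. First I would recall the workflow: at each stage $t$, batches of $N_{\rm incre}$ measurements are accumulated until the normal-approximated confidence interval for the amplified angle $K_t\theta$ is short enough to fit entirely within a single reference interval of either the base-3 or base-5 partition of the relevant quadrant, which is what allows the quadrant index $l_{t+1}$ at the next stage to be read off unambiguously. The key consequence is the lower bound Eq.~\eqref{eq:epsilont35}: the algorithm never needs $\varepsilon_t$ smaller than $\frac{1}{2}\frac{1}{K_t}\frac{1}{15}\frac{\pi}{2}$, because whenever the confidence interval for $\theta$ has radius this small, the corresponding interval for $K_t\theta$ has radius at most $\frac{1}{2}\frac{1}{15}\frac{\pi}{2} = 3^\circ$, which is no larger than half the minimal spacing between reference-interval boundaries across the two bases, so it must lie inside one of them.

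Next I would bound the number of shots per stage. For stage $t$ with $k_t$ Grover operators, each shot costs $K_t = 2k_t+1$ oracle calls, and the normal-approximation width of the confidence interval for $\theta$ after $N_t$ shots scales like $\frac{1}{K_t}\sqrt{\frac{p_{k_t}(1-p_{k_t})}{N_t}}\,z_{\alpha_t/2}$; inverting, the number of shots needed to reach radius $\varepsilon_t$ is $N_t \lesssim \frac{z_{\alpha_t/2}^2}{4 K_t^2}\frac{1}{\varepsilon_t^2}$ (using $p(1-p)\le 1/4$), so the oracle cost of stage $t$ is $N_{{\rm oracle},t} = K_t N_t \lesssim \frac{z_{\alpha_t/2}^2}{4 K_t}\frac{1}{\varepsilon_t^2}$, exactly as in Eq.~\eqref{complexity_bound_iqae}. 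Summing over the $T$ stages and substituting the lower bound $\frac{1}{\varepsilon_t^2}\le \left(\frac{2}{\pi}\cdot 30\, K_t\right)^2$ from Eq.~\eqref{eq:epsilont35} into the non-final stages $t=0,\dots,T-2$ turns each such term into a quantity proportional to $K_t$, and since $K_t = 3^t$ the partial sum $\sum_{t=0}^{T-2} 3^t$ is itself $O(3^{T-1})=O(1/\varepsilon)$ (this is where the geometric nature of the schedule is used). For the final stage $t=T-1$ one uses directly that $\varepsilon_{T-1}\ge \varepsilon$ together with the expression for $T$ from Eq.~\eqref{eq:T}, which contributes another $O(1/\varepsilon)$ term. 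Collecting the two contributions, pulling out the common $\log(2/\alpha)$ factor (bounding each $z_{\alpha_t/2}^2$ via $\alpha_t = \alpha/T$, so $z_{\alpha_t/2}^2 = O(\log(T/\alpha)) = O(\log(\frac{2}{\alpha}\lceil \log_3(\frac{\pi}{4\varepsilon})\rceil))$), and tracking the $\sqrt{a(1-a)}$ factor that enters through $T$ and through the bound on $p_{k_t}(1-p_{k_t})$ in the final stage, yields a bound of the stated shape $\frac{C}{\pi}\frac{1}{\varepsilon}\log(\cdots)\sqrt{a(1-a)}$.

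Finally I would verify the explicit constant $227$. This is the most delicate bookkeeping step: one must combine the factor $30$ (from $15\times 2$) appearing squared in Eq.~\eqref{eq:epsilont35}, the geometric-series factor from $\sum 3^t = \frac{3^{T-1}-1}{2}\cdot\frac{3}{\ldots}$ or rather $\frac{3^T-1}{2}$, the factor $\frac{1}{4}$ from $p(1-p)\le\frac14$, the factor $\frac{2}{\pi}$ relating $\varepsilon$ in $\theta$-space to $\varepsilon$ in $a$-space (since the radius for $a$ is roughly $|\sin(2\theta)|\cdot\varepsilon_\theta \le \varepsilon_\theta$, and conversely the translation introduces a $\sqrt{a(1-a)}$), the ceiling in $T$, and the contribution of the final stage; and then check the numerics fall at or below $227$ for all $\alpha\le 0.1$. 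I expect this constant-chasing — in particular correctly accounting for how the base-3/base-5 monitoring changes the constant in Eq.~\eqref{eq:epsilont35} relative to Assumption~\ref{assump:schedule3} and ensuring the worst case over $\alpha\in(0,0.1]$ is handled — to be the main obstacle, whereas the order-of-magnitude argument is a routine adaptation of the proof of Theorem~\ref{thm:schedule3}. The rest (the bound $N_{{\rm oracle},t}\le \frac{z_{\alpha_t/2}^2}{4K_t}\varepsilon_t^{-2}$, the union bound giving overall confidence $\alpha$, and the interval-translation step) is imported directly from the base-3 analysis in Appendix~\ref{appendix:proof_schedule3} and Appendix~\ref{appendix:proof_T}.
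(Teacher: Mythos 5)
Your overall architecture --- the stage decomposition $\sum_t N_{\text{oracle},t}=\sum_t \frac{z^2_{\alpha_t/2}}{4K_t\varepsilon_t^2}$, substituting the radius lower bound Eq.~\eqref{eq:epsilont35} into the non-final stages, the union bound with $\alpha_t=\alpha/T$, and the approximation $z^2_{\alpha}\le -2\log\alpha$ for $\alpha\le 0.1$ --- matches the paper's proof. But two steps would fail as written. First, you set $K_t=3^t$ and sum a geometric series; the defining feature of the base-3/base-5 schedule is precisely that the ratio $K_{t+1}/K_t$ may be either $3$ or $5$, so $\{K_t\}$ is \emph{not} $\{3^t\}$, and since $K_t\ge 3^t$ the substitution $\sum_{t=0}^{T-2}K_t\rightsquigarrow\sum_{t=0}^{T-2}3^t$ is an \emph{under}estimate, i.e., not a valid upper bound. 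The paper uses only the one-sided property $K_{t+1}/K_t\ge 3$ to get $K_t\le 3^{-(T-1-t)}K_{T-1}$ and hence $\sum_{t=0}^{T-2}K_t\le\frac12 K_{T-1}$, so the non-final contribution is controlled by $K_{T-1}$ itself rather than by $3^{T-1}$.

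Second, your final-stage treatment (``use $\varepsilon_{T-1}\ge\varepsilon$ together with the expression for $T$ from Eq.~\eqref{eq:T}'') imports the base-3 argument, which bounds $\frac{1}{K_{T-1}}=\frac{1}{3^{T-1}}$ via the ceiling definition of $T$. In the mixed schedule $T$ is only \emph{upper}-bounded by that ceiling (the algorithm may finish in fewer stages when factors of $5$ are chosen), so $3^{T-1}$ need not be large enough relative to $1/\varepsilon$ and this route does not yield the required lower bound on $K_{T-1}$. The missing device is a two-sided bound on the product $x\coloneqq K_{T-1}\varepsilon_{T-1}$: $x\le\frac{\pi}{4}$ because the interval for the amplified angle must lie within a single quadrant, and $x\ge\frac{1}{2}\cdot\frac{1}{15}\cdot\frac{\pi}{2}=\frac{\pi}{60}$ because at the final stage the interval estimate is \emph{not} contained in any reference interval (otherwise the algorithm would advance to stage $T$). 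Writing the total as $\frac14 z^2_{\alpha/(2T)}\left(\frac{1800}{\pi^2}x+\frac1x\right)\frac{1}{\varepsilon_{T-1}}$ and maximizing the convex bracket over $x\in\left[\frac{\pi}{60},\frac{\pi}{4}\right]$ (the maximum, $\frac{454}{\pi}$, is attained at $x=\frac{\pi}{4}$) gives $\frac{227}{2\pi}z^2_{\alpha/(2T)}\frac{1}{\varepsilon_{T-1}}$, and $\varepsilon_{T-1}=\varepsilon/\bigl(2\sqrt{a(1-a)}\bigr)$ then yields Eq.~\eqref{bound:schedule35}. Without this two-sided control of $K_{T-1}\varepsilon_{T-1}$ the constant $227$ is not recoverable and the final-stage term is not bounded in a schedule-independent way.
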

Note this strategy requires no additional ancilla, in contrast to the strategy of ref.~\citep{zhao2022adaptive}.

\begin{figure}[ht] 
    \centering
    \includegraphics[width=0.45\linewidth]{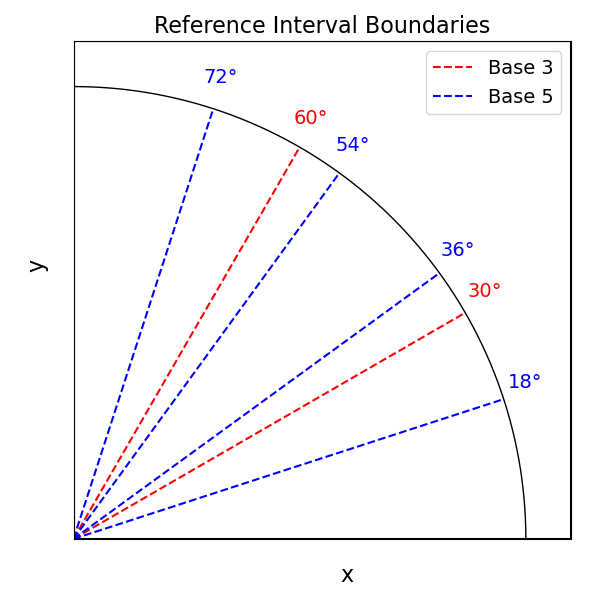}
    \caption{Reference interval boundaries in base 3 (red dashed lines) and base 5 (blue dashed lines) in base-3/base-5 hybrid-scheduled IQAE.}
    \label{appendix_fig:boundaries}
\end{figure}

\subsection{Pseudocode for Normal-BIQAE}
\label{appendix:biqae_normal}

Normal-BIQAE proceeds according to the workflow illustrated in Fig.~\ref{fig:flow_chart_normal} and modules provided in Algorithms \ref{appendix_alg:BayesianUpdate}--\ref{appendix_alg:PreparePrior}, in which normal conjugate priors ensure all prior and posterior distributions remain normal throughout the process and only the mean and variance evolve as the method progresses.

\begin{figure}[ht] 
    \centering
    \includegraphics[width=1\textwidth]{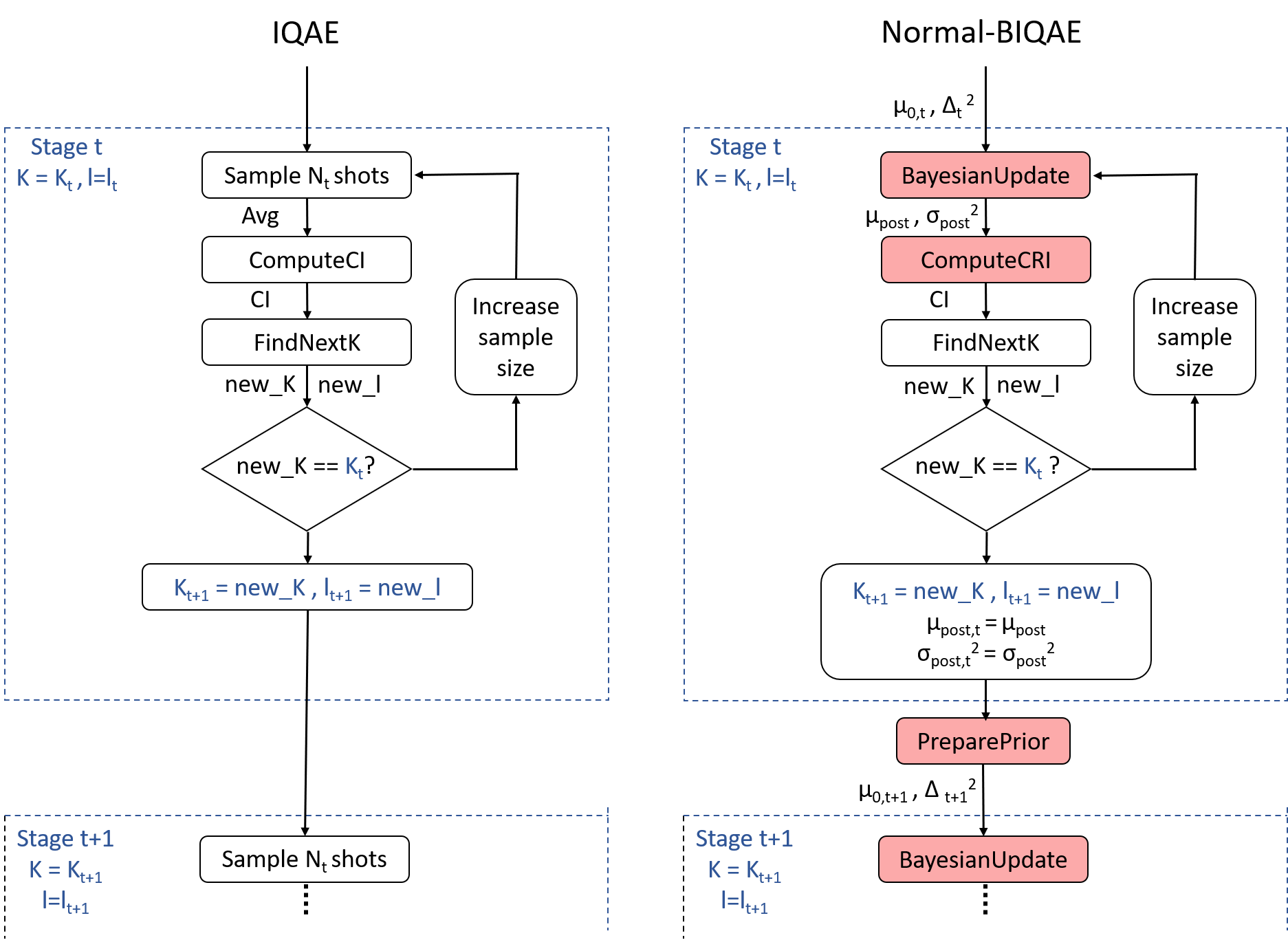}
    \caption{Detailed workflow of Normal-BIQAE (right), juxtaposed with IQAE for comparison (left). The three modules that distinguish BIQAE from IQAE are emphasized in red.}
    \label{fig:flow_chart_normal}
\end{figure}

\begin{algorithm}[H]
\caption{BayesianUpdate}
\label{appendix_alg:BayesianUpdate}
\begin{algorithmic}[1]
\Require Prior mean \(\mu_{0,t}\) and variance \(\Delta_t^2\) for \(p_{k_t}\), number of \(\mathcal{Q}\) operators \(k_t\), sample size \(N_t\)
\Ensure Posterior mean \(\mu_{\text{post},t}\) and variance \(\sigma_{\text{post},t}^2\) for \(p_{k_t}\)

\State Construct \(k_t\) \(\mathcal{Q}\) operators into the circuit \(\mathcal{A}\)

\State Sample shots until a total of \(N_t\) shots (including existing shots at stage \(t\)) is collected. Store the sample average as \(\bar{X}_{t}\)

\State Estimate the population variance:
\[
\hat{\sigma}_t^2 = \bar{X}_t(1 - \bar{X}_t)
\]

\State Compute the posterior mean and variance:
\[
\mu_{\text{post},t} = \frac{\frac{\mu_{0, t}}{\Delta_t^2} + \frac{\bar{X}_t}{\hat{\sigma}_t^2 / N_t}}{\frac{1}{\Delta_t^2} + \frac{1}{\hat{\sigma}_t^2 / N_t}},
\quad
\sigma_{\text{post},t}^2 = \left(\frac{1}{\Delta_t^2} + \frac{1}{\hat{\sigma}_t^2 / N_t}\right)^{-1}
\]

\State \textbf{Return} \((\mu_{\text{post}, t}, \sigma_{\text{post}, t}^2)\)

\end{algorithmic}
\end{algorithm}

\begin{algorithm}[H]
\caption{ComputeCRI}
\label{appendix_alg:ComputeCRI}
\begin{algorithmic}[1]
\Require Posterior \((\mu_{\text{post}, t}, \sigma_{\text{post}, t}^2)\) for \(p_{k_t}\), amplification factor \(K_t\), quadrant index \(l_t\), confidence level \(\alpha_t\)
\Ensure An interval estimate \((\theta_{l,t}, \theta_{u,t})\) for \(\theta\) at confidence level \(\alpha_t\)

\State Compute the interval estimate for \(p_{k_t}\):
\[
\left[p_{k_t}^l, p_{k_t}^u\right]=\mathrm{trunc}_{[0,1]}(\mu_{\text{post},t} \pm z_{\alpha_t/2} \cdot \sigma_{\text{post},t})
\]

\State Map the interval estimate for \(p_{k_t}\) to an interval estimate for \(\theta\) using \(f_t\):
\[
\left[\theta_t^l, \theta_t^u\right]=f_t([p_{k_t}^l, p_{k_t}^u])
\]
where 
\begin{align*}
    f_t(x) \coloneqq
    \begin{cases} 
        \frac{1}{K_t}\left(\arcsin{(\sqrt{x})} + l_t \cdot \frac{\pi}{2}\right) & \text{if \(l_t\) is even}, \\
        \frac{1}{K_t}\left(\arccos{(\sqrt{x})} + l_t \cdot \frac{\pi}{2}\right) & \text{if \(l_t\) is odd}
    \end{cases}
\end{align*}

\State \textbf{Return} \(\left[\theta_t^l, \theta_t^u\right]\)

\end{algorithmic}
\end{algorithm}

\begin{algorithm}[H]
\caption{PreparePrior}
\label{appendix_alg:PreparePrior}
\begin{algorithmic}[1]
\Require Posterior mean and variance \((\mu_{\text{post},t}, \sigma^2_{\text{post},t})\) for \(p_{k_t}\) obtained at the end of stage \(t\), current and next amplification factors \(K_t, K_{t+1}\), current quadrant index \(l_t\)
\Ensure Prior mean \(\mu_{0, t+1}\) and variance \(\Delta_{t+1}^2\) for \(p_{k_{t+1}}\)

\State Compute the prior mean for \(p_{k_{t+1}}\):
\[
\mu_{0, t+1} = \sin^2\left(K_{t+1} \cdot f_t(\mu_{\text{post},t})\right)
\]

\State Compute the prior variance for \(p_{k_{t+1}}\):
\[
\Delta_{t+1}^2 = \left(\frac{K_{t+1}}{K_t}\right)^2 \frac{\mu_{0, t+1}(1 - \mu_{0, t+1})}{\mu_{\text{post},t}(1 - \mu_{\text{post},t})} \cdot \sigma^2_{\text{post},t}
\]

\State \textbf{Return} \((\mu_{0, t+1}, \Delta_{t+1}^2)\)

\end{algorithmic}
\end{algorithm}

Theorem~\ref{appendix_thm:approx_prior}, which is proven in Appendix~\ref{appendix:proof_approx_prior}, provides the theoretical guarantee that underlies Algorithm \ref{appendix_alg:PreparePrior}:

\begin{theorem}
    \label{appendix_thm:approx_prior}
    Let $\mu_{\text{post},t}\in (0,1), \sigma^2_{\text{post},t}$ denote the posterior mean and variance of $p_{k_{t}}$ obtained at the end of stage $t$. Then, the prior distribution of $p_{k_{t+1}}$ can be approximated by a normal distribution with mean
    \begin{align*}
        \mu_{0, t+1} = \sin^2\left(K_{t+1}\cdot f_t(\mu_{\text{post},t})\right)
    \end{align*}
    and variance
    \begin{align*}
        \Delta_{t+1}^2 =\left(\frac{K_{t+1}}{K_t}\right)^2\frac{\mu_{0, t+1}(1-\mu_{0, t+1})}{\mu_{\text{post},t}(1-\mu_{\text{post},t})}\cdot \sigma^2_{\text{post},t}.
    \end{align*}
\end{theorem}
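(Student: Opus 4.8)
The plan is to apply the delta method (first-order Taylor expansion / propagation of error) to the smooth transformation that carries $p_{k_t}$ to $p_{k_{t+1}}$, treating the posterior of $p_{k_t}$ as approximately normal with mean $\mu_{\text{post},t}$ and variance $\sigma^2_{\text{post},t}$. First I would write the composite map explicitly as $g(x)\coloneqq\sin^2\!\bigl(K_{t+1}\cdot f_t(x)\bigr)$, so that $p_{k_{t+1}}=g(p_{k_t})$ by Eq.~\eqref{eq:p2p}. Since $g$ is $C^1$ on $(0,1)$ and the posterior concentrates around $\mu_{\text{post},t}\in(0,1)$ as $N_t$ grows, the delta method gives $p_{k_{t+1}}\approx g(\mu_{\text{post},t})+g'(\mu_{\text{post},t})\,(p_{k_t}-\mu_{\text{post},t})$, which is normal with mean $g(\mu_{\text{post},t})$ and variance $\bigl[g'(\mu_{\text{post},t})\bigr]^2\sigma^2_{\text{post},t}$. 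The mean statement $\mu_{0,t+1}=\sin^2\!\bigl(K_{t+1}\cdot f_t(\mu_{\text{post},t})\bigr)$ is then immediate.

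The substantive computation is to evaluate $g'(\mu_{\text{post},t})$ in closed form and show $\bigl[g'(\mu_{\text{post},t})\bigr]^2 = (K_{t+1}/K_t)^2\,\mu_{0,t+1}(1-\mu_{0,t+1})/[\mu_{\text{post},t}(1-\mu_{\text{post},t})]$. I would split this using the chain rule into two pieces: the derivative of $x\mapsto\sin^2$ at the argument $K_{t+1}f_t(\mu_{\text{post},t})$, and the derivative of $f_t$ at $\mu_{\text{post},t}$. For the outer piece, $\frac{d}{du}\sin^2 u = 2\sin u\cos u = \sin(2u)$, and evaluating at $u=K_{t+1}f_t(\mu_{\text{post},t})$ one recognizes $\sin^2 u=\mu_{0,t+1}$, so $\sin(2u)=\pm 2\sqrt{\mu_{0,t+1}(1-\mu_{0,t+1})}$. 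For the inner piece, differentiating $f_t(x)=\frac{1}{K_t}\bigl(\arcsin\sqrt{x}+l_t\pi/2\bigr)$ (even case; the odd case with $\arccos$ differs only by a sign that is squared away) gives $f_t'(x)=\frac{1}{K_t}\cdot\frac{1}{2\sqrt{x}\sqrt{1-x}}=\frac{1}{2K_t\sqrt{x(1-x)}}$. Multiplying the two and squaring, the factors of $2$ and the square roots combine to yield exactly $(K_{t+1}/K_t)^2\,\mu_{0,t+1}(1-\mu_{0,t+1})/[\mu_{\text{post},t}(1-\mu_{\text{post},t})]$, matching $\Delta_{t+1}^2/\sigma^2_{\text{post},t}$.

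The main obstacle, such as it is, is bookkeeping rather than depth: one must handle the parity of $l_t$ (the $\arcsin$ versus $\arccos$ branch of $f_t$) uniformly, noting that the derivative magnitude $|f_t'|$ is branch-independent and the sign disappears under squaring, and one must argue that the argument $K_{t+1}f_t(\mu_{\text{post},t})$ does not land exactly on a multiple of $\pi/2$ (where $\sin(2u)=0$ would make the linearization degenerate) — but this is precisely the regime the schedule avoids, since such a point corresponds to $\mu_{0,t+1}\in\{0,1\}$, excluded by $\mu_{\text{post},t}\in(0,1)$ together with the quadrant-tracking construction. I would also remark that, as with the \texttt{PreparePrior} approximation noted in the main text, the result is an asymptotic (large-$N_t$) statement: the exact pushforward of a normal law under $g$ is not normal, but the delta method controls the error to leading order in $\sigma_{\text{post},t}$, which is $O(1/\sqrt{N_t})$.
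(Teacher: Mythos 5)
Your proposal is correct and follows essentially the same route as the paper's proof: both apply the first-order delta method to the composite map $\varphi(x)=\sin^2\!\left(K_{t+1}f_t(x)\right)$ and evaluate $[\varphi'(\mu_{\text{post},t})]^2$ via the chain rule, using $|f_t'(x)|=\tfrac{1}{2K_t\sqrt{x(1-x)}}$ and recognizing $\sin^2$ and $\cos^2$ of the transformed argument as $\mu_{0,t+1}$ and $1-\mu_{0,t+1}$. Your added remarks on branch parity and the nondegeneracy of the linearization are sensible refinements the paper leaves implicit.
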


\subsection{Quantum Sample Complexity Analysis for Normal-BIQAE}
\label{appendix:complexity_normal}

To determine the constant-factor reduction of the complexity bound of Normal-BIQAE relative to Normal-IQAE, we derive the sample complexity of a given stage (in place of the aforementioned accumulated sample complexity, see 
proof in Appendix~\ref{appendix:proof_sym2}):

\begin{theorem}
    \label{appendix_thm:sym2}
    At stage $t$, the quantum sample complexity required to obtain an interval estimate for $\theta$ with a radius of $\varepsilon_t$ is upper bounded by Eq.~\eqref{complexity_bound}.
\end{theorem}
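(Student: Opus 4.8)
The plan is to directly track how the posterior variance $\sigma^2_{\text{post},t}$ shrinks as measurements accumulate within stage $t$, and then translate the resulting bound on $\sigma^2_{\text{post},t}$ into a bound on the radius $\varepsilon_t$ of the interval estimate for $\theta$. First I would recall from Eq.~\eqref{eq:posterior_normal} that the posterior precision is additive, $\sigma_{\text{post},t}^{-2} = \Delta_t^{-2} + N_t/\hat\sigma_t^2$, so that accumulating $N_{\text{oracle},t} = K_t N_t$ oracle calls (equivalently $N_t$ shots) at this stage increases the precision by $N_t/\hat\sigma_t^2$ relative to the prior precision $\Delta_t^{-2}$. Since the prior at stage $t$ is itself the transformed posterior from stage $t-1$, its variance $\Delta_t^2$ is controlled by the previous radius $\varepsilon_{t-1}$ via the \texttt{PreparePrior} formula of Theorem~\ref{appendix_thm:approx_prior}, which is the crucial link that lets the bound depend on $\varepsilon_{t-1}$ rather than starting the precision count from zero (the latter is exactly what gives the Normal-IQAE bound Eq.~\eqref{complexity_bound_iqae}).

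Next I would convert between the scale of $p_{k_t}$ and the scale of $\theta$. The interval estimate for $\theta$ has half-width $\varepsilon_t = z_{\alpha_t/2}\,\sigma_{\text{post},t}\cdot |f_t'|$, where $|f_t'| = \tfrac{1}{K_t}\cdot\tfrac{1}{2\sqrt{p_{k_t}(1-p_{k_t})}}$ is the derivative of the deterministic map $f_t$ from $p_{k_t}$ to $\theta$ (using $\frac{d}{dx}\arcsin\sqrt{x} = \frac{1}{2\sqrt{x(1-x)}}$); to leading order in the normal approximation we may evaluate the Jacobian at the posterior mean and treat $\hat\sigma_t^2 = \bar X_t(1-\bar X_t)$ as $p_{k_t}(1-p_{k_t})$. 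Doing the same for stage $t-1$ gives a matching expression for $\varepsilon_{t-1}$ in terms of $\Delta_t^2$ after propagating through \texttt{PreparePrior}. Solving $\varepsilon_t = z_{\alpha_t/2}\sigma_{\text{post},t}|f_t'|$ for $\sigma^{-2}_{\text{post},t}$, substituting into the additive-precision identity, and subtracting off the prior-precision term (which corresponds to $\varepsilon_{t-1}$) yields
\begin{align*}
N_t = \frac{\hat\sigma_t^2}{|f_t'|^2}\left(\frac{z^2_{\alpha_t/2}}{\varepsilon_t^2} - \frac{z^2_{\alpha_t/2}}{\varepsilon_{t-1}^2}\right)\cdot\frac{|f_t'|^2}{z^2_{\alpha_t/2}}\cdot\frac{z^2_{\alpha_t/2}}{\hat\sigma_t^2}\text{-style bookkeeping},
\end{align*}
which after collecting the factors $\hat\sigma_t^2 |f_t'|^2 = \tfrac{1}{4K_t^2}$ and multiplying $N_t$ by $K_t$ to pass to oracle calls gives exactly Eq.~\eqref{complexity_bound}, $N_{\text{oracle},t}\le \frac{z^2_{\alpha_t/2}}{4K_t}\left(\varepsilon_t^{-2}-\varepsilon_{t-1}^{-2}\right)$.

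The main obstacle I anticipate is making the chain of normal approximations rigorous enough to obtain a genuine inequality rather than an asymptotic equality: three approximations compound here — the binomial likelihood replaced by a Gaussian in Eq.~\eqref{eq:likelihood_normal}, the plug-in variance $\hat\sigma_t^2$ replacing the true $p_{k_t}(1-p_{k_t})$, and the delta-method linearization of $f_t$ and of the \texttt{PreparePrior} transformation. I would handle this by working in the regime stated in the theorem ("$N_t$ sufficiently large that the normality in the central limit theorem applies"), treating the errors as higher-order, and being careful that the inequality direction in Eq.~\eqref{complexity_bound} is the conservative one — the stopping rule fires as soon as the credible interval half-width first drops below $\varepsilon_t$, so the realized precision slightly exceeds what is strictly needed, which is consistent with an upper bound on $N_{\text{oracle},t}$. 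A secondary subtlety is correctly identifying the prior-variance term $\Delta_t^2$ with the stage-$(t-1)$ radius through the $(K_{t+1}/K_t)^2$ Jacobian ratio in \texttt{PreparePrior}; I would verify that the Jacobian factors from the $p\mapsto\theta$ conversion at stage $t-1$ and the $\theta\mapsto p$ conversion at stage $t$ cancel against the $K_t$, $K_{t-1}$ amplification factors so that the residual term is cleanly $z^2_{\alpha_t/2}/\varepsilon_{t-1}^2$ up to the same $1/(4K_t)$ prefactor, exactly mirroring the Normal-IQAE computation in Appendix~\ref{appendix:proof_schedule3}.
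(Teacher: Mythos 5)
Your skeleton is the same as the paper's: impose the stopping criterion $z_{\alpha_t/2}\,\sigma_{\mathrm{post},t}\,|f_t'|=\varepsilon_t$, use the additive posterior precision $\sigma_{\mathrm{post},t}^{-2}=\Delta_t^{-2}+N_t/\hat\sigma_t^2$ from Eq.~\eqref{eq:posterior_normal}, and convert $\Delta_t^2$ into $\varepsilon_{t-1}$ through the \texttt{PreparePrior} Jacobian (which does cancel cleanly, as you anticipated). However, there is a genuine gap in the middle of your argument, precisely at the point you dismiss as leading-order bookkeeping. Three \emph{different} variance proxies appear in this computation: $\hat\sigma_t^2=\bar X_t(1-\bar X_t)$ in the likelihood precision, $\mu_{\mathrm{post},t}(1-\mu_{\mathrm{post},t})$ in the Jacobian $|f_t'(\mu_{\mathrm{post},t})|$ at the stopping criterion, and $\mu_{0,t}(1-\mu_{0,t})$ in $\Delta_t^2$ via \texttt{PreparePrior}. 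Identifying all three "to leading order" is not legitimate here, because the discrepancy $\bar X_t-\mu_{\mathrm{post},t}$ is proportional to $\sigma^2_{\mathrm{post},t}/\Delta_t^2$ times $(\bar X_t-\mu_{0,t})$ --- i.e.\ it is of exactly the same order in the prior precision $\Delta_t^{-2}$ as the savings term $\varepsilon_{t-1}^{-2}$ you are trying to establish. Dropping it leaves you with a residual coefficient $\bar X_t(1-\bar X_t)/[\mu_{0,t}(1-\mu_{0,t})]$ (or $\mu_{\mathrm{post},t}(1-\mu_{\mathrm{post},t})/[\mu_{0,t}(1-\mu_{0,t})]$) in front of $\varepsilon_{t-1}^{-2}$, and this ratio is not bounded below by one in general, so the claimed inequality does not follow.

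The paper closes this gap with a specific algebraic argument that your proposal is missing. Solving for $N_t$ exactly gives
\begin{align*}
N_{\text{oracle},t}=\frac{\bar X_t(1-\bar X_t)}{\mu_{\mathrm{post},t}(1-\mu_{\mathrm{post},t})}\left(\frac{z^2_{\alpha_t/2}}{4K_t\varepsilon_t^2}-\frac{\mu_{\mathrm{post},t}(1-\mu_{\mathrm{post},t})}{\Delta_t^2}K_t\right),
\end{align*}
and the multiplicative prefactor (the "retargeting" cost of centering the interval at the posterior mean rather than the sample mean) must be controlled. Using the identity $a(1-a)-b(1-b)=(a-b)(1-a-b)$ of Eq.~\eqref{eq:simple_algebra} together with the weighted-average form of $\mu_{\mathrm{post},t}$ from Algorithm~\ref{appendix_alg:BayesianUpdate}, the excess of the prefactor over one is bounded by $\bigl[\bar X_t(1-\bar X_t)-\mu_{0,t}(1-\mu_{0,t})\bigr]\cdot 4K_t^2\varepsilon_t^2/(z^2_{\alpha_t/2}\Delta_t^2)$, and this penalty exactly converts the savings coefficient from $\bar X_t(1-\bar X_t)$ to $\mu_{0,t}(1-\mu_{0,t})$, which is the one and only coefficient for which the \texttt{PreparePrior} substitution yields precisely $z^2_{\alpha_t/2}/(4K_t\varepsilon_{t-1}^2)$. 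In short: the inequality in Eq.~\eqref{complexity_bound} comes from bounding this retargeting factor, not (as you suggest) from the stopping rule overshooting the target radius; without that step your derivation produces an asymptotic identity with an uncontrolled coefficient rather than the stated upper bound.
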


Theorem~\ref{appendix_thm:sym2} indicates a reduction in the upper bound of the quantum sample complexity of Normal-IQAE Eq.~\eqref{complexity_bound_iqae} of $\frac{z^2_{\alpha_t/2}}{4K_t}\frac{1}{\varepsilon_{t-1}^2}$, which is consistent with a Bayesian perspective in two respects: (i) The measurement cost reduction depends on the accuracy achieved in the previous stage, as expected given that BIQAE leverages prior information from earlier stages to enhance the efficiency of amplitude estimation at a current stage, and (ii) in the case that the interval estimate in the previous stage is sufficiently accurate that its confidence interval radius exceeds that of the current stage $\varepsilon_{t-1}\leq\varepsilon_{t}$, the upper bound becomes less than or equal to zero, which implies that no additional sampling is required and the algorithm may proceed directly to the next stage.

\subsection{Pseudocode for Beta-BIQAE}
\label{appendix:biqae_beta}

In contrast to Normal-BIQAE, which updates the mean and variance of a normal distribution between iterations; Beta-BIQAE updates the shape parameters $a$ and $b$ of a beta distribution between iterations according to the workflow in
Fig.~\ref{appendix_fig:flow_chart_beta} and modules in Algorithms \ref{appendix_alg:BayesianUpdateBeta}--\ref{appendix_alg:PreparePriorBeta}.

\begin{figure}[h]
    \centering
    \includegraphics[width=0.5\textwidth]{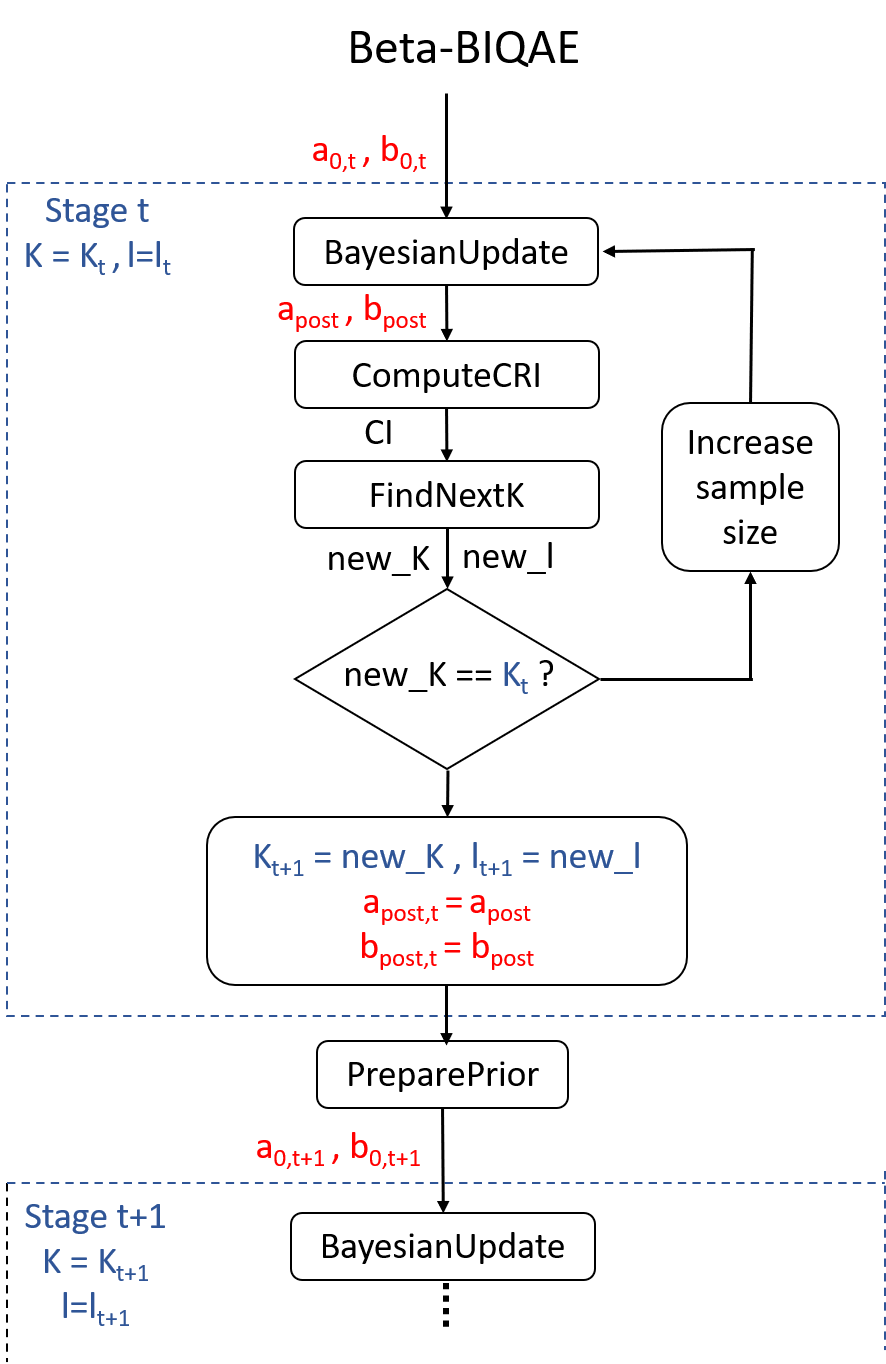}
    \caption{Detailed workflow of Beta-BIQAE, where parameter updates are emphasized in red to contrast with Normal-BIQAE.}
    \label{appendix_fig:flow_chart_beta}
\end{figure}

\begin{algorithm}[H]
\caption{BayesianUpdateBeta}
\label{appendix_alg:BayesianUpdateBeta}
\begin{algorithmic}[1]
\Require Prior shape parameters \(a_{0,t}\) and \(b_{0,t}\) for \(p_{k_t}\), number of \(\mathcal{Q}\) operators \(k_t\), sample size \(N_t\)
\Ensure Posterior shape parameters \(a_{\text{post},t}\) and variance \(b_{\text{post},t}\) for \(p_{k_t}\)

\State Construct \(k_t\) \(\mathcal{Q}\) operators into the circuit \(\mathcal{A}\)

\State Sample shots until a total of \(N_t\) shots (including existing shots at stage \(t\)) is collected. Store the sample mean as \(\bar{X}_{t}\)

\State Compute the posterior shape parameters:
\[
a_{\text{post},t} = a_{0, t} + N_t\bar{X}_{t},
\]
\[
b_{\text{post},t} = b_{0, t} + N_t\left(1-\bar{X}_{t}\right)
\]

\State \textbf{Return} \((a_{\text{post}, t}, b_{\text{post}, t})\)

\end{algorithmic}
\end{algorithm}

\begin{algorithm}[H]
\caption{ComputeCIBeta}
\label{appendix_alg:ComputeCRIBeta}
\begin{algorithmic}[1]
\Require Posterior shape parameters \((a_{\text{post}, t}, b_{\text{post}, t})\) for \(p_{k_t}\), amplification factor \(K_t\), quadrant index \(l_t\), confidence level \(\alpha_t\)
\Ensure An interval estimate \((\theta_{l,t}, \theta_{u,t})\) for \(\theta\) at confidence level \(\alpha_t\)

\State Compute the credible interval for \(p_{k_t}\) using the quantiles of the beta distribution:
\[
\left[p_{k_t}^l, p_{k_t}^u\right] = \left[\text{BetaCDF}^{-1}\left(\frac{\alpha_t}{2}, a_{\text{post}, t}, b_{\text{post}, t}\right), \text{BetaCDF}^{-1}\left(1 - \frac{\alpha_t}{2}, a_{\text{post}, t}, b_{\text{post}, t}\right)\right]
\]
where \(\text{BetaCDF}^{-1}\) denotes the inverse cumulative distribution function of the beta distribution.

\State Map the interval estimate for \(p_{k_t}\) to an interval estimate for \(\theta\) using \(f_t\):
\[
\left[\theta_t^l, \theta_t^u\right]=f_t([p_{k_t}^l, p_{k_t}^u])
\]
where 
\begin{align*}
    f_t(x) \coloneqq
    \begin{cases} 
        \frac{1}{K_t}\left(\arcsin{(\sqrt{x})} + l_t \cdot \frac{\pi}{2}\right) & \text{if \(l_t\) is even}, \\
        \frac{1}{K_t}\left(\arccos{(\sqrt{x})} + l_t \cdot \frac{\pi}{2}\right) & \text{if \(l_t\) is odd}
    \end{cases}
\end{align*}

\State \textbf{Return} \(\left[\theta_{t}^l, \theta_{t}^u\right] \)

\end{algorithmic}
\end{algorithm}

\begin{algorithm}[H]
\caption{PreparePriorBeta}
\label{appendix_alg:PreparePriorBeta}
\begin{algorithmic}[1]
\Require Posterior shape parameters \((a_{\text{post},t}, b_{\text{post},t})\) for \(p_{k_t}\) obtained at the end of stage \(t\), current and next amplification factors \(K_t, K_{t+1}\), current quadrant index \(l_t\)
\Ensure Shape parameters \((a_{0,t+1}, b_{0,t+1})\) for the beta prior of \(p_{k_{t+1}}\)

\State Sample \(R=1000\) i.i.d.~values \(\{y_1, y_2, \ldots, y_R\}\) from the beta posterior \(\text{Beta}(a_{\text{post},t}, b_{\text{post},t})\)

\State Map each sample \(y_i\) to the parameter space for \(p_{k_{t+1}}\) using the transformation:
\[
\tilde{y}_i \coloneqq \sin^2\left(K_{t+1} \cdot f_t(y_i)\right),
\]
where \(f_t\) is the mapping defined in Algorithm~\ref{appendix_alg:ComputeCRIBeta}

\State Fit a beta distribution to the transformed samples \(\{\tilde{y}_1, \tilde{y}_2, \ldots, \tilde{y}_R\}\) using maximum likelihood estimation (MLE) to obtain shape parameters \((a_{0,t+1}, b_{0,t+1})\)

\State \textbf{Return} \((a_{0,t+1}, b_{0,t+1})\)

\end{algorithmic}
\end{algorithm}

Formally, Theorem~\ref{appendix_thm:MLE_convergence_beta_prior} establishes that the beta prior fitted by Algorithm~\ref{appendix_alg:PreparePriorBeta} is the asymptotically optimal beta approximation to the exact prior distribution for \(p_{k_{t+1}}\) that minimizes the Kullback-Leibler (KL) divergence between the exact prior and the fitted beta distribution. The proof is provided in Appendix~\ref{appendix:proof_approx_prior_beta}.

\begin{theorem}
\label{appendix_thm:MLE_convergence_beta_prior}
Let \(g\) denote the density function of the exact prior distribution for \(p_{k_{t+1}}\) derived from transforming the posterior distribution for \(p_{k_t}\), and let \(\{q_{\alpha,\beta}:(\alpha, \beta)\in \Theta\coloneqq [\alpha_{\min}, \alpha_{\max}]\times[\beta_{\min}, \beta_{\max}]\}\) be a parametric family of beta distributions indexed by \((\alpha, \beta)\). Suppose \(R\) i.i.d.~samples \(\tilde{Y}_1, \dots, \tilde{Y}_R \sim g\) are generated by mapping samples from the beta posterior of \(p_{k_t}\) (as described in Algorithm~\ref{appendix_alg:PreparePriorBeta}), and let \((\hat{\alpha}_R, \hat{\beta}_R)\) be the MLE estimates obtained by maximizing the average log-likelihood over the parameter space $\Theta$
\[
(\hat{\alpha}_R, \hat{\beta}_R) = \arg\max_{(\alpha, \beta)\in \Theta} \frac{1}{R} \sum_{i=1}^R \log q_{\alpha,\beta}(\tilde{Y}_i).
\]
Define the optimal shape parameters \((\alpha^*, \beta^*)\) as
\[
(\alpha^*, \beta^*) = \arg\min_{(\alpha, \beta)\in \Theta} D_{\mathrm{KL}}(g \| q_{\alpha,\beta}),
\]
where \(D_{\mathrm{KL}}\) represents the Kullback-Leibler (KL) divergence.

Then, \((\hat{\alpha}_R, \hat{\beta}_R)\) converges almost surely to \((\alpha^*, \beta^*)\) as \(R \to \infty\). 
\end{theorem}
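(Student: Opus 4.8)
The plan is to recognize Theorem~\ref{appendix_thm:MLE_convergence_beta_prior} as an instance of the classical consistency theorem for maximum likelihood under model misspecification (a Wald-type argument). Write $M_R(\alpha,\beta)\coloneqq \frac{1}{R}\sum_{i=1}^R\log q_{\alpha,\beta}(\tilde Y_i)$ for the sample objective and $M(\alpha,\beta)\coloneqq \mathbb{E}_g[\log q_{\alpha,\beta}(\tilde Y_1)]$ for its population counterpart. Since $M(\alpha,\beta) = -D_{\mathrm{KL}}(g\|q_{\alpha,\beta}) + \int g\log g$ and the last term does not depend on $(\alpha,\beta)$, maximizing $M$ over $\Theta$ is exactly minimizing the KL divergence, so $(\alpha^*,\beta^*)=\arg\max_\Theta M$ while $(\hat\alpha_R,\hat\beta_R)=\arg\max_\Theta M_R$ by definition. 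The proof then rests on three ingredients: (i) a uniform strong law of large numbers, $\sup_{(\alpha,\beta)\in\Theta}|M_R(\alpha,\beta)-M(\alpha,\beta)|\to 0$ almost surely; (ii) continuity of $M$ together with a unique, well-separated maximizer on $\Theta$; and (iii) the standard argmax-continuity lemma combining (i) and (ii) to conclude $(\hat\alpha_R,\hat\beta_R)\to(\alpha^*,\beta^*)$ a.s.

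First I would establish the uniform law. The beta log-density is $\log q_{\alpha,\beta}(y)=(\alpha-1)\log y+(\beta-1)\log(1-y)-\log B(\alpha,\beta)$, which for $(\alpha,\beta)$ in the compact box $\Theta$ (with $\alpha_{\min},\beta_{\min}>0$) is jointly continuous in $(\alpha,\beta)$ for each $y\in(0,1)$ and is dominated by the $(\alpha,\beta)$-free envelope $H(y)\coloneqq C\bigl(1+|\log y|+|\log(1-y)|\bigr)$, since $\log B$ is continuous hence bounded on $\Theta$. With $\Theta$ compact, the uniform strong law of large numbers for a continuously-parametrized family with an integrable envelope (Jennrich's theorem) then yields (i) and the continuity of $M$, \emph{provided} $\mathbb{E}_g[H(\tilde Y_1)]<\infty$, i.e.\ $\mathbb{E}_g|\log\tilde Y_1|<\infty$ and $\mathbb{E}_g|\log(1-\tilde Y_1)|<\infty$ (we also take for granted, as is implicit in the statement, that $\int g\log g<\infty$ so the KL projection is well posed). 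This integrability claim is the substantive lemma: writing $\tilde Y_1=\sin^2(K_{t+1}f_t(Y))$ with $Y\sim\mathrm{Beta}(a_{\mathrm{post},t},b_{\mathrm{post},t})$, the change of variables $\phi=\arcsin\sqrt Y$ shows $\phi$ has density proportional to $\sin^{2a_{\mathrm{post},t}-1}\phi\,\cos^{2b_{\mathrm{post},t}-1}\phi$ on $[0,\tfrac\pi2]$, which is integrable (with at worst an integrable endpoint singularity when a shape parameter falls below $\tfrac12$); hence $K_{t+1}f_t(Y)$ has an integrable density supported on a bounded interval, and since $\int|\log\sin^2 x|\,dx$ and $\int|\log\cos^2 x|\,dx$ are finite over any bounded interval, both $\mathbb{E}_g|\log\tilde Y_1|=\mathbb{E}|\log\sin^2(K_{t+1}f_t(Y))|$ and $\mathbb{E}_g|\log(1-\tilde Y_1)|=\mathbb{E}|\log\cos^2(K_{t+1}f_t(Y))|$ are finite.

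Next I would pin down the population maximizer. The family $\{q_{\alpha,\beta}\}$ is a minimal two-parameter exponential family with sufficient statistics $\log y,\log(1-y)$, natural parameters $\alpha-1,\beta-1$, and log-partition $A(\alpha,\beta)=\log B(\alpha,\beta)$, so that
\[
D_{\mathrm{KL}}(g\|q_{\alpha,\beta}) = \mathrm{const}-(\alpha-1)\,\mathbb{E}_g[\log\tilde Y_1]-(\beta-1)\,\mathbb{E}_g[\log(1-\tilde Y_1)]+A(\alpha,\beta),
\]
an affine function of $(\alpha,\beta)$ plus $A$; because $A$ is strictly convex (its Hessian is the covariance matrix of the non-degenerate statistics $\log y,\log(1-y)$), the right-hand side is strictly convex in $(\alpha,\beta)$, and on the compact convex set $\Theta$ it therefore attains a unique minimizer $(\alpha^*,\beta^*)$, which is automatically well-separated: $\inf\{D_{\mathrm{KL}}(g\|q_{\alpha,\beta}):\|(\alpha,\beta)-(\alpha^*,\beta^*)\|\ge\delta\}>D_{\mathrm{KL}}(g\|q_{\alpha^*,\beta^*})$ for every $\delta>0$. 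Equivalently, $(\alpha^*,\beta^*)$ is the unique, well-separated maximizer of $M$ on $\Theta$.

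Finally I would invoke the argmax-continuity lemma. On the almost-sure event where $\sup_\Theta|M_R-M|\to0$, let $(\bar\alpha,\bar\beta)$ be any subsequential limit of $(\hat\alpha_R,\hat\beta_R)$, which exists since $\Theta$ is compact. Passing to the limit in $M_R(\hat\alpha_R,\hat\beta_R)\ge M_R(\alpha^*,\beta^*)$ and using uniform convergence together with continuity of $M$ gives $M(\bar\alpha,\bar\beta)\ge M(\alpha^*,\beta^*)$, so $(\bar\alpha,\bar\beta)=(\alpha^*,\beta^*)$ by uniqueness of the maximizer; since every subsequence of $(\hat\alpha_R,\hat\beta_R)$ has a further subsequence converging to $(\alpha^*,\beta^*)$, the full sequence converges to $(\alpha^*,\beta^*)$ almost surely. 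I expect the main obstacle to be the envelope-integrability step: it is the only place that uses the concrete nature of $g$ --- a push-forward of a beta law under the non-monotone map $y\mapsto\sin^2(K_{t+1}f_t(y))$, whose interior critical points can produce square-root singularities in $g$ and whose endpoint behavior is inherited from the beta density --- and confirming that the $\log$-moments remain finite uniformly across all admissible shape parameters $a_{\mathrm{post},t},b_{\mathrm{post},t}$ requires the careful change-of-variables bookkeeping sketched above.
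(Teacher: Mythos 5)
Your proposal is correct and follows the same overall strategy as the paper: consistency of the quasi-MLE toward the KL projection of $g$ onto the beta family, reduced to (a) a uniform law of large numbers over the compact box $\Theta$ with an integrable log-envelope and (b) uniqueness of the KL minimizer. The paper packages steps (i)--(iii) by citing Theorem~2.2 of White (1982) and verifying its assumptions A1--A3, whereas you reprove the ULLN (via Jennrich) and the argmax-continuity lemma from scratch; that is a presentational rather than substantive difference. The two supporting lemmas are where your route genuinely diverges. For the log-moment integrability, the paper first derives the explicit density $g(\tilde y)$ of the transformed sample and bounds it by $\bigl[B(a_{\mathrm{post},t},b_{\mathrm{post},t})\sqrt{\tilde y(1-\tilde y)}\bigr]^{-1}$ (using $a_{\mathrm{post},t},b_{\mathrm{post},t}\ge \tfrac12$), reducing to $\int_0^1 \frac{-\log\tilde y}{\sqrt{\tilde y(1-\tilde y)}}\,d\tilde y=2\pi\log 2$; you instead change variables to the angle $\phi=\arcsin\sqrt{Y}$ and integrate $|\log\sin^2|$ against the angular density, which neatly sidesteps the non-monotonicity of $y\mapsto\sin^2(K_{t+1}f_t(y))$ that the paper's single-branch density formula glosses over. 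For uniqueness, the paper proves strict log-convexity of $B(\alpha,\beta)$ directly via H\"older's inequality and its equality condition, while you invoke the minimal-exponential-family fact that the Hessian of $\log B$ is the nondegenerate covariance of $(\log y,\log(1-y))$; both are standard and valid. The only soft spots in your write-up are minor: you assume $\int g\log g<\infty$ rather than verifying it (the paper checks this as part of A3a, and it follows from the same log-moment bounds), and the ``at worst an integrable endpoint singularity'' step deserves the one-line check that $\int_0^1 \phi^{2a-1}|\log\phi|\,d\phi<\infty$ for $a>0$; neither is a real gap.
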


Note that, although the beta prior fitted in Algorithm~\ref{appendix_alg:PreparePriorBeta} is the best approximation within the beta family, it is not guaranteed to be close to the exact prior. In principle, it is possible that the beta family is not expressive enough, and that even the optimal beta approximation cannot closely represent the exact prior. However, this phenomenon is unlikely to occur in practice for BIQAE: Informally, for any smooth and unimodal distribution on \([0,1]\), there exists a beta distribution that can approximate it closely. We leave the rigorous theoretical justification of this claim to future work. Fig.~\ref{appendix_fig:beta_approx} shows the quality of this approximation with an example.
\begin{figure}[ht] 
    \centering
    \includegraphics[width=1.\textwidth]{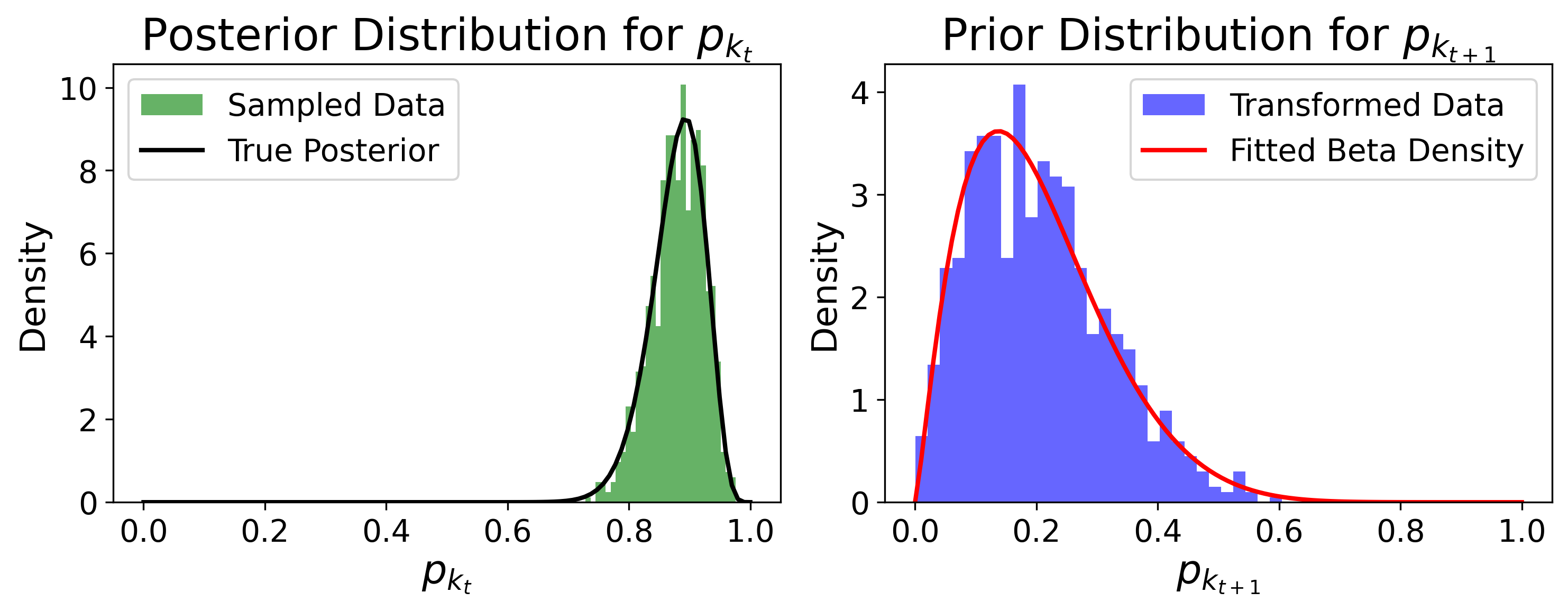}
    \caption{Beta approximation to the exact prior for $p_{k_{t+1}}$. We first generate (left) sampled data (green bars) from the true posterior (black line) and then transform the sampled data as $\sin^2{(K_{t+1}\cdot f_t(\cdot))}$ to obtain (right) the transformed data (blue bars). Finally, we fit a beta distribution (red line) to the transformed data. The close alignment between the red line and the blue bars demonstrates the accuracy of the approximation.}
    \label{appendix_fig:beta_approx}
\end{figure}

\subsection{Equivalence of Interval Estimates}
\label{appendix:eq_intv}

To demonstrate that IQAE can be derived as an implementation of BIQAE with a noninformative prior, we establish the similitude of the interval estimates in IQAE-CH and Normal-BIQAE and in IQAE-CP and Beta-BIQAE, as follows:

Consider Normal-BIQAE with a noninformative prior \(\Delta_t^2 \to \infty\). A Bayesian update according to Algorithm~\ref{appendix_alg:BayesianUpdate} yields a posterior distribution characterized by
\begin{align*}
    \mu_{\text{post},t} \to \bar{X}_t, \quad
    \sigma_{\text{post},t}  \to \sqrt{\frac{\bar{X}_t(1-\bar{X}_t)}{N_t}}
\end{align*}
regardless of \(\mu_{0,t}\). Approximation of the z-value as
\begin{align*}
    z_{\alpha_t/2} \approx \sqrt{2 \log\left(\frac{2}{\alpha_t}\right)}
\end{align*}
then yields the credible interval
\begin{align}
    \label{appendix_ci:pkt_non_info}
    \mu_{\text{post},t} \pm z_{\alpha_t/2} \cdot \sigma_{\text{post},t} \approx \bar{X}_t \pm \sqrt{\frac{2\bar{X}_t(1-\bar{X}_t)}{N_t}\log\left(\frac{2}{\alpha_t}\right)},
\end{align}
which is directly comparable to the Chernoff-Hoeffding confidence interval for $p_{k_t}$
\begin{align}
    \label{appendix_ci:pkt_iqae_ch}
    \bar{X}_t \pm \sqrt{\frac{1}{2N_t}\log\left(\frac{2}{\alpha_t}\right)},
\end{align}
where the truncation function $\mathrm{trunc}_{[0,1]}(\cdot)$ is omitted for simplicity.
The credible interval of Normal-BIQAE Eq.~\eqref{appendix_ci:pkt_non_info} and the confidence interval of IQAE-CP Eq.~\eqref{appendix_ci:pkt_iqae_ch} are thus structurally similar, possessing identical centers with a discrepancy only in first factor under the square root that enables the Chernoff-Hoeffding confidence interval to be slightly more conservative since \(\frac{1}{4}\) is always no less than \(\bar{X}_t(1-\bar{X}_t)\), such that construction of interval estimates with the two methods is highly similar; we thus do not distinguish linguistically between the Normal-IQAE method corresponding to the approach entailed by Eq.~\eqref{appendix_ci:pkt_non_info} and the IQAE-CH method corresponding to the approach entailed by Eq.~\eqref{appendix_ci:pkt_iqae_ch} in this paper. 

Likewise, the credible interval of Beta-BIQAE according to Algorithm~\ref{appendix_alg:ComputeCRIBeta} 
\begin{equation}
\left[p_{k_t}^l, p_{k_t}^u\right] = \left[\text{BetaCDF}^{-1}\left(\frac{\alpha_t}{2}, a_{\text{post}, t}, b_{\text{post}, t}\right), \text{BetaCDF}^{-1}\left(1 - \frac{\alpha_t}{2}, a_{\text{post}, t}, b_{\text{post}, t}\right)\right]
\end{equation}
is directly comparable to the Clopper-Pearson confidence interval used in IQAE-CP 
\begin{equation}
\left[p_{k_t}^l, p_{k_t}^u\right] = \left[\text{BetaCDF}^{-1}\left(\frac{\alpha_t}{2}, N_t\bar{X}_{t}, 1 + N_t\left(1-\bar{X}_{t}\right)\right), \text{BetaCDF}^{-1}\left(1 - \frac{\alpha_t}{2}, 1 + N_t\bar{X}_{t},N_t\left(1-\bar{X}_{t}\right)\right)\right]
\end{equation}
for a noninformative prior $a_{0,t}=b_{0,t}=0.5$, in particular for relatively large sample sizes $N_t$ (namely, for  \textit{Jeffreys prior} with corresponding credible interval \textit{Jeffreys interval}). Given the absence of  significant differences between practical applications of the two approaches, we do not distinguish linguistically between the IQAE-CP approach using Clopper-Pearson confidence intervals and the Beta-IQAE approach using Jeffreys intervals in this paper.

\subsection{Settings for Quantum Sample Complexity Scaling Analysis}
\label{appendix:circ_math_simul}

As a proof-of-concept, numerical simulations of the quantum sample complexity for arbitrary quantum states are performed using a simplified circuit design with mock \(\mathcal{Q}\)-operators that eliminate the impact of circuit depth on classical simulation runtime---a simplification that becomes essential as the target accuracy increases, implying increasingly prohibitively time-consuming numbers of $\mathcal{Q}$-operators. The circuit, shown in Fig.~\ref{appendix_fig:mock_Q}, employs a single $R_Y(2k\theta)$ gate to emulate $\mathcal{Q}^k$, in which the behavior of the circuit is constructed with knowledge $\theta$, but the algorithm is executed as if $\theta$ were unknown.

\begin{figure}
\begin{center}
\begin{quantikz}
\lstick{$q_0$} & \gate{R_Y(\theta)} 
\gategroup[1,steps=1,style={dashed, rounded corners, inner sep=3pt}]{$A$} & \qw 
& \gate{R_Y(2k\theta)} 
\gategroup[1,steps=1,style={dashed,rounded corners, inner sep=3pt}]{$Q^k$} 
& \qw
\end{quantikz}
\end{center}
\caption{Circuit diagram for the quantum sample complexity scaling analysis.}
    \label{appendix_fig:mock_Q}
\end{figure}
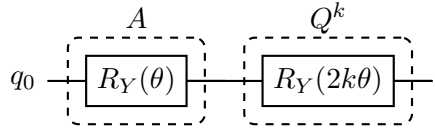 
Note that although this mock $\mathcal{Q}$ operator is used to facilitate classical simulations, all presented circuit depths and quantum sample complexities accurately account for the internal structure of the $\mathcal{Q}$ operator. Specifically, the total circuit depth is calculated as
\begin{align*}
    \text{total circuit depth} = \text{circuit depth of }\mathcal{A} + k \cdot \text{circuit depth of }\mathcal{Q},
\end{align*}
and, since each $\mathcal{Q}$ operator contains two $\mathcal{A}$ operators, the quantum sample complexity weights each set of $N_k$ shots of the quantum circuit by either $k$ (by the IQAE convention) or $K=2k+1$ (by the rigorous count of oracle calls).

\subsection{Settings for Molecular Ground-State Energy Estimation}
\label{appendix:vqe_exps}
To identify the molecular ground-state wavefunction, the electronic Hamiltonian for each molecule and bond length considered is generated in Qiskit Nature \cite{qiskit2024} via the PySCF driver \cite{sun2018pyscf} in parity mapping \cite{seeley2012bravyi}. Frozen core orbitals and qubit tapering \cite{bravyi2017tapering} are then employed to reduce the required number of qubits to represent the Hamiltonian.\footnote{Note Jordan-Wigner mapping \cite{jordan1928uber} without qubit tapering is employed for $\text{H}_2$.}
The ground-state wavefunction is then determined via the Variational Quantum Eigensolver (VQE) \cite{peruzzo2014variational,cao2019quantum} for the efficient SU(2) Ansatz with full entanglement, as depicted in Fig.~\ref{fig:efficientsu2}; the parameters of the Ansatz are optimized classically with Sequential Least Squares Programming (SLSQP); and the expectation value corresponding to each parameter set is computed classically via the Qiskit Runtime Estimator primitive service with zero shots (see additional details in Table \ref{tab:size_reduction}). In all molecular ground-state energy calculations, the electronic energy is supplemented by the classically calculated nuclear repulsion energy and core electron energy as determined with PySCF.      

    \begin{figure}
        \begin{center}
\begin{quantikz}
\lstick{$q_0$} & \gate{R_Y\left(\theta_0\right)} & \gate{R_Z\left(\theta_3\right)} & \qw  
    \gategroup[3,steps=7,style={dashed,rounded corners, inner sep=3pt}]{Repeated Layer $i$} & \ctrl{1} & \qw      & \ctrl{2} &
    \gate{R_Y\left(\theta_{6i}\right)} & \gate{R_Z\left(\theta_{6i+3}\right)} & \qw \\

\lstick{$q_1$} & \gate{R_Y\left(\theta_1\right)} & \gate{R_Z\left(\theta_4\right)} & \qw & \targ{}  & \ctrl{1}   & \qw  
    & \gate{R_Y\left(\theta_{6i+1}\right)} & \gate{R_Z\left(\theta_{6i+4}\right)} & \qw \\

\lstick{$q_2$} & \gate{R_Y\left(\theta_2\right)} & \gate{R_Z\left(\theta_5\right)} & \qw & \qw      & \targ{}  & \targ{}    
    & \gate{R_Y\left(\theta_{6i+2}\right)} & \gate{R_Z\left(\theta_{6i+5}\right)} & \qw
\end{quantikz}
\end{center}
        \caption{Efficient SU(2) Ansatz, as defined by a single base layer and repeated layers $i=1,2,\ldots,L$. The Ansatz for a single repeated layer $L=1$ is shown.}
        \label{fig:efficientsu2}
    \end{figure}
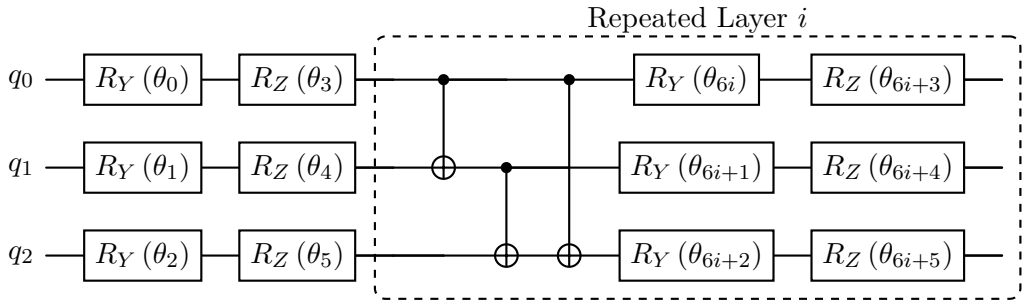

\begin{table*}
    \centering
\begin{tabular}{ccccc}
\toprule 
\multirow{2}{*}{\textbf{Molecule}} & \multicolumn{2}{c}{\textbf{Qubit Number}} & \multirow{2}{*}{\textbf{Ansatz Layer Number}} & \multirow{2}{*}{\textbf{Max. Iter. Number}}\tabularnewline
\cmidrule{2-3} 
 & \textbf{Before Taper.} & \textbf{After Taper.} &  & \tabularnewline
\midrule
\midrule 
$\text{H}_{2}$ & 4 & N/A & 12 & 300\tabularnewline
\midrule 
LiH & 10 & 6 & 12 & 300\tabularnewline
\midrule 
HF & 10 & 6 & 12 & 300\tabularnewline
\midrule 
$\text{BeH}_{2}$ & 12 & 7 & 15 & 500\tabularnewline
\bottomrule
\end{tabular}
    \caption{Number of qubits (before and after qubit tapering), layers, and maximum iterations used to determine the electronic ground-state wavefunction for each molecule considered. The number of qubits is fixed according to the form of the mapped Hamiltonian, and the number of Ansatz layers and maximum iterations is chosen to converge the ground-state wavefunction.}
    \label{tab:size_reduction}
\end{table*}

\section{Extended Results}
\label{appendix:ext_res}

\subsection{Linear Regression Analysis} 
\label{ext_res:scaling_analysis}

As detailed in Table~\ref{tab:loglog_regression}, regression of the quantum sample complexity as a function of the median absolute error in Fig.~\ref{fig:compare_all} suggests Beta-BIQAE is robust, improves upon Classical QAE quadratically, and features the lowest quantum sample complexity of all QAE methods considered. Linear regression of the log-log data to produce slope, intercept, and $R^2$ values suggests all of these QAE methods (with the possible exception of FQAE) reliably obey the expected power law scaling, with $R^2$ values exceeding $0.99$. The slopes indicate that methods based on amplitude amplification offer a quadratic advantage over Classical QAE, as, for example, Beta-BIQAE's slope of $-1.0$ and Classical QAE's slope of $-2.0$ correspond to quantum sample complexities of $N_\text{oracle}\in \mathcal{O}(\varepsilon^{-1})$ and $\mathcal{O}(\varepsilon^{-2})$, respectively. Similarly, the intercept, which represents the logarithm of the constant overhead, implies that, of all of the QAE methods considered that achieve the quadratic advantage, Beta-BIQAE has the lowest constant factor of the quantum sample complexity.

\begin{table}
\centering
\caption{Linear regression results for the log-log relationship between the quantum sample complexity and the estimation error ($\log_{10} N_\text{oracles}\sim \log_{10} \varepsilon$).}
\label{tab:loglog_regression}
\begin{tabular}{lccc}
\toprule
\textbf{Method} & \textbf{Slope} & \textbf{Intercept} & \textbf{$R^2$} \\
\midrule
Beta-BIQAE      & $-1.0088$  & $0.0211$    & $0.9999$ \\
IQAE-CP        & $-1.0044$  & $0.1253$    & $0.9999$ \\
BAE             & $-1.0137$  & $0.1089$    & $0.9996$ \\
Canonical QAE   & $-1.0058$  & $0.4847$    & $1.0000$ \\
Classical QAE   & $-1.9821$  & $-1.1693$   & $0.9996$ \\
IQAE\_CH        & $-1.0056$  & $0.3930$    & $0.9999$ \\
MLQAE\_EIS      & $-0.9820$  & $0.5626$    & $0.9999$ \\
MLQAE\_LIS      & $-1.3375$  & $-0.3395$   & $0.9979$ \\
FQAE            & $-1.2091$  & $3.6295$    & $0.9677$ \\
QAES            & $-0.9973$  & $6.8670$    & $0.9999$ \\
\bottomrule
\end{tabular}
\end{table}

\subsection{Interval Radii and Coverage Rates}
\label{appendix:biqae_vs_bae}

Beta-BIQAE is also found to provide more accurate and reliable quantum amplitude estimates than IQAE-CP and BAE from an interval radius and coverage rate perspective, where the radius of the interval estimate for BAE is approximated as the product of the critical value $v$ and standard deviation and where the \textit{coverage rate} is defined as the proportion of repeated independent experiments in which the interval contains the true parameter value. As shown in Fig.~\ref{appendix_fig:compare_intervals_median} (left panel), for the critical value $v=1.96$, which corresponds to 95\% coverage under the assumption that the sampling distribution of the point estimate is normal, both Beta-BIQAE and IQAE-CP have a higher coverage rate for all target accuracies considered than BAE, which provides the narrowest intervals (and thus the highest accuracy) but exhibits a coverage rate of only $\le87\%$. Likewise, as shown in Fig.~\ref{appendix_fig:compare_intervals_median} (right panel), for $v=10$, which corresponds to coverage of all but $1.523971\times10^{-21}\%$ for a normal distribution, Beta-BIQAE and IQAE-CP have higher coverage rates than BAE, which features coverage rates as low as 87.5\% and the largest intervals. Note in the $v=10$ case Beta-BIQAE yields the smallest intervals and frequently the highest coverage rates for all cases considered, which is indicative of an advantageous balance between a narrow interval radius and a high coverage rate ({\em i.e.}, accuracy and reliability). The observed performance difference between BIQAE and BAE is also consistent with BIQAE and BAE's distinct methodological approaches. Namely, although BIQAE and BAE both employ Bayesian updates within an iterative QAE framework to extract information from measurement outcomes, (i) BIQAE adopts the $K$-scheduling strategy of IQAE to provide interval estimates for the amplitude in place of BAE's Bayesian experimental design strategy that schedules $K$ by maximizing the utility function to obtain a point estimate with its standard deviation and (ii) BIQAE utilizes conjugate priors to perform Bayesian updates (in the case of the beta distribution, with the advantage of small-sample efficiency enhancement) in place of BAE's Sequential Monte Carlo.

\begin{figure}[ht]
    \centering   
    \includegraphics[width=\textwidth]{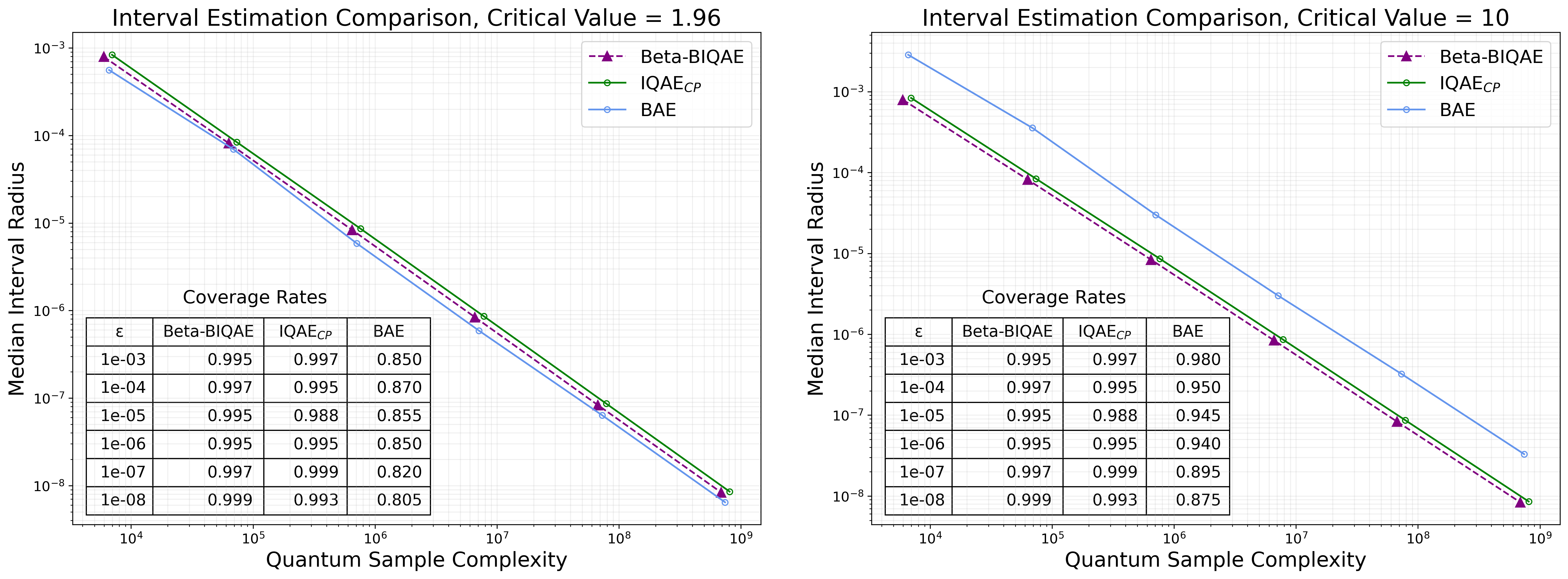}
    \caption{Comparison of the median radius and the coverage rate for BAE and IQAE-based estimates for $a=0.5$ from an interval estimation perspective. The critical value for BAE is selected as (left) $1.96$ or (right) $10$.}\label{appendix_fig:compare_intervals_median}
\end{figure}

\subsection{Robustness of Beta-BIQAE Advantage Over IQAE-CP}\label{appendix:biqae_vs_iqae}

Fig.~\ref{fig:angle_plots} shows that the near-constant-factor improvement in the sample complexity of Beta-BIQAE over IQAE-CP for all quantum amplitude values $a$ shown for $\varepsilon=10^{-8}$ in Fig.~\ref{fig:all_angle_plot_epsilon_1e-08} persists for target accuracies $\varepsilon\in\{10^{-2},10^{-3},10^{-4},10^{-5},10^{-6},10^{-7}\}$. The cost reduction of Beta-BIQAE over IQAE-CP is consistently 10.5--12.8\% and increases for higher target accuracies. Note deviations from the scaling laws for both Beta-BIQAE and IQAE-CP often appear at the same $a$ values across  target accuracies, an interesting area for further study.

\begin{figure}[htbp]
    \centering
    \includegraphics[width=\linewidth]{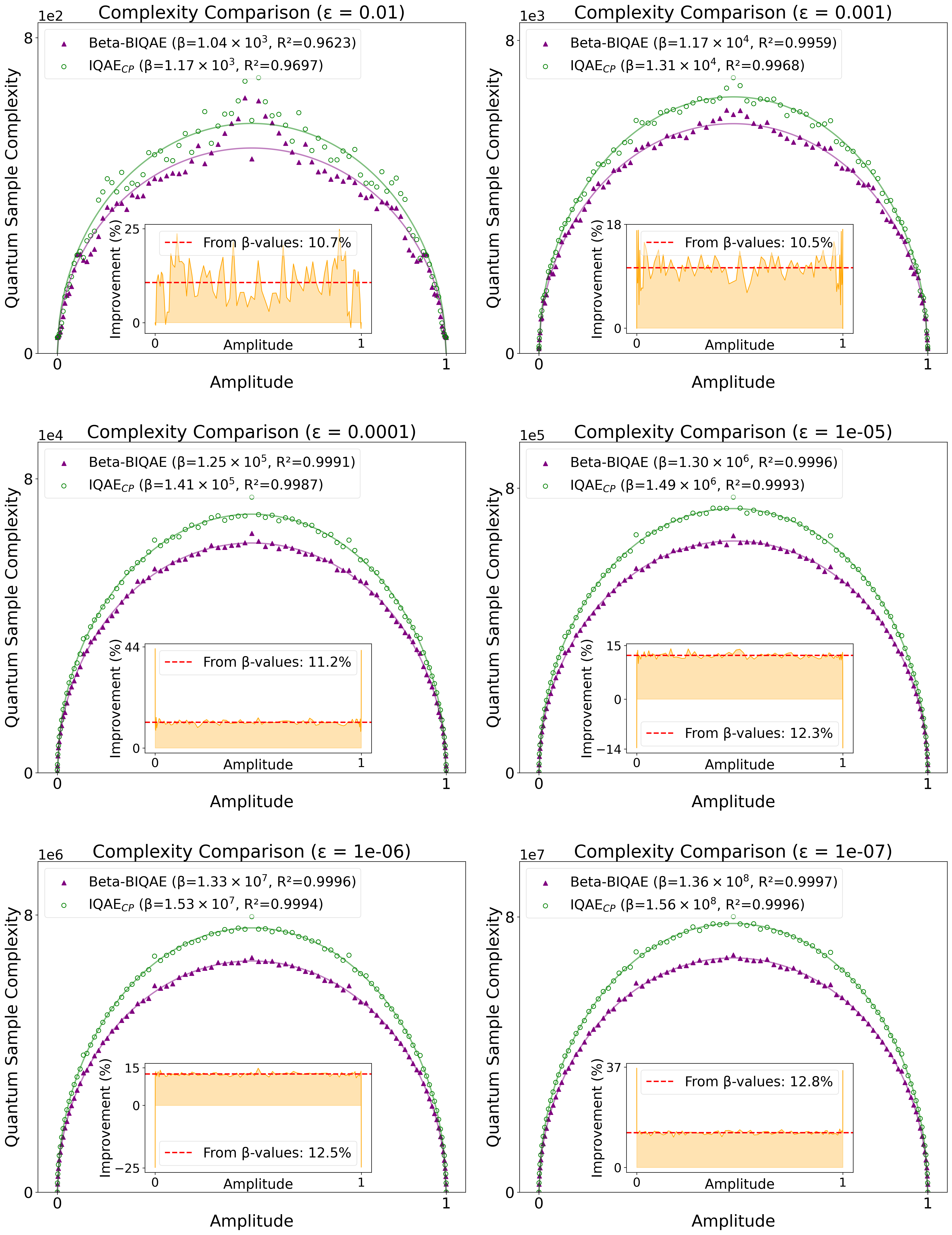}
    \caption{Comparison of the quantum sample complexities of Beta-BIQAE and IQAE-CP as a function of the quantum amplitude for target accuracies $\varepsilon\in\{10^{-2},10^{-3},10^{-4},10^{-5},10^{-6},10^{-7}\}$, as in Fig.~\ref{fig:all_angle_plot_epsilon_1e-08}.}
    \label{fig:angle_plots}
\end{figure}

\subsection{Circuit Depth Comparison}
\label{ext_res:vqe_circ_depth}

Consideration of the quantum circuits required for molecular ground-state energy estimation suggests that the average number of Grover $\mathcal{Q}$ operators needed to estimate the expectation value of Pauli terms is similar for Beta-BIQAE and IQAE-CP for all molecules considered. Given that the number of such operators is the major contributor to the quantum circuit depth in both methods, these results suggest Beta-BIQAE's near-constant-factor improvement over IQAE-CP is largely accomplished without growth of the cost of quantum computation.

\begin{figure*}
    \centering
    \begin{tabular}{cc}  
        \includegraphics[width=0.5\textwidth]{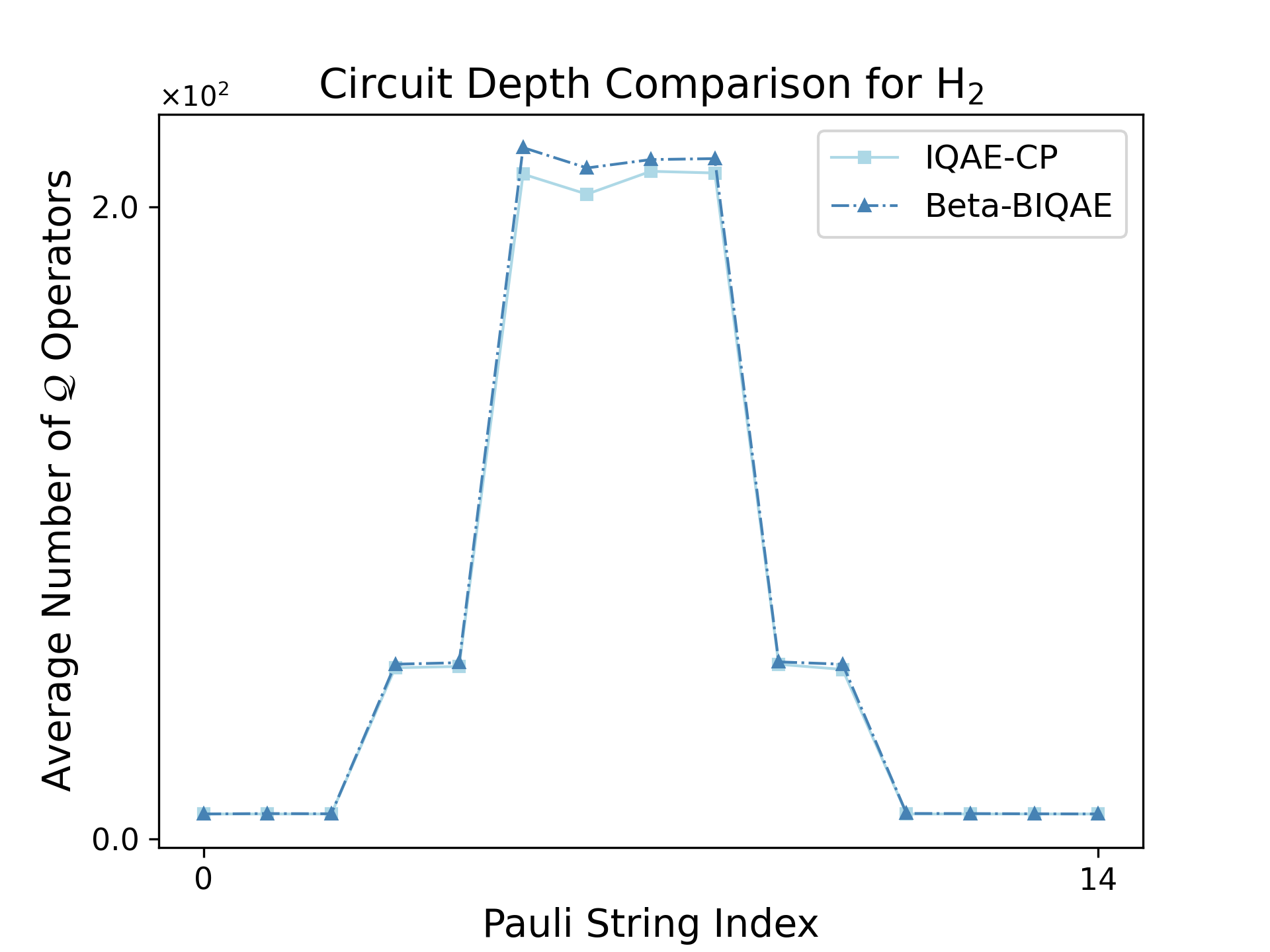} &  
        \includegraphics[width=0.5\textwidth]{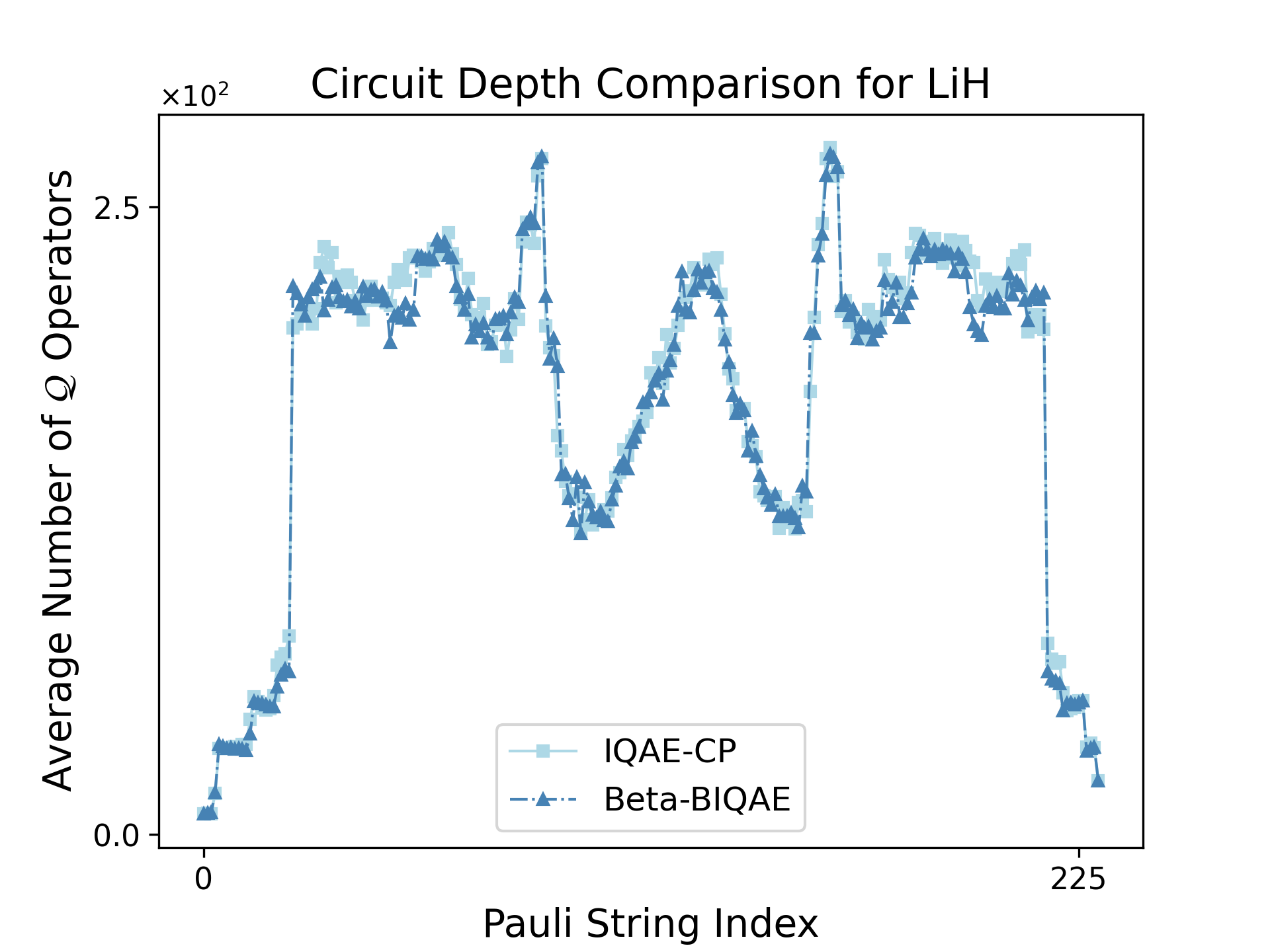} \\

        \includegraphics[width=0.5\textwidth]{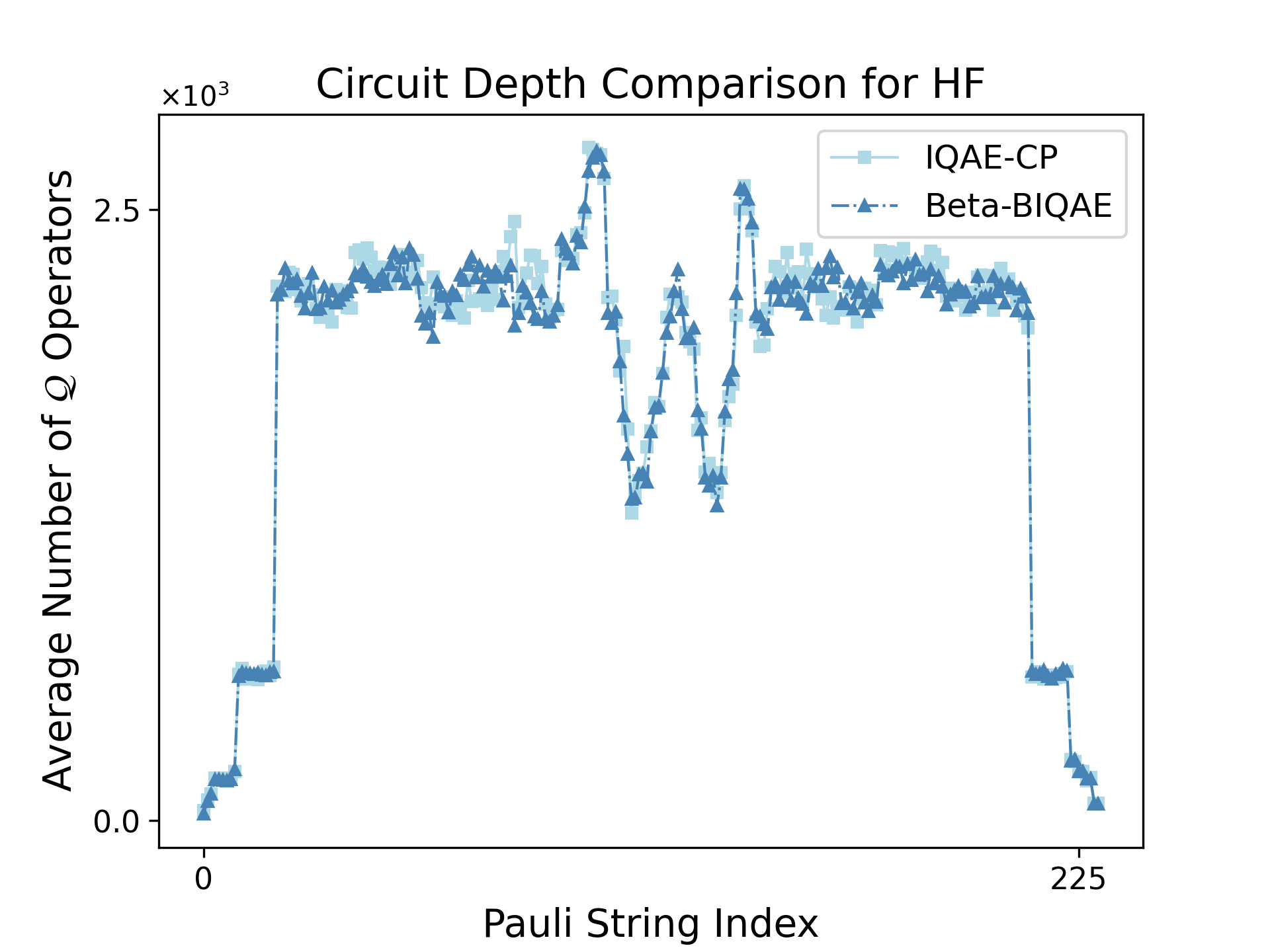} &  
        \includegraphics[width=0.5\textwidth]{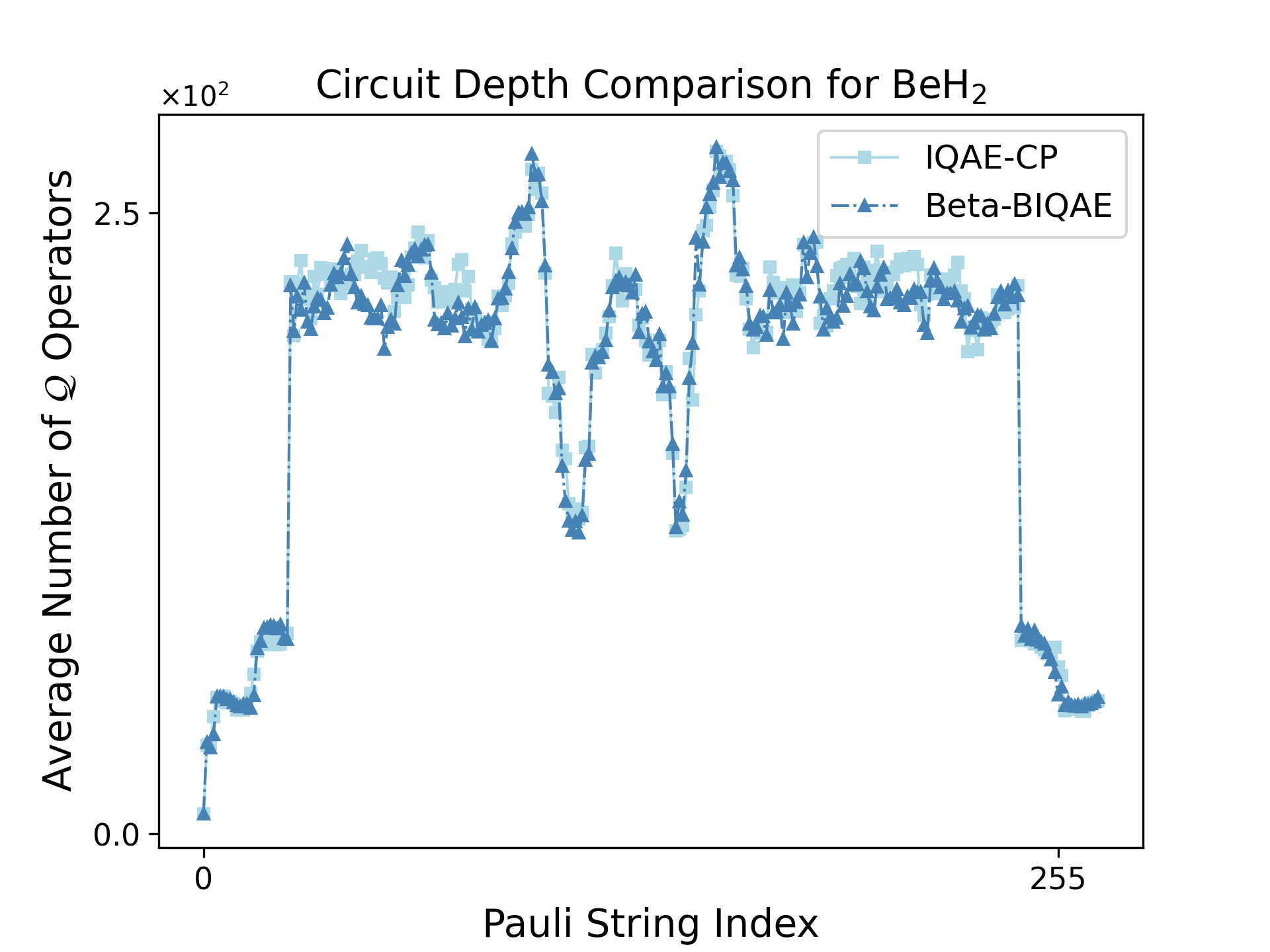} 
    \end{tabular}
    
    \caption{Comparison of the number of Grover operators in the last stage required to estimate Pauli string expectation values with Beta-BIQAE (medium blue triangles) and IQAE-CP (light blue squares). The number is averaged over 200 repetitions.}
    \label{fig:comparison_dep}
\end{figure*}

\subsection{Comparison of BIQAE and IQAE with beta and normal priors}\label{appendix:beta_vs_normal}

\begin{figure}[htbp]
    \centering
    \includegraphics[width=0.8\linewidth]{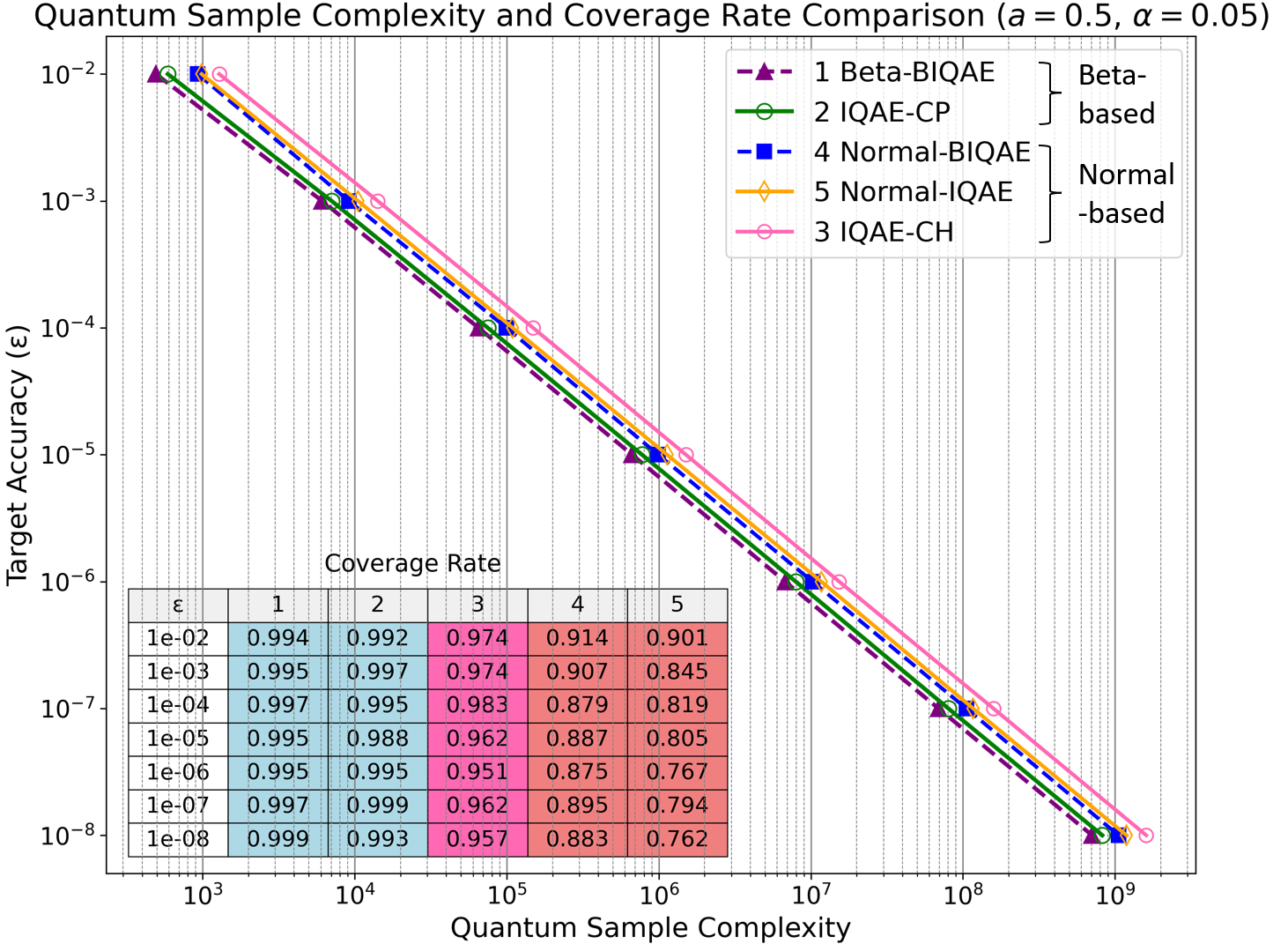}
    \caption{Comparison of quantum sample complexity and coverage rate between beta-based methods (Beta-BIQAE and IQAE-CP) and normal-based methods (Normal-BIQAE, Normal-IQAE, and IQAE-CH). Legends are ordered from the lowest to the highest quantum sample complexity as a function of target accuracy. Inset table cells are shaded to indicate low coverage associated with failure to achieve the target coverage rate of 0.95 (red), intermediate coverage between 0.95 and 0.99 (pink), and high coverage above 0.99 (blue).
}
    \label{fig:beta_vs_normal}
\end{figure}

Fig.~\ref{fig:beta_vs_normal} supports the predicted superiority of the quantum sample complexity and coverage rate of beta-based methods (Beta-BIQAE and IQAE-CP, which employ informative and non-informative beta priors, respectively) over normal-based methods (Normal-BIQAE, Normal-IQAE, and IQAE-CH, which employ an informative normal prior, a non-informative normal prior, and the Chernoff--Hoeffding bound, respectively, where the Chernoff--Hoeffding bound is asymptotically equivalent to the noninformative normal prior up to a more conservative constant factor, as shown in Appendix~\ref{appendix:eq_intv}) in the small per-stage sample size limit. For all target accuracy values considered, the beta-based methods exhibit lower complexity than the normal-based methods. The coverage rate for beta-based methods is likewise superior to that of normal-based methods, notwithstanding the imposition in the Normal-IQAE and Normal-BIQAE implementations of a minimum sample size of at least 30 shots per stage to avoid zero-variance estimates associated with diminished coverage rates and algorithmic failure. For all target accuracies considered, beta-based Beta-BIQAE and IQAE-CP---but not normal-based Normal-BIQAE and Normal-IQAE---achieve the target coverage rate of 0.95. This distinction between the two classes of methods can be attributed to the normal-based methods' variance estimator $\hat{\sigma}_t^2 = \bar{X}_t(1-\bar{X}_t)$, which requires a substantially larger per-stage sample size to have a comparable accuracy to that of the beta-based IQAE or BIQAE settings. IQAE-CH mitigates this issue by replacing the variance estimator with the upper bound $\bar{X}_t(1-\bar{X}_t)\le \tfrac{1}{4}$, which stabilizes the coverage rate at the cost of increased complexity due to its conservativeness, in keeping with its coverage intermediate between the remaining normal- and beta-based methods considered. By contrast, the use of beta priors in Beta-BIQAE and IQAE-CP eliminates the small-sample instability entirely, which enables both low quantum sample complexities and high coverage rates. 

Note a further observation from Fig.~\ref{fig:beta_vs_normal} is that Normal-BIQAE consistently improves upon Normal-IQAE, as expected.

\FloatBarrier

\section{Proofs}
\label{appendix:proofs}
\subsection{MLE and AMSE of QAE}
\label{appendix:proof_thm123}

{\em Proofs of Theorem~\ref{appendix_thm:knownbase} and Appendix Theorems~\ref{appendix_thm:ClassicalQAE} and \ref{appendix_thm:knownbase}.} Given that $\bar{X}_k$ is the MLE of $p_{k}$ and the quantum amplitude is
$a = \sin^2{f_k(p_k)}$ given by Eq.~\eqref{eq:mul_sol},
where $f_k(\cdot)$ is defined as
\begin{align}\label{eq:fk}
    f_k(x) \coloneqq
    \begin{cases} 
        \frac{1}{2k+1}\left(\arcsin{(\sqrt{x})} + l(k) \cdot \frac{\pi}{2}\right) & \text{if \(l(k)\) is even}, \\
        \frac{1}{2k+1}\left(\arccos{(\sqrt{x})} + l(k) \cdot \frac{\pi}{2}\right) & \text{if \(l(k)\) is odd},
    \end{cases}
\end{align}
the MLE of $a$ is $\hat{a}=\sin^2{f_k(\bar{X}_k)}$ according to the invariance principle.

Likewise, $\bar{X}_k$ is an unbiased estimator of $p_{k}$, and its MSE is readily calculated as  
\begin{align*}
    \text{MSE}_{\bar{X}_k}(p_k)=\frac{1}{N_k}p_k(1-p_k).
\end{align*}
The AMSE of $\hat{a}$ can be derived by the delta method as follows:
\begin{align*}
    \text{AMSE}_{\hat{a}}(a) & =\frac{1}{N_k}p_k(1-p_k)[(\sin^2f_k(p_k))']^2 \\
    & = \frac{1}{N_k}p_k(1-p_k)\cdot4\sin^2{f_k(p_k)}\cos^2{f_k(p_k)}[f_k'(p_k)]^2 \\
    & = \frac{1}{N_k}p_k(1-p_k)\cdot 4a(1-a)[f_k'(p_k)]^2.
\end{align*}
Since the derivative of $f_k(\cdot)$ is given by 
\begin{align}\label{eq:deriv_fk}
    f_k'(x) =
    \begin{cases} 
        \frac{1}{2(2k+1)}\frac{1}{\sqrt{x(1-x)}} & \text{if \(l(k)\) is even}, \\
        -\frac{1}{2(2k+1)}\frac{1}{\sqrt{x(1-x)}} & \text{if \(l(k)\) is odd}, 
    \end{cases}
\end{align}
we obtain the AMSE expression given in Theorem~\ref{appendix_thm:knownbase}, namely,   
\begin{align*}
    \text{AMSE}_{\hat{a}}(a) = \frac{1}{(2k+1)^2}\frac{1}{N_k}a(1-a).
\end{align*}
The quantum sample complexities of Classical QAE (Appendix Theorem~\ref{appendix_thm:ClassicalQAE}) and Amplified Amplitude Estimation (Appendix Theorem~\ref{appendix_thm:knownbase}) then follow as special cases.

\subsection{Schedule Number \texorpdfstring{$T$}{T} of Base-3 Hybrid Scheduled IQAE}

\label{appendix:proof_T}

{\em Proof of Eq.~\eqref{eq:T}.} Let $\varepsilon_T$ denote the radius of the confidence interval for $\theta$ at the end of the algorithm, where the confidence interval must be fully contained within one of the reference intervals at Stage $T-1$. Since each reference interval at Stage $T-1$ has length $\frac{1}{3^T}\frac{\pi}{2}$, we have
\begin{align*}
    2\varepsilon_t \leq \frac{1}{3^T}\frac{\pi}{2}.
\end{align*}
The Taylor expansion and definition $a=\sin^2\theta$ imply that the radius $\varepsilon$ of the final confidence interval for $a$ satisfies
\begin{align*}
    \varepsilon = \left(2\sqrt{a(1-a)}+o(1)\right)\varepsilon_t.
\end{align*}
The combination of the inequality for $\varepsilon_t$ and the equation for $\varepsilon$ then yields
\begin{align*}
    T \leq \log_3\left( \frac{\pi}{2}\frac{1}{\varepsilon}\left( \sqrt{a(1-a)}+o(1) \right) \right).
\end{align*}
Thus, the choice
\begin{align*}
    T = \left\lceil \log_3 \left( \frac{\pi}{2} \frac{1}{\varepsilon} \sqrt{a(1-a)}\right) \right\rceil
\end{align*}
is sufficient to achieve the target accuracy $\varepsilon$.

\subsection{Accumulated Quantum Sample Complexity of Base-3 Hybrid Scheduled IQAE}\label{appendix:proof_schedule3}

\begin{lemma}
    \label{appendix_lemma:knownBaseCI}
    Assume the same conditions for the quantum sample complexity of QAE as in Theorem~\ref{appendix_thm:knownbase}. We can construct the following confidence intervals for $\theta$ and $a$:
    \begin{enumerate}
        \item An asymptotic level-\(\alpha\) confidence interval for \(\theta\) can be constructed as
        \begin{align*}
            [\theta_l, \theta_u] = f_k\left({\rm{trunc}}_{[0,1]}\left(\bar{X}_k \pm \frac{1}{\sqrt{N_k}} \sqrt{\bar{X}_k(1-\bar{X}_k)} z_{\frac{\alpha}{2}}\right)\right),
        \end{align*}
        where $f_k(\cdot)$ is defined in Eq.~\eqref{eq:fk} and $z_{\frac{\alpha}{2}}$ is the upper $\frac{\alpha}{2}$ quantile of the standard normal distribution. The approximate number of accesses to the oracle required for the radius to be \(\varepsilon\) is
        \begin{align*}
            N_\text{oracle} = \frac{z^2_{\frac{\alpha}{2}}}{4(2k+1)\varepsilon^2}.
        \end{align*}

        \item An asymptotic level-\(\alpha\) confidence interval for \(a\) can be constructed as
        \begin{align*}
            [a_l, a_u] = \sin^2{([\theta_l, \theta_u])}.
        \end{align*}
        The approximate number of accesses to the oracle required for the radius to be \(\varepsilon\) is 
        \begin{align*}
            N_\text{oracle} = \frac{z^2_{\frac{\alpha}{2}}}{(2k+1)\varepsilon^2}a(1-a).
        \end{align*}
    \end{enumerate}
    
\end{lemma}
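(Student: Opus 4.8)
The plan is to prove this lemma, which is the interval-estimation counterpart of Theorem~\ref{appendix_thm:knownbase}, in three layers: first a Wald-type confidence interval for the amplified probability $p_k$, then its image under the known-quadrant map $f_k$ as a confidence interval for $\theta$, and finally its image under $\sin^2$ as a confidence interval for $a$; the two radius formulas are then read off by linearizing each map using the derivatives already recorded in Eq.~\eqref{eq:deriv_fk} and Appendix~\ref{appendix:proof_thm123}.

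First I would establish the base interval for $p_k$. Since $N_k\bar X_k\sim\mathrm{Bin}(N_k,p_k)$, the central limit theorem gives $\sqrt{N_k}(\bar X_k-p_k)\xrightarrow{d}\mathcal{N}\big(0,p_k(1-p_k)\big)$, and, because $\bar X_k$ is consistent for $p_k$, Slutsky's theorem upgrades this to $\sqrt{N_k}(\bar X_k-p_k)/\sqrt{\bar X_k(1-\bar X_k)}\xrightarrow{d}\mathcal{N}(0,1)$ for $p_k\in(0,1)$; hence $\bar X_k\pm z_{\alpha/2}\sqrt{\bar X_k(1-\bar X_k)/N_k}$ is an asymptotic level-$\alpha$ confidence interval for $p_k$, and replacing it by its intersection with $[0,1]$ via $\mathrm{trunc}_{[0,1]}$ only raises the coverage since $p_k\in[0,1]$ surely. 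Next I would push this interval forward: under Assumption~\ref{appendix_assump:knownbase} the quadrant index $l(k)$ is known, so by Eq.~\eqref{eq:mul_sol} (with that known $l(k)$) we have $\theta=f_k(p_k)$ with $f_k$ the explicit, continuous, strictly monotone bijection of Eq.~\eqref{eq:fk} (increasing if $l(k)$ is even, decreasing if odd); a strictly monotone bijection carries a confidence interval for $p_k$ to one for $\theta$ with the same asymptotic coverage (for odd $l(k)$ the endpoints merely swap order), which is part~1, and since $\theta\in[0,\pi/2]$ with $\sin^2$ strictly increasing there and $a=\sin^2\theta$, applying $\sin^2$ to the endpoints gives part~2 (equivalently, compose and use $a=\sin^2 f_k(p_k)$ directly).

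For the radii I would invoke the delta method exactly as in Appendix~\ref{appendix:proof_thm123}. The half-width of the $p_k$-interval is $z_{\alpha/2}\sqrt{\bar X_k(1-\bar X_k)/N_k}=z_{\alpha/2}\sqrt{p_k(1-p_k)/N_k}\,(1+o(1))$; multiplying by $|f_k'(p_k)|=\big(2(2k+1)\sqrt{p_k(1-p_k)}\big)^{-1}$ from Eq.~\eqref{eq:deriv_fk}, the $\sqrt{p_k(1-p_k)}$ factor cancels and the $\theta$-interval has radius $\varepsilon\approx z_{\alpha/2}\big(2(2k+1)\sqrt{N_k}\big)^{-1}$, so solving for $N_k$ and using $N_\text{oracle}=(2k+1)N_k$ gives $N_\text{oracle}=z_{\alpha/2}^2\big(4(2k+1)\varepsilon^2\big)^{-1}$. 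For $a$, multiply instead by $|(\sin^2\circ f_k)'(p_k)|=2\sqrt{a(1-a)}\,|f_k'(p_k)|=\sqrt{a(1-a)}\big((2k+1)\sqrt{p_k(1-p_k)}\big)^{-1}$, so the $a$-interval has radius $\varepsilon\approx\frac{z_{\alpha/2}}{2k+1}\sqrt{a(1-a)/N_k}$ — that is, $z_{\alpha/2}$ times the square root of the AMSE of Theorem~\ref{appendix_thm:knownbase}, which is a convenient consistency check — and solving yields $N_\text{oracle}=z_{\alpha/2}^2\,a(1-a)\big((2k+1)\varepsilon^2\big)^{-1}$.

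The step I expect to require the most care is the double ``asymptotic'' qualifier: the coverage claim rests on the CLT/Slutsky/continuous-mapping chain and on $p_k$ (equivalently $\theta$, equivalently $a$) staying in the open interval so the estimated variance does not degenerate, while the radius identities hold only up to the $(1+o(1))$ factors produced by substituting $p_k$ for $\bar X_k$ inside the variance and by the mean-value-theorem linearization of $f_k$ and $\sin^2$; beyond that, the argument is bookkeeping with derivatives that are already available.
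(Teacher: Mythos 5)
Your proposal is correct and follows essentially the same route as the paper's proof: a Wald (normal-approximation) interval for $p_k$, pushed forward through $f_k$ and $\sin^2$, with the radii obtained by the same delta-method linearization in which the $\sqrt{p_k(1-p_k)}$ factors cancel against $|f_k'|$. Your additional remarks on Slutsky's theorem, the monotonicity and endpoint-ordering of $f_k$, and the effect of truncation on coverage make explicit what the paper leaves implicit, but they do not change the argument.
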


{\em Proof of Lemma 1.} Note that the distribution of the sample average $\bar{X}_k$ is
\begin{align*}
    \bar{X}_k \sim \frac{1}{N_{k}}{\rm Bin}\left(N_{k}, p_{k}\right).
\end{align*}    
When $N$ is large, this distribution can be approximated by the following normal distribution: 
\begin{align*}
    \bar{X}_k\sim \mathcal{N}\left(p_{k}, \frac{1}{N_{k}}p_{k}(1-p_{k})\right), 
\end{align*}
and the normal approximation allows us to construct an asymptotic level-\(\alpha\) confidence interval for \(p_k\) as follows: 
\begin{align*}
    {\rm{trunc}}_{[0,1]}\left(\bar{X}_k \pm \frac{1}{\sqrt{N_k}} \sqrt{\bar{X}_k(1-\bar{X}_k)} z_{\frac{\alpha}{2}}\right).
\end{align*}
Consequently, using $\theta = f_k(p_k)$ given by Eq.~\eqref{eq:mul_sol}, we obtain an asymptotic level-\(\alpha\) confidence interval for \(\theta\)  
\begin{align*}
    [\theta_l, \theta_u] = f_k\left({\rm{trunc}}_{[0,1]}\left(\bar{X}_k \pm \frac{1}{\sqrt{N_k}} \sqrt{\bar{X}_k(1-\bar{X}_k)} z_{\frac{\alpha}{2}}\right)\right).
\end{align*}
An application of the Taylor expansion leads to the following radius of the confidence interval for $\theta$: 
\begin{align*}
    \frac{\theta_u-\theta_l}{2} & \approx \frac{1}{\sqrt{N_k}} \sqrt{\bar{X}_k(1-\bar{X}_k)} z_{\frac{\alpha}{2}}\cdot|f_k'(\bar{X}_k)| \\
    & = \frac{z_{\frac{\alpha}{2}}}{2(2k+1)\sqrt{N_k}}.
\end{align*}
Hence, when the radius is equal to $\varepsilon$, the corresponding quantum sample complexity is given by 
\begin{align*}
    N_\text{oracle}&=(2k+1)N_k = (2k+1)\frac{z^2_{\frac{\alpha}{2}}}{4(2k+1)^2\varepsilon^2} = \frac{z^2_{\frac{\alpha}{2}}}{4(2k+1)\varepsilon^2}.
\end{align*} 

Given the constructed confidence interval $[\theta_l, \theta_u]$ for $\theta$, we have a natural asymptotic level-\(\alpha\) confidence interval for $a=\sin^2{\theta}$ 
\begin{align*}
    [a_l, a_u] = \sin^2{([\theta_l, \theta_u])},
\end{align*}
and its radius 
\begin{align*}
    \frac{a_u-a_l}{2} & \approx \frac{\theta_u-\theta_l}{2} \cdot2\sin\theta\cos\theta \\
    & = \frac{z_{\frac{\alpha}{2}}}{(2k+1)\sqrt{N_k}}\sqrt{a(1-a)},
\end{align*}
where the Taylor expansion is used again. If we make the radius equal to $\varepsilon$, the quantum sample complexity is given by 
\begin{align*}
    N_\text{oracle}&=(2k+1)N_k = (2k+1)\frac{z^2_{\frac{\alpha}{2}}}{(2k+1)^2\varepsilon^2}a(1-a) = \frac{z^2_{\frac{\alpha}{2}}}{(2k+1)\varepsilon^2}a(1-a) .
\end{align*}

{\em Proof of Theorem~\ref{thm:schedule3}.} To specify the required confidence at each stage, we set all the confidence levels \(\alpha_t\) equal to \(\frac{\alpha}{T}\), which ensures that the overall confidence level is \(\alpha\) by the union bound. By Lemma~\ref{appendix_lemma:knownBaseCI}, the accumulated sample complexity is then given by 
\begin{align}
    \label{appendix_eq:acc_complexity_decomp}
    \sum_{t=0}^{T-1} N_{\text{oracle},t} & = \sum_{t=0}^{T-1} \frac{z^2_{\alpha_t/2}}{4 K_t \varepsilon_t^2} = \frac{1}{4} z^2_{\frac{\alpha}{2T}} \left(\sum_{t=0}^{T-2} \frac{1}{K_t} \frac{1}{\varepsilon_t^2} + \frac{1}{K_{T-1}} \frac{1}{\varepsilon_{T-1}^2} \right),
\end{align}
where $\varepsilon_t$ is the radius of the interval estimate for $\theta$ obtained at stage $t$. 
 
To bound the first term in the parentheses of (\ref{appendix_eq:acc_complexity_decomp}), we assume that the radius of the confidence interval is sufficiently large compared to the reference intervals (Assumption~\ref{assump:schedule3}), which indicates 
\begin{align*}
    \sum_{t=0}^{T-2} \frac{1}{K_t} \frac{1}{\varepsilon_t^2}
    & \leq \sum_{t=0}^{T-2} \frac{1}{3^t} \frac{16}{\gamma^2\pi^2}3^{2t+2} \\
    & = \frac{144}{\gamma^2\pi^2} \sum_{t=0}^{T-2} 3^t \\
    & = \frac{72}{\gamma^2\pi^2} (3^{T-1} - 1) \\
    & \leq \frac{36}{\gamma^2\pi} \frac{1}{\varepsilon}\sqrt{a(1-a)}.
\end{align*}
The definition of $T$ thus yields
\begin{align*}
    T = \left\lceil \log_3 \left( \frac{\pi}{2} \frac{1}{\varepsilon} \sqrt{a(1-a)}\right) \right\rceil & \Rightarrow T \leq \log_3 \left( \frac{\pi}{2} \frac{1}{\varepsilon} \sqrt{a(1-a)}\right) + 1 \\ & \Rightarrow 3^{T-1} \leq \frac{\pi}{2} \cdot \frac{1}{\varepsilon}\sqrt{a(1-a)}.
\end{align*}

To bound the second term in the parentheses of (\ref{appendix_eq:acc_complexity_decomp}), we derive
\begin{align*}
    \frac{1}{K_{T-1}} \frac{1}{\varepsilon_{T-1}^2} & = \frac{1}{3^{T-1}} \frac{1}{\varepsilon_{T-1}^2} \\
    & \leq \frac{6}{\pi\sqrt{a(1-a)}} \frac{\varepsilon}{\varepsilon_{T-1}^2} \\
    & = \frac{24}{\pi} \frac{1}{\varepsilon} \sqrt{a(1-a)},
\end{align*}
where, again, the second to the last inequality follows from the definition of $T$ as follows: 
\begin{align*}
    T = \left\lceil \log_3 \left( \frac{\pi}{2} \frac{1}{\varepsilon} \sqrt{a(1-a)}\right) \right\rceil \Rightarrow 3^{T} \geq \frac{\pi}{2} \frac{1}{\varepsilon} \sqrt{a(1-a)} \Rightarrow \frac{1}{3^{T-1}}\leq \frac{6\varepsilon}{\pi}\frac{1}{\sqrt{a(1-a)}}.
\end{align*}
An application of the radius of interval estimates for $\theta$ and $a$ in Lemma~\ref{appendix_lemma:knownBaseCI} leads to 
\begin{align*}
    \varepsilon_{T-1} = \frac{\varepsilon}{2\sqrt{a(1-a)}}.
\end{align*}
Hence, we have 
\begin{align*}
    \sum_{t=0}^{T-1} N_{\text{oracle},t} 
    & \leq \frac{1}{4} z^2_{\frac{\alpha}{2T}} \left( \frac{36}{\gamma^2\pi} + \frac{24}{\pi} \right) \frac{1}{\varepsilon}\sqrt{a(1-a)} \\
    & = \frac{3}{\pi} \left( \frac{3}{\gamma^2} + 2 \right) z^2_{\frac{\alpha}{2T}} \frac{1}{\varepsilon}\sqrt{a(1-a)}.
\end{align*}

To evaluate the prefactor, by the upper and lower bounds for the Mills ratio, we approximate \(z_{\alpha}\) as follows: 
\[z_{\alpha} \approx \sqrt{-2 \log\alpha},\] 
or, more rigorously,
\begin{align*}
    z_{\alpha} =\sqrt{-(2+o(1)) \log\alpha},
\end{align*}
where \( o(1) \) represents a term that vanishes as \( \alpha \) approaches zero.  In fact, for $\alpha \leq 0.1$, this approximation is an upper bound
\begin{align*}
    z_{\alpha} <\sqrt{-2 \log\alpha}.
\end{align*}
Hence, we obtain
\begin{align*}
    z^2_{\frac{\alpha}{2T}} < -2 \log\frac{\alpha}{2T} \leq 2 \log\left(\frac{2}{\alpha}\left\lceil \log_3 \left( \frac{\pi}{4\varepsilon} \right) \right\rceil\right).
\end{align*}
Finally, we conclude that the accumulated sample complexity is
\begin{align*}
    \sum_{t=0}^{T-1} N_{\text{oracle},t} \leq \frac{6}{\pi} \left( \frac{3}{\gamma^2} + 2 \right) \frac{1}{\varepsilon} \log\left(\frac{2}{\alpha} \left\lceil \log_3 \left( \frac{\pi}{4\varepsilon} \right) \right\rceil\right)\sqrt{a(1-a)}.
\end{align*}

\subsection{Accumulated Quantum Sample Complexity of Base-3/Base-5 Hybrid Scheduled IQAE}
\label{appendix:proof_schedule35}

{\em Proof of Appendix Theorem~\ref{appendix_thm:schedule35}.} Given the decomposition of the accumulated complexity Eq.~(\ref{appendix_eq:acc_complexity_decomp}), we bound the first term in the parentheses by replacing the assumption that the confidence interval is sufficiently large relative to the base-3 reference intervals (Assumption~\ref{assump:schedule3})  with the requirement that the confidence interval is contained in the base-3 or base-5 reference intervals [Eq.~\eqref{eq:epsilont35}]. This replacement yields the inequality 
\begin{align*}
     \varepsilon_t \geq \frac{1}{2}\frac{1}{K_{t}}\frac{1}{15}\frac{\pi}{2}, \quad\text{for $t = 0, 1, \ldots, T-2$} 
\end{align*}
and thus the bound 
\begin{align*}
    \sum_{t=0}^{T-2} \frac{1}{K_t} \frac{1}{\varepsilon_t^2} \leq \frac{3600}{\pi^2}\sum_{t=0}^{T-2} K_t.
\end{align*}
Note that $\{K_t\}$ is no longer a geometric sequence as in base-3 hybrid schedule IQAE, and recognize that the ratio $\frac{K_{t+1}}{K_{t}}$ can only be 3 or 5. We have 
\begin{align*}
    K_{t}\leq \frac{1}{3} K_{t+1},\quad \text{for all $t=0,\ldots,T-2$},
\end{align*}
\begin{align*}
    K_t \leq \frac{1}{3^{T-1-t}}K_{T-1},\quad \text{for all $t=0,1,\ldots,T-1$}.
\end{align*}
and 
\begin{align*}
    \sum_{t=0}^{T-2}K_t & \leq \sum_{t=0}^{T-2}\frac{1}{3^{T-1-t}}K_{T-1} =K_{T-1}\sum_{t=1}^{T-1}\frac{1}{3^{t}} \leq \frac{1}{2}K_{T-1}. 
\end{align*}
Therefore, we conclude 
\begin{align*}
    \sum_{t=0}^{T-1} N_{\text{oracle},t} 
    & \leq \frac{1}{4} z^2_{\frac{\alpha}{2T}} \left( \frac{1800}{\pi^2} K_{T-1} + \frac{1}{K_{T-1}} \frac{1}{\varepsilon_{T-1}^2}\right) \\
    & = \frac{1}{4} z^2_{\frac{\alpha}{2T}} \left( \frac{1800}{\pi^2} K_{T-1}\varepsilon_{T-1} + \frac{1}{K_{T-1}\varepsilon_{T-1}}\right)\frac{1}{\varepsilon_{T-1}}.
\end{align*}

To express the above accumulated quantum sample complexity in terms of the quantum amplitude and determine $K_{T-1}\varepsilon_{T-1}$, we consider that (i) since the interval estimate for the amplified angle $K_{T-1}\theta$ must lie within a single quadrant, such that $K_{T-1}\varepsilon_{T-1} \leq \frac{1}{2}\frac{\pi}{2}$ and (ii) at the last stage, the interval estimate for the amplified angle cannot be fully contained within any reference interval, otherwise the algorithm would proceed to stage $T$ to achieve a more accurate interval estimate. This argument yields the bound
\begin{align*}
    K_{T-1}\varepsilon_{T-1} \geq \frac{1}{2}\frac{1}{15}\frac{\pi}{2}.
\end{align*}
Due to the monotonicity of $\frac{1800}{\pi^2} K_{T-1}\varepsilon_{T-1} + \frac{1}{K_{T-1}\varepsilon_{T-1}}$ with respect to $K_{T-1}\varepsilon_{T-1}$, the accumulated quantum sample complexity is upper bounded by 
\begin{align*}
    \sum_{t=0}^{T-1}N_{\text{oracle},t} \leq\frac{227}{2\pi}z^2_{\frac{\alpha}{2T}}\frac{1}{\varepsilon_{T-1}}.
\end{align*}
Replacing $z^2_{\frac{\alpha}{2T}}$ by its upper bound and $\varepsilon_{T-1}$ by $\varepsilon$ as in Appendix~\ref{appendix:proof_schedule3}, we derive the accumulated sample complexity\footnote{Note in base-3/base-5 hybrid scheduled IQAE, $T$ is upper bounded by $\left\lceil\log_3 \left( \frac{\pi}{2\varepsilon}\sqrt{a(1-a)} \right)\right\rceil$ because the slowest schedule always selects three as the multiplier.} 
\begin{align*}
    \sum_{t=0}^{T-1}N_{\text{oracle},t}\leq \frac{227}{\pi}\frac{1}{\varepsilon}\log\left(\frac{2}{\alpha}\left\lceil \log_3 \left( \frac{\pi}{4\varepsilon} \right) \right\rceil \right) \sqrt{a(1-a)}.
\end{align*}

\subsection{Preparation of the Prior in Normal-BIQAE}
\label{appendix:proof_approx_prior}

{\em Proof of Appendix Theorem~\ref{appendix_thm:approx_prior}.}
Note the relationship between $p_{k_{t+1}}$ and $p_{k_t}$ in Eq.~\eqref{eq:p2p} 
\[
p_{k_{t+1}} = \varphi(p_{k_t}),
\]
where \(\varphi\) is defined as
\begin{align}\label{eq:map}
\varphi(x) \coloneqq \sin^2\left(K_{t+1} \cdot f_t(x)\right).
\end{align}
The posterior distribution of \(p_{k_t}\) is approximately normal with mean \(\mu_{\text{post}, t}\) and variance \(\sigma_{\text{post},t}^2\), and, for \(x \in (0,1)\),
\begin{align}
    \label{appendix_eq:f_derivative}
    |f'_t(x)| = \frac{1}{2K_t\sqrt{x(1-x)}}.
\end{align} 
An application of the delta method yields that $p_{k_{t+1}}$ approximately follows a normal distribution with mean
\[
\mu_{0, t+1} = \varphi(\mu_{\text{post}, t}) = \sin^2\left(K_{t+1} \cdot f_t(\mu_{\text{post}, t})\right)
\]
and variance
\begin{align*}
    \Delta_{t+1}^2 &= \left[\varphi'(\mu_{\text{post}, t})\right]^2 \cdot \sigma_{\text{post},t}^2 \\
    & = \left\{2 \sin\left[K_{t+1} f_t(\mu_{\text{post}, t})\right] \cos\left[K_{t+1} f_t(\mu_{\text{post}, t})\right] \cdot K_{t+1} f'_t(\mu_{\text{post}, t})\right\}^2 \cdot \sigma_{\text{post},t}^2 \\
    & = \left(\frac{K_{t+1}}{K_t}\right)^2 \sin^2\left[K_{t+1} f_t(\mu_{\text{post}, t})\right] \cos^2\left[K_{t+1} f_t(\mu_{\text{post}, t})\right] \cdot \frac{1}{\mu_{\text{post}, t}(1-\mu_{\text{post}, t})} \cdot \sigma_{\text{post},t}^2 \\
    & = \left(\frac{K_{t+1}}{K_t}\right)^2 \frac{\mu_{0, t+1}(1-\mu_{0, t+1})}{\mu_{\text{post}, t}(1-\mu_{\text{post}, t})} \cdot \sigma_{\text{post},t}^2.
\end{align*}
Thus, the prior distribution of \(p_{k_{t+1}}\) is approximately normal with mean \(\mu_{0, t+1}\) and variance \(\Delta_{t+1}^2\).

\subsection{Quantum Sample Complexity of Stage \texorpdfstring{$t$}{t} in Normal-BIQAE}
\label{appendix:proof_sym2}

{\em Proof of Appendix Theorem~\ref{appendix_thm:sym2}.} Without loss of generality, we assume that the interval estimate for $p_{k_t}$ in Normal-BIQAE as presented in Algorithm~\ref{appendix_alg:ComputeCRI} is contained in $[0,1]$. We will derive an expression for the required sample size $N_t$ to achieve the stopping criterion at stage $t$. 

At stage $t$, when the radius of the interval estimate for $\theta$  reaches $\varepsilon_t$, we have
\begin{align*}
    z_{\alpha_t/2}\cdot\sigma_{{\rm post},t}\cdot \left|f_t'(\mu_{{\rm post},t})\right| = \varepsilon_t,
\end{align*}
which upon substitution of $\left|f_t'(\mu_{{\rm post},t})\right|$ given by Eq.~\eqref{appendix_eq:f_derivative} 
becomes
\begin{align}
    \label{appendix_eq:from_stopping_criterion}
     \frac{z_{\alpha_t/2}^2 }{4K_t^2\varepsilon_t^2}\frac{1}{\mu_{{\rm post},t}(1-\mu_{{\rm post},t})}= \sigma_{{\rm post},t}^{-2}.
\end{align}
Substituting $\sigma_{{\rm post},t}^2$ by the expression as detailed in Algorithm~\ref{appendix_alg:BayesianUpdate} yields
\begin{align*}
     \frac{z_{\alpha_t/2}^2 }{4K_t^2\varepsilon_t^2}\frac{1}{\mu_{{\rm post},t}(1-\mu_{{\rm post},t})}= \frac{1}{\Delta_t^2} + \frac{N_t}{\bar{X}_t\left(1 - \bar{X}_t\right)},
\end{align*}
and a rearrangement of terms leads to the number of shots $N_t$ as follows: 
\begin{align*}
    N_t = \frac{\bar{X}_t\left(1 - \bar{X}_t\right)}{\mu_{{\rm post},t}(1-\mu_{{\rm post},t})}\left(\frac{z_{\alpha_t/2}^2 }{4K_t^2\varepsilon_t^2} - \frac{\mu_{{\rm post},t}(1-\mu_{{\rm post},t})}{\Delta_t^2}\right).
\end{align*}

We derive the sample complexity at stage $t$ by considering the corresponding number of accesses to the oracle $\mathcal{A}$, namely, 
\begin{align*}
    N_{\text{oracle},t} = K_t N_t = \frac{\bar{X}_t\left(1 - \bar{X}_t\right)}{\mu_{{\rm post},t}(1-\mu_{{\rm post},t})}\left(\frac{z_{\alpha_t/2}^2 }{4K_t\varepsilon_t^2} - \frac{\mu_{{\rm post},t}(1-\mu_{{\rm post},t})}{\Delta_t^2}K_t\right). 
\end{align*}
Note in the above expression (i) the first term inside the parentheses is the sample complexity of Normal-IQAE according to Lemma~\ref{appendix_lemma:knownBaseCI}, (ii) the second term inside the parentheses is a positive quantity that reduces the sample complexity, and (iii) the multiplicative factor outside the parentheses accounts for the impact on the interval length when the center is shifted from the sample mean to the posterior mean. 

To show that the sample complexity of Normal-BIQAE is lower than that of IQAE with the standard-schedule, we prove that the added computational cost of the multiplicative factor is outweighed by the reduction in sample complexity brought about by the second term in the parentheses. Specifically, the separation of the first term inside the parentheses and comparison of the multiplicative factor to one lead to the reformulation as follows: 
\begin{align}
    \label{appendix_eq:rewriteN}
    N_{\text{oracle}, t} = \frac{z^2_{\alpha_t/2}}{4K_t}\frac{1}{\varepsilon_t^2} + 
    \left(\frac{\bar{X}_t\left(1-\bar{X}_t\right)}{\mu_{{\rm post},t}\left(1-\mu_{{\rm post},t}\right)} -1\right)
     \frac{z^2_{\alpha_t/2}}{4K_t}\frac{1}{\varepsilon_t^2} -\frac{\bar{X}_t\left(1-\bar{X}_t\right)}{\Delta_t^2}K_t.
\end{align}
To express this quantum sample complexity in terms of the target accuracy, we first factorize the difference between the additional multiplicative factor and one to give
\begin{align*}
    \frac{\bar{X}_t(1-\bar{X}_t)}{\mu_{{\rm post},t}(1-\mu_{{\rm post},t})} -1 & = \frac{(\bar{X}_t-\mu_{{\rm post},t})(1-\bar{X}_t-\mu_{{\rm post},t})}{\mu_{{\rm post},t}(1-\mu_{{\rm post},t})} \\
    & = \frac{1}{\mu_{{\rm post},t}(1-\mu_{{\rm post},t})}  \left(\bar{X}_t- \frac{\frac{\mu_{0, t}}{\Delta_t^2} + \frac{\bar{X}_t}{\hat{\sigma}_t^2 / N_t}}{\frac{1}{\Delta_t^2} + \frac{1}{\hat{\sigma}_t^2 / N_t}}\right) (1-\bar{X}_t-\mu_{{\rm post},t}) \\
    & = \frac{1}{\mu_{{\rm post},t}(1-\mu_{{\rm post},t})}  \frac{\frac{\bar{X}_t- \mu_{0, t}}{\Delta_t^2}}{\frac{1}{\Delta_t^2} + \frac{1}{\hat{\sigma}_t^2 / N_t}} (1-\bar{X}_t-\mu_{{\rm post},t}) \\
    & = \underbrace{\frac{1}{\mu_{{\rm post},t}(1-\mu_{{\rm post},t})}\frac{\sigma^2_{{\rm post},t}}{\Delta_t^2}}_{\text{First term} }
    \underbrace{(\bar{X}_t-\mu_{0,t})\cdot (1-\bar{X}_t-\mu_{{\rm post},t})}_{\text{Second term} },
\end{align*}
where we have used in the first step the algebraic identity 
\begin{align}
    \label{eq:simple_algebra}
    a(1-a) - b(1-b) = (a-b)(1-a-b)
\end{align}
and in the second and fourth steps the Bayesian updates $\mu_{{\rm post},t}$ and $\sigma^2_{{\rm post},t}$ as detailed in Algorithm~\ref{appendix_alg:BayesianUpdate}. By the expression for $\sigma_{\text{post},t}$ derived from the stopping criterion Eq.~(\ref{appendix_eq:from_stopping_criterion}), we obtain the first term
\begin{align*}
    \frac{\sigma^2_{{\rm post},t}}{\mu_{{\rm post},t}(1-\mu_{{\rm post},t})\Delta_t^2}= \frac{4K^2_t\varepsilon^2_t}{z^2_{\alpha_t/2}\Delta_t^2}.
\end{align*}
Since \(\mu_{\text{post},t}\) is a weighted average of \(\bar{X}_t\) and \(\mu_{0,t}\),
the second term is bounded 
by\footnote{Note the inequality in the first step holds regardless of the relationship between \(\bar{X}_t\) and \(\mu_{0,t}\), as no matter whether \(\bar{X}_t \geq \mu_{0,t}\) 
({\em i.e.}, \(\mu_{\text{post},t}\) lies between \(\mu_{0,t}\) and \(\bar{X}_t\)) or \(\bar{X}_t < \mu_{0,t}\) ({\em i.e.}, \(\mu_{\text{post},t}\) lies between \(\bar{X}_t\) and \(\mu_{0,t}\)), the second term attains a maximum value  
    \(
    (\bar{X}_t - \mu_{0,t}) \cdot (1 - \bar{X}_t - \mu_{0,t}) 
    \)
at \(\mu_{\text{post},t} = \mu_{0,t}\).} 
\begin{align*}
    (\bar{X}_t - \mu_{0,t}) \cdot (1 - \bar{X}_t - \mu_{0,t}) 
    & = \bar{X}_t(1 - \bar{X}_t) - \mu_{0,t}(1 - \mu_{0,t}),
\end{align*}
where the algebraic identity Eq.~(\ref{eq:simple_algebra}) is used again. Combining the two terms together yields
\begin{align*}
    \frac{\bar{X}_t(1-\bar{X}_t)}{\mu_{{\rm post},t}(1-\mu_{{\rm post},t})} -1 \leq \left[\bar{X}_t(1-\bar{X}_t)-\mu_{0,t}(1-\mu_{0,t})\right]\cdot \frac{1}{\Delta_t^2} \frac{4K^2_t\varepsilon^2_t}{z^2_{\alpha_t/2}}.
\end{align*}
Substituting these into the sample complexity at stage $t$ in Eq.~(\ref{appendix_eq:rewriteN}), we obtain 
\begin{align}
    N_{\text{oracle}, t} & \leq \frac{z^2_{\alpha_t/2}}{4K_t}\frac{1}{\varepsilon_t^2} + 
    \left[\bar{X}_t(1-\bar{X}_t)-\mu_{0,t}(1-\mu_{0,t})\right]\cdot \frac{K_t}{\Delta_t^2} 
    -\frac{\bar{X}_t\left(1-\bar{X}_t\right)}{\Delta_t^2}K_t \notag \\
    & = \frac{z^2_{\alpha_t/2}}{4K_t}\frac{1}{\varepsilon_t^2} - \mu_{0,t}(1-\mu_{0,t})\cdot \frac{K_t}{\Delta_t^2}. \label{appendix_ineq:conclusion_part2}
\end{align}
Inequality (\ref{appendix_ineq:conclusion_part2}) demonstrates that the complexity of Normal-BIQAE is lower than that of the standard-scheduled IQAE by a positive term, 
which is in agreement with the hypothesis that injection of a Bayesian approach achieves a complexity reduction that outweighs the extra cost due to retargeting.

To further quantify the magnitude of this improvement, we consider the prior distribution preparation process as outlined in Algorithm~\ref{appendix_alg:PreparePrior}. Using the mean and variance of the posterior distribution of stage $t-1$ according to Appendix Theorem~\ref{appendix_thm:approx_prior}, we show  
\begin{align*}
    \Delta_t^2 =\left(\frac{K_{t}}{K_{t-1}}\right)^2\frac{\mu_{0,t}(1-\mu_{0,t})}{\mu_{post,t-1}(1-\mu_{post,t-1})}\cdot \sigma^2_{\text{post},t-1}.
\end{align*}
As in the derivation of $\sigma_{\text{post},t}$ in Eq.~(\ref{appendix_eq:from_stopping_criterion}), we have 
\begin{align*}
    \frac{\sigma^2_{{\rm post},t-1}}{\mu_{{\rm post},t-1}(1-\mu_{{\rm post},t-1})}= \frac{4K^2_{t-1}\varepsilon^2_{t-1}}{z^2_{\alpha_{t-1}/2}}. 
\end{align*}
Thus, we conclude 
\begin{align*}
    \Delta_t^2 =\left(\frac{K_{t}}{K_{t-1}}\right)^2\mu_{0,t}(1-\mu_{0,t})\cdot \frac{4K^2_{t-1}\varepsilon^2_{t-1}}{z^2_{\alpha_{t-1}/2}}.
\end{align*}
Choosing $\alpha_t=\alpha_{t-1}=\frac{\alpha}{T}$ and applying the union bound according to the convention of IQAE \cite{grinko2021iterative}, we derive the prior variance 
\begin{align*}
    \Delta_t^2 =\mu_{0,t}(1-\mu_{0,t})\cdot \frac{4K^2_t\varepsilon^2_{t-1}}{z^2_{\alpha_{t}/2}}. 
\end{align*}
Finally, via substitution of the expression for $\Delta_{t}^2$ into the quantum sample complexity inequality in Eq.~(\ref{appendix_ineq:conclusion_part2}),  we establish the desired upper bound result 
\begin{align*}
    N_{\text{oracle},t} \leq \frac{z^2_{\alpha_t/2}}{4K_t}\left(\frac{1}{\varepsilon_t^2} - \frac{1}{\varepsilon_{t-1}^2} \right).
\end{align*}

\subsection{Preparation of the Prior in Beta-BIQAE}
\label{appendix:proof_approx_prior_beta}
{\em Proof of Appendix Theorem~\ref{appendix_thm:MLE_convergence_beta_prior}.} 
We prove the consistency of the estimator obtained in PreparePriorBeta Algorithm~\ref{appendix_alg:PreparePriorBeta} by Theorem~{2.2} of ref.~\cite{white1982maximum}, which we reproduce below in the context of Beta-BIQAE for ease of reference: 
\begin{manualtheorem}{2.2}\label{{thm:mismodel}}
    Given the following Assumptions~\ref{assump:A1}--\ref{assump:A3b}, the MLE estimator derived in Algorithm~\ref{appendix_alg:PreparePriorBeta}, $(\hat{\alpha}_R, \hat{\beta}_R)$, almost surely converges to $ (\alpha^*, \beta^*)$ as $R\to \infty$.
\end{manualtheorem}

\begin{manualassumption}{A1}\label{assump:A1}
    The independent random variables $\tilde{Y}_i,\ i=1,\ldots, R$, have a common distribution with a measurable density function $g$.
\end{manualassumption}

\begin{manualassumption}{A2}\label{assump:A2}
    The considered beta family has density functions $q_{\alpha,\beta}(\tilde{y})$ that are measurable in $\tilde{y}$ for every $(\alpha,\beta)$ in $\Theta$, a compact subset of $\mathbb{R}^2$, and continuous in $(\alpha,\beta)$ for every $\tilde{y}$.
\end{manualassumption}

\begin{manualassumption}{A3a}\label{assump:A3a}
      $\mathbb{E}\left[\log g\left( \tilde{Y}_i\right)\right]$ exists and $\lvert \log q_{\alpha, \beta} \left( \tilde{y}\right) \rvert\leq m(\tilde{y})$ for all $(\alpha,\beta)$ in $\Theta$, where $m(\tilde{Y}_i)$ is integrable. 
\end{manualassumption}

\begin{manualassumption}{A3b}\label{assump:A3b}
      $D_{\mathrm{KL}}(g \| q_{\alpha,\beta})$ has a unique minimum at $(\alpha^*, \beta^*)$ in $\Theta$.
\end{manualassumption}

Thus, to employ Theorem~{2.2} of ref.~\cite{white1982maximum} we first verify Assumptions A1--A3 hold for Beta-BIQAE. According to Algorithm~\ref{appendix_alg:PreparePriorBeta}, in Beta-BIQAE, the sample drawn from the posterior distribution obtained at the end of Stage $t$ is
\begin{align*}
    Y_1, Y_2, \cdots, Y_R  \overset{\text{i.i.d.}}{\sim} \operatorname{Beta}\left(a_{\text {post}, t}, b_{\text {post}, t}\right), 
\end{align*}
and the sample $\tilde{Y}_1, \tilde{Y}_2, \cdots, \tilde{Y}_R$ obtained from the mapping in the algorithm is related to the original sample $Y_1, \cdots, Y_R$ as 
\begin{align}
    \label{eq:inv_map}
    Y_i = \sin ^2\left(K_t \cdot f_{t+1}(\tilde{Y}_i)\right), \quad \forall i \in \{1, \dots, R\}.
\end{align}

Assumption~\ref{assump:A1} that the transformed samples are i.i.d.~and possess a density function is therefore satisfied since the original samples ($Y_i$'s) are i.i.d.~and follow a beta distribution, the transformation Eq.~\eqref{eq:inv_map} is differentiable, and the density function of the transformed sample follows as Lemma~\ref{lemma:density_g}: 

\begin{lemma}\label{lemma:density_g}
    The density function of $\tilde{Y}_i$ is
    \begin{align*}
        g(\tilde{y})= & \frac{\left[\sin ^2\left(K_t \cdot f_{t+1}(\tilde{y})\right)\right]^{a_{\text{post}, t}-1} \cdot\left[\cos ^2\left(K_t \cdot f_{t+1}(\tilde{y})\right)\right]^{b_{\text{post}, t}-1}}{B\left(a_{\text{post}, t}, b_{\text{post}, t}\right)} \cdot \frac{K_t}{K_{t+1}} \cdot \frac{1}{\sqrt{\tilde{y}(1-\tilde{y})}} \\
        & \cdot\left|\sin \left(K_t \cdot f_{t+1}(\tilde{y})\right) \cos \left(K_t \cdot f_{t+1}(\tilde{y})\right)\right|.
    \end{align*}
\end{lemma}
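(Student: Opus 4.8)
\emph{Proof sketch of Lemma~\ref{lemma:density_g}.}
The plan is to derive $g$ from the one-dimensional change-of-variables theorem applied to the smooth map linking $Y_i$ and $\tilde Y_i$. By Eq.~\eqref{eq:inv_map} we may write $Y_i = \psi(\tilde Y_i)$ with $\psi(\tilde y)\coloneqq \sin^2\!\big(K_t\cdot f_{t+1}(\tilde y)\big)$, which is exactly the inverse of the prior-to-prior map $\varphi$ from Eq.~\eqref{eq:map} (the stage-$(t{+}1)$ arcfunction $f_{t+1}$ recovers $\theta$ from $p_{k_{t+1}}=\tilde y$ on the known reference interval, and $\sin^2(K_t\,\cdot\,)$ then returns $p_{k_t}$). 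Since $Y_i\sim\operatorname{Beta}(a_{\text{post},t},b_{\text{post},t})$ has density $f_Y(y)=y^{a_{\text{post},t}-1}(1-y)^{b_{\text{post},t}-1}/B(a_{\text{post},t},b_{\text{post},t})$ on $(0,1)$, the transformation theorem gives $g(\tilde y)=f_Y\big(\psi(\tilde y)\big)\,\lvert\psi'(\tilde y)\rvert$ on the range of $\psi$, so it remains to evaluate the two factors and multiply.

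First I would compute the Jacobian: the chain rule gives $\psi'(\tilde y)=2\sin\!\big(K_t f_{t+1}(\tilde y)\big)\cos\!\big(K_t f_{t+1}(\tilde y)\big)\cdot K_t\cdot f_{t+1}'(\tilde y)$, and the derivative of $f_{t+1}$ — the stage-$(t{+}1)$ instance of Eqs.~\eqref{eq:deriv_fk} and~\eqref{appendix_eq:f_derivative}, with $K_{t+1}=2k_{t+1}+1$ — is $\lvert f_{t+1}'(\tilde y)\rvert=\big(2K_{t+1}\sqrt{\tilde y(1-\tilde y)}\big)^{-1}$, identical for the $\arcsin$ and $\arccos$ branches since the sign is absorbed by the absolute value. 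Hence $\lvert\psi'(\tilde y)\rvert=\dfrac{K_t}{K_{t+1}}\cdot\dfrac{\lvert\sin(K_t f_{t+1}(\tilde y))\cos(K_t f_{t+1}(\tilde y))\rvert}{\sqrt{\tilde y(1-\tilde y)}}$. Next I would substitute $\psi(\tilde y)=\sin^2(K_t f_{t+1}(\tilde y))$ into $f_Y$ and use $1-\sin^2=\cos^2$ to get $f_Y\big(\psi(\tilde y)\big)=\big[\sin^2(K_t f_{t+1}(\tilde y))\big]^{a_{\text{post},t}-1}\big[\cos^2(K_t f_{t+1}(\tilde y))\big]^{b_{\text{post},t}-1}/B(a_{\text{post},t},b_{\text{post},t})$. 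Multiplying the two displayed factors reproduces the stated expression for $g$, which finishes the calculation.

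The one point needing care — and the main obstacle — is the hypothesis that $\psi$ is a genuine $C^1$ bijection on the relevant domain, so that the single-branch transformation theorem applies rather than a sum over preimages: because $\sin^2$ is periodic and $f_t$ is piecewise-defined, $\varphi$ need not be globally invertible on $(0,1)$. This is resolved by the reference-interval mechanism of BIQAE: the algorithm advances from stage $t$ only once the interval estimate for $\theta$ is contained in a single reference interval, and on such an interval $K_{t+1}f_t(\cdot)$ sweeps exactly one monotonicity branch of $\sin^2$ (an arc of length $\pi/2$), so $\varphi$ is strictly monotone there and $\psi=\varphi^{-1}$ is a well-defined smooth inverse; up to negligible posterior tail mass, $\tilde Y_i$ is supported on the corresponding branch. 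I would therefore first state and verify this monotonicity (distinguishing $l_{t+1}$ even/odd and using that $x\mapsto\arcsin\sqrt{x}$ is increasing while $x\mapsto\arccos\sqrt{x}$ is decreasing on $(0,1)$), after which the calculation above yields Lemma~\ref{lemma:density_g}; measurability and strict positivity of $g$ on its support — all that Assumption~\ref{assump:A1} requires — are then immediate from the closed form.
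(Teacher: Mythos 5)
Your proposal is correct and follows essentially the same route as the paper's own proof: the one-dimensional change-of-variables formula with $\varphi^{-1}(\tilde y)=\sin^2\left(K_t\cdot f_{t+1}(\tilde y)\right)$, the chain-rule Jacobian $\left|\frac{d}{d\tilde y}\varphi^{-1}(\tilde y)\right|=\frac{K_t}{K_{t+1}}\frac{\left|\sin\left(K_t f_{t+1}(\tilde y)\right)\cos\left(K_t f_{t+1}(\tilde y)\right)\right|}{\sqrt{\tilde y(1-\tilde y)}}$ via Eq.~\eqref{eq:deriv_fk}, and substitution into the beta density. Your additional paragraph justifying that $\varphi$ is a genuine monotone bijection on the relevant reference interval addresses a point the paper's proof leaves implicit, and is a worthwhile (if brief) strengthening rather than a departure.
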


{\em Proof of Lemma~\ref{lemma:density_g}.}
    Denote the mapping from $Y_i$ to $\tilde{Y}_i$ by $\varphi$ [given in Eq.~\eqref{eq:map}] such that the density function is expressed as
    $$
    g(\tilde{y})=p_Y\left(\varphi^{-1}(\tilde{y})\right) \cdot\left|\frac{d}{d \tilde{y}} \varphi^{-1}(\tilde{y})\right|,
    $$
    where $p_Y(\cdot)$ is the density function of $Y_i$, and, according to Eq.~\eqref{eq:inv_map}, $\varphi^{-1}$ is the inverse mapping
    $$
    \varphi^{-1}(\tilde{y})=\sin ^2\left(K_t \cdot f_{t+1}(\tilde{y})\right).
    $$
    Direct calculations then lead to 
    $$
    \frac{d}{d \tilde{y}}\left[\varphi^{-1}(\tilde{y})\right]=2 \sin \left(K_t \cdot f_{t+1}(\tilde{y})\right) \cdot \cos \left(K_t \cdot f_{t+1}(\tilde{y})\right) \cdot K_t \cdot f_{t+1}^{\prime}(\tilde{y}),
    $$
    which expressed in terms of the derivative of $f_k$ in Eq.~\eqref{eq:deriv_fk} is
    $$
    \begin{aligned}
    & \quad f'_{t+1}(\tilde{y})=f'_{k_{t+1}}(\tilde{y})=(-1)^{l_{t+1}} \cdot \frac{1}{2 K_{t+1}} \cdot \frac{1}{\sqrt{\tilde{y}(1-\tilde{y})}} \\
    & \Rightarrow\left|\frac{d}{d \tilde{y}}\left[\varphi^{-1}(\tilde{y})\right]\right|=\frac{K_t}{K_{t+1}} \cdot \frac{1}{\sqrt{\tilde{y}(1-\tilde{y})}} \cdot\left|\sin \left(K_t \cdot f_{t+1}(\tilde{y})\right) \cdot \cos \left(K_t \cdot f_{t+1}(\tilde{y})\right)\right|.
    \end{aligned}
    $$
    Moreover, since $Y_i$ follows the beta distribution $\operatorname{Beta}\left(a_{\text {post}, t} , b_{\text {post}, t}\right)$,
    $$
    p_Y\left(\varphi^{-1}\left(\tilde{y}\right)\right)=\frac{\left[\varphi^{-1}(\tilde{y})\right]^{a_{\text {post},t}-1} \cdot\left[1-\varphi^{-1}(\tilde{y})\right]^{b_{\text{post}, t}-1}}{B\left(a_{\text {post} ,t}, b_{\text {post}, t}\right)}. 
    $$
    Combination of the expressions for the gradient of the inverse mapping $\varphi^{-1}(\tilde{y})$ and $p_Y\left(\varphi^{-1}(\tilde{y})\right)$ thus yields the expression for the density function of $\tilde{Y}_i$
    $$
    \begin{aligned}
    g(\tilde{y})= & \frac{\left[\sin ^2\left(K_t \cdot f_{t+1}(\tilde{y})\right)\right]^{a_{\text{post}, t}-1} \cdot\left[\cos ^2\left(K_t \cdot f_{t+1}(\tilde{y})\right)\right]^{b_{\text{post}, t}-1}}{B\left(a_{\text{post}, t}, b_{\text{post}, t}\right)} \cdot \frac{K_t}{K_{t+1}} \cdot \frac{1}{\sqrt{\tilde{y}(1-\tilde{y})}} \\
    & \cdot\left|\sin \left(K_t \cdot f_{t+1}(\tilde{y})\right) \cos \left(K_t \cdot f_{t+1}(\tilde{y})\right)\right|.
    \end{aligned}
    $$

Assumption~\ref{assump:A2} is likewise satisfied due to the chosen beta family and parameter space: Specifically, the parameter space $\Theta$ in Theorem~\ref{appendix_thm:MLE_convergence_beta_prior} is a finite rectangle and thus a compact set, and the density function of the beta distribution 
\begin{align*}
    q_{\alpha, \beta}(\tilde{y}) = \frac{1}{B(\alpha, \beta)} \, \tilde{y}^{\alpha - 1} (1 - \tilde{y})^{\beta - 1} 
\end{align*}
is continuous in $(\alpha,\beta) \in (0, \infty)^2$ for every $\tilde{y}$.

To confirm Assumption~\ref{assump:A3a} is satisfied, we first verify the existence of $\mathbb{E}[\log (g(\tilde{Y}_i))]$. According to Lemma~\ref{lemma:density_g}, $g(\tilde{Y_i})$ can be expressed as
\[
\begin{aligned}
& \quad g(\tilde{Y}_i)
= \frac{Y_i^{a_{\text{post}, t}} (1 - Y_i)^{b_{\text{post}, t}}}{B\left(a_{\text{post}, t}, b_{\text{post}, t}\right)} 
\cdot \sqrt{\frac{Y_i(1 - Y_i)}{\tilde{Y}_i(1 - \tilde{Y}_i)}} 
\cdot \frac{K_t}{K_{t+1}} \\
& \Rightarrow \log g(\tilde{Y}_i)
= \left(a_{\text{post}, t} + \frac{1}{2} \right) \log Y_i 
+ \left(b_{\text{post}, t} + \frac{1}{2} \right) \log(1 - Y_i) \\
& \quad\quad\quad\quad\quad\quad
 - \frac{1}{2} \log \tilde{Y}_i 
- \frac{1}{2} \log(1 - \tilde{Y}_i) \\
& \quad\quad\quad\quad\quad\quad
- \log B\left(a_{\text{post}, t}, b_{\text{post}, t}\right)
+ \log \left( \frac{K_t}{K_{t+1}} \right).
\end{aligned}
\]
Direct calculation then shows
$$
\mathbb{E}[\log Y_i]=\psi\left(a_{\text{post}, t}\right)-\psi\left(a_{\text{post},t}+b_{\text{post},t}\right),
$$
where $\psi$ is the digamma function; and since $1-Y_i \sim \operatorname{Beta}\left(b_{\text{post}, t}, a_{\text{post}, t}\right)$,
$$
\mathbb{E}[\log \left(1-Y_i\right)]=\psi\left(b_{\text{post}, t}\right)-\psi\left(a_{\text{post},t}+b_{\text{post},t}\right). 
$$
It is then sufficient to show that $\mathbb{E}[\log \tilde{Y}_i]$ and $\mathbb{E}[\log (1-\tilde{Y}_i)]$ are finite, which is proven in Lemma 2:

\begin{lemma}\label{lemma:A3(a)1}
    $\mathbb{E}[\log \tilde{Y}_i]$ and $\mathbb{E}[\log (1-\tilde{Y}_i)]$ are finite.
\end{lemma}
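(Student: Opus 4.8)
\textbf{Proof proposal for Lemma~\ref{lemma:A3(a)1}.} The plan is to exploit that, since $\tilde Y_i\in[0,1]$, both $\log\tilde Y_i\le 0$ and $\log(1-\tilde Y_i)\le 0$ almost surely, so the two expectations are automatically well defined in $[-\infty,0]$ and the only thing to rule out is divergence to $-\infty$. By Eq.~\eqref{eq:map} and Eq.~\eqref{eq:inv_map}, $\tilde Y_i=\varphi(Y_i)=\sin^2\!\big(K_{t+1}f_t(Y_i)\big)$ and hence $1-\tilde Y_i=\cos^2\!\big(K_{t+1}f_t(Y_i)\big)$, with $Y_i\sim\mathrm{Beta}(a_{\mathrm{post},t},b_{\mathrm{post},t})$; writing $g(y):=K_{t+1}f_t(y)$, the claim reduces to showing that $\log|\sin(g(\cdot))|$ and $\log|\cos(g(\cdot))|$ are integrable against the beta density $q(y)=B(a_{\mathrm{post},t},b_{\mathrm{post},t})^{-1}y^{a_{\mathrm{post},t}-1}(1-y)^{b_{\mathrm{post},t}-1}$ on $(0,1)$.

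First I would record the structural facts about $g$. From the definition of $f_t$ and Eq.~\eqref{eq:deriv_fk}, $g$ is continuous and strictly monotone on $[0,1]$ (strictly increasing on the $\arcsin$ branch, strictly decreasing on the $\arccos$ branch), $C^1$ on $(0,1)$ with $g'(y)\ne 0$ there, and it maps $[0,1]$ onto a compact interval $I$. Since $\sin$ (resp.\ $\cos$) is analytic and not identically zero, it has only finitely many zeros in $I$; pulling these back through the bijection $g$ yields finitely many points $y_1,\dots,y_J\in[0,1]$ where $\sin(g(\cdot))$ vanishes, and similarly finitely many $y_1',\dots,y_{J'}'$ where $\cos(g(\cdot))$ vanishes. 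Off a union of small neighborhoods of these points, $|\sin(g(\cdot))|$ and $|\cos(g(\cdot))|$ are bounded below by a positive constant, so $\log|\sin(g(\cdot))|$ and $\log|\cos(g(\cdot))|$ are bounded there and contribute a finite amount; it remains to bound the contribution of a neighborhood of each zero.

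Next I would split the zeros into interior and boundary types. At an interior zero $y_j\in(0,1)$, $g$ is $C^1$ with $g'(y_j)\ne 0$ while $\cos(g(y_j))=\pm1$ (resp.\ $\sin(g(y_j'))=\pm1$), so a first-order expansion gives $|\sin(g(y))|\asymp|y-y_j|$ near $y_j$; hence $\log|\sin(g(y))|\asymp\log|y-y_j|$, which is locally integrable, and $q$ is continuous and bounded near $y_j\in(0,1)$, so this piece is finite. At a boundary zero, say $y_j=0$: here $f_t'$ blows up, so the linear estimate fails and instead I would use the sharper local behavior $\arcsin\sqrt y\sim\sqrt y$ (and, on the $\arccos$ branch, $\arccos\sqrt y=\tfrac{\pi}{2}-\arcsin\sqrt y$), which yields $|g(y)-g(0)|\asymp\sqrt y$ and therefore $|\sin(g(y))|\asymp\sqrt y$, i.e.\ $\log|\sin(g(y))|=\tfrac12\log y+O(1)$ as $y\to0$; since $\int_0^\delta|\log y|\,y^{a_{\mathrm{post},t}-1}\,dy<\infty$ for every $a_{\mathrm{post},t}>0$ and $a_{\mathrm{post},t}=a_{0,t}+N_t\bar X_t>0$, this contribution is finite. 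The case $y_j=1$ is symmetric, using $\arccos\sqrt y\sim\sqrt{1-y}$ together with $b_{\mathrm{post},t}=b_{0,t}+N_t(1-\bar X_t)>0$. Running the identical argument on the zeros of $\cos(g(\cdot))$ controls $\mathbb{E}[\log(1-\tilde Y_i)]$. Summing the finitely many finite pieces gives $\mathbb{E}[\log\tilde Y_i]>-\infty$ and $\mathbb{E}[\log(1-\tilde Y_i)]>-\infty$, and combined with the trivial upper bound $0$ both expectations are finite.

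The main obstacle is precisely the boundary case: when a zero of $\sin$ or $\cos$ coincides with $y=0$ or $y=1$, the derivative of $f_t$ is unbounded, so one cannot use a naive Taylor estimate and must instead invoke the square-root asymptotics of $\arcsin\sqrt y$ near $0$ (and of $\arccos\sqrt y$ near $1$) and then verify that the resulting $|\log y|\,y^{a_{\mathrm{post},t}-1}$-type integrand is integrable near the endpoint, which relies on the shape parameters of the beta posterior being strictly positive.
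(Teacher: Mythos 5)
Your argument is correct, but it takes a genuinely different route from the paper's. The paper stays entirely in the transformed variable $\tilde y$: it invokes the explicit density $g(\tilde y)$ of $\tilde Y_i$ from Lemma~\ref{lemma:density_g}, notes that the factor $|\sin|^{2a_{\text{post},t}-1}|\cos|^{2b_{\text{post},t}-1}\cdot K_t/K_{t+1}$ is at most one, so that $g(\tilde y)\leq B(a_{\text{post},t},b_{\text{post},t})^{-1}\,[\tilde y(1-\tilde y)]^{-1/2}$, and then evaluates the classical integral $\int_0^1 (-\log\tilde y)\,[\tilde y(1-\tilde y)]^{-1/2}\,d\tilde y=2\pi\log 2$ to obtain the explicit bound $\mathbb{E}[\,|\log\tilde Y_i|\,]\leq 2\pi\log 2/B(a_{\text{post},t},b_{\text{post},t})$ in two lines. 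You instead pull the expectation back to the original beta variable $Y_i$ by the law of the unconscious statistician and run a local singularity analysis of $\log|\sin(g(\cdot))|$ at the finitely many zeros of $\sin\circ g$, treating interior zeros (a $\log|y-y_j|$ singularity integrated against a locally bounded density) and boundary zeros (a $\tfrac12\log y$ singularity via $\arcsin\sqrt y\sim\sqrt y$, integrated against $y^{a_{\text{post},t}-1}$) separately; you correctly identify the boundary case, where $f_t'$ blows up, as the delicate step. Your route is longer but buys two things: it never needs the explicit density of $\tilde Y_i$ (and so sidesteps the branch bookkeeping in that formula), and it works for arbitrary strictly positive shape parameters, whereas the paper's one-line domination of the density by $[\tilde y(1-\tilde y)]^{-1/2}$ implicitly uses that the exponents $2a_{\text{post},t}-1$ and $2b_{\text{post},t}-1$ are nonnegative (which holds under the Jeffreys-type updates). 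The paper's approach, in exchange, is shorter and yields a clean closed-form constant. Both proofs are valid.
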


{\em Proof of Lemma~\ref{lemma:A3(a)1}.}
    By Lemma~\ref{lemma:density_g},
    $$
    \begin{aligned}
    \mathbb{E}[|\log \tilde{Y}_i|] & =\int_0^1|\log \tilde{y}| \cdot g(\tilde{y}) d \tilde{y} \\
    & \leq \frac{-1}{B\left(a_{\text{post}, t}, b_{\text{post}, t}\right)} \int_0^1 \frac{\log \tilde{y}}{\sqrt{\tilde{y}(1-\tilde{y})}} d \tilde{y} \\
    & =\frac{2 \pi \log 2}{B\left(a_{\text {post},t}, b_{\text {post}, t}\right)},
    \end{aligned}
    $$
   and 
    $$
    \mathbb{E}[|\log (1-\tilde{Y}_i)|] \leqslant \frac{-1}{B\left(a_{\text {post}, t}, b_{\text{post}, t}\right)} \cdot \int_0^1 \frac{\log (1-\tilde{y})}{\sqrt{\tilde{y}(1-\tilde{y})}} d \tilde{y} = \frac{2 \pi \log 2}{B\left(a_{\text {post},t}, b_{\text{post},t}\right)} .   
    $$

We then confirm the remainder of Assumption~\ref{assump:A3a} holds: 
Given that 
$$
|\log q_{\alpha,\beta}(\tilde{y})| \leqslant|\alpha-1| \cdot|\log \tilde{y}|+|\beta-1| \cdot|\log (1-\tilde{y})|+|\log B(\alpha, \beta)|,
$$
we define the control function 
$$
\begin{aligned}
m(\tilde{y})\coloneqq & \max \left\{\left|\alpha_{\min }-1\right|,\left|\alpha_{\max }-1\right|\right\} \cdot|\log \tilde{y}|+ \\
& \max \left\{\left|\beta_{\min }-1\right|,\left|\beta_{\max }-1\right|\right\} \cdot|\log (1-\tilde{y})|+ \\
& \max_{(\alpha, \beta)\in \Theta} \quad    |\log B(\alpha, \beta)|.
\end{aligned}
$$
Note that $m(\tilde{y})$ is well-defined because $B(\alpha, \beta)$ is continuous on $(0,+\infty)^2$ and $\Theta$ is a rectangle in $(0,+\infty)^2$. By Lemma~\ref{lemma:A3(a)1}, $m(\tilde{Y}_i)$ is integrable, and thus $m(\tilde{y})$ is a valid control function.

Finally, we verify Assumption~{A3(b)}. The Kullback–Leibler divergence of $q_{\alpha,\beta}$ from $g$ can be expressed as
\begin{align*}
    D_{\mathrm{KL}}(g \| q_{\alpha,\beta})=\int g(\tilde{y}) \log{g(\tilde{y})} \, d\tilde{y} - \int g(\tilde{y}) \log{q_{\alpha,\beta}(\tilde{y})} \, d\tilde{y}, 
\end{align*}
where $g$ does not depend on $(\alpha,\beta)$. Thus, Assumption~{A3(b)} is equivalent to the assumption that
$$h(\alpha,\beta) \coloneqq -\mathbb{E}[\log q_{\alpha,\beta}(\tilde{Y}_i)]$$ 
has a unique minimum in $\Theta$, which we prove as follows: 

Since
$$
\begin{aligned}
h(\alpha,\beta)=\log B(\alpha, \beta)-\mathbb{E}[\log \tilde{Y}_i] \cdot(\alpha-1)-\mathbb{E}[\log (1-\tilde{Y}_i)](\beta-1),
\end{aligned}
$$
the finiteness of $h(\alpha, \beta)$ is guaranteed by Lemma~\ref{lemma:A3(a)1}; according to Lemma~\ref{lemma:strict_log_convex} below, $\log B(\alpha, \beta)$ is strictly convex on $(0,+\infty)^2$; and since the remaining terms are affine, the function $h(\alpha,\beta)$ is strictly convex on $(0,+\infty)^2$. The unique minimizer of $h(\alpha, \beta)$ over $\Theta$ is thus implied by the fact that $\Theta$ is a closed compact set. 

\begin{lemma}\label{lemma:strict_log_convex}
    $B(\alpha, \beta)$ is strictly log-convex on $(0,+\infty)^2$.
\end{lemma}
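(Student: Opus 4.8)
The plan is to bypass any trigamma/Hessian computation and instead read off strict log-convexity directly from the Euler integral representation
\[
B(\alpha,\beta)=\int_0^1 t^{\alpha-1}(1-t)^{\beta-1}\,dt,\qquad (\alpha,\beta)\in(0,\infty)^2,
\]
using Hölder's inequality together with its equality case. Fix two distinct points $(\alpha_0,\beta_0),(\alpha_1,\beta_1)\in(0,\infty)^2$ and $\lambda\in(0,1)$, and write $\alpha_\lambda=\lambda\alpha_1+(1-\lambda)\alpha_0$, $\beta_\lambda=\lambda\beta_1+(1-\lambda)\beta_0$. The only algebraic input is the pointwise factorization, obtained by collecting exponents,
\[
t^{\alpha_\lambda-1}(1-t)^{\beta_\lambda-1}=\bigl(t^{\alpha_1-1}(1-t)^{\beta_1-1}\bigr)^{\lambda}\bigl(t^{\alpha_0-1}(1-t)^{\beta_0-1}\bigr)^{1-\lambda},\qquad t\in(0,1).
\]

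Applying Hölder's inequality to the right-hand side with conjugate exponents $p=1/\lambda$ and $q=1/(1-\lambda)$ then gives
\[
B(\alpha_\lambda,\beta_\lambda)\le\Bigl(\int_0^1 t^{\alpha_1-1}(1-t)^{\beta_1-1}\,dt\Bigr)^{\lambda}\Bigl(\int_0^1 t^{\alpha_0-1}(1-t)^{\beta_0-1}\,dt\Bigr)^{1-\lambda}=B(\alpha_1,\beta_1)^{\lambda}B(\alpha_0,\beta_0)^{1-\lambda},
\]
and taking logarithms yields convexity of $\log B$ along every segment in $(0,\infty)^2$, hence convexity on all of $(0,\infty)^2$. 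To upgrade this to \emph{strict} convexity I would invoke the equality condition of Hölder's inequality: equality forces $t^{\alpha_1-1}(1-t)^{\beta_1-1}$ and $t^{\alpha_0-1}(1-t)^{\beta_0-1}$ to be proportional for almost every $t\in(0,1)$, i.e.\ $t^{\alpha_1-\alpha_0}(1-t)^{\beta_1-\beta_0}$ to be constant a.e.\ on $(0,1)$. Since $\log t$ and $\log(1-t)$ are linearly independent on $(0,1)$, this function is constant only when $\alpha_1=\alpha_0$ and $\beta_1=\beta_0$; because we assumed $(\alpha_1,\beta_1)\ne(\alpha_0,\beta_0)$, Hölder's inequality is strict, and therefore $\log B(\alpha_\lambda,\beta_\lambda)<\lambda\log B(\alpha_1,\beta_1)+(1-\lambda)\log B(\alpha_0,\beta_0)$, which is exactly strict log-convexity of $B$ on $(0,\infty)^2$.

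The main obstacle is bookkeeping rather than depth: one must state the Hölder equality case carefully and confirm that "not almost-everywhere proportional integrands" genuinely produces a strict inequality. Here this is immediate, since the ratio of the two integrands is real-analytic and nonconstant whenever $(\alpha_1,\beta_1)\ne(\alpha_0,\beta_0)$, hence differs from any constant on a set of positive Lebesgue measure, but it is the one place where a careless argument could slip. An alternative route would be to write $\log B(\alpha,\beta)=\log\Gamma(\alpha)+\log\Gamma(\beta)-\log\Gamma(\alpha+\beta)$ and show its Hessian is positive definite, which (after computing the $(1,1)$ entry $\psi'(\alpha)-\psi'(\alpha+\beta)>0$ and the determinant $\psi'(\alpha)\psi'(\beta)-\psi'(\alpha+\beta)[\psi'(\alpha)+\psi'(\beta)]$) reduces to the super-additivity estimate $1/\psi'(\alpha+\beta)>1/\psi'(\alpha)+1/\psi'(\beta)$; I would avoid this, since proving that trigamma inequality is strictly harder than the one-line Hölder computation.
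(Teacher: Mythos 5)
Your proof is correct and follows essentially the same route as the paper's: both apply H\"older's inequality with exponents $1/\lambda$ and $1/(1-\lambda)$ to the factored Euler integrand and then use the equality case of H\"older (proportionality of $t^{\alpha_1-1}(1-t)^{\beta_1-1}$ and $t^{\alpha_0-1}(1-t)^{\beta_0-1}$ a.e., hence constancy of $t^{\alpha_1-\alpha_0}(1-t)^{\beta_1-\beta_0}$) to conclude that equality forces $(\alpha_1,\beta_1)=(\alpha_0,\beta_0)$. Your handling of the equality case is, if anything, slightly more explicit than the paper's about why the ratio cannot be constant unless the parameters coincide.
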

\begin{proof}
    Let $\alpha_1, \beta_1, \alpha_2, \beta_2 \in(0,+\infty), u, v \geqslant 0, u+v=1$.
    We have 
    $$
    \begin{aligned}
    B\left(u\left(\alpha_1, \beta_1\right)+v\left(\alpha_2, \beta_2\right)\right) & =  B\left(u \alpha_1+v \alpha_2, u \beta_1+v \beta_2\right) \\
    &=  \int_0^1 t^{u \alpha_1+v \alpha_2-1}(1-t)^{u \beta_1+v \beta_2-1} d t \\
    & =\int_0^1 t^{u\left(\alpha_1-1\right)+v\left(\alpha_2-1\right)}(1-t)^{u\left(\beta_1-1\right)+v\left(\beta_2-1\right)} d t \\
    & =\int_0^1\underbrace{\left[t^{\alpha_1-1}(1-t)^{\beta_1-1}\right]^u}_{f_1(t)} \cdot \underbrace{\left[t^{\alpha_2-1}(1-t)^{\beta_2-1}\right]^v}_{f_2(t)} d t \\
    & =\int_0^1 f_1(t) f_2(t) d t \\
    & \leq \left[\int_0^1 f_1^{\frac{1}{u}}(t) d t\right]^{u} \cdot\left[\int_0^1 f_2^{\frac{1}{v}}(t) d t\right]^{v} \\
    & = \left[\int_0^1 t^{\alpha_1-1}(1-t)^{\beta_1-1} d t\right]^u \cdot\left[\int_0^1 t^{\alpha_2-1}(1-t)^{\beta_2-1} d t\right]^v \\
    & = B^u\left(\alpha_1, \beta_1\right) \cdot B^v\left(\alpha_2, \beta_2\right),
    \end{aligned}
    $$
    where the sixth step follows from Hölder’s inequality. Therefore, $B(\alpha, \beta)$ is log-convex on $(0,+\infty)^2$.
    
    To show that $B(\alpha, \beta)$ is strictly log-convex, we further note that 
    the equality in the sixth step holds if and only if $\lambda |f_1|^{\frac{1}{u}} = \mu |f_2|^{\frac{1}{v}}$ almost everywhere for some constants $\lambda, \mu\geq 0$ not simultaneously zero. With substitution of the expressions for $f_1$ and $f_2$, this condition becomes, for some constants $\lambda, \mu\geq 0$, not both zero, almost everywhere
    \[
    \begin{aligned}
    & \lambda\, t^{\alpha_1 - 1}(1 - t)^{\beta_1 - 1} = \mu\, t^{\alpha_2 - 1}(1 - t)^{\beta_2 - 1} \\
    \Leftrightarrow\quad & t^{\alpha_1 - \alpha_2}(1 - t)^{\beta_1 - \beta_2} = \text{const} \\
    \Leftrightarrow\quad & \alpha_1 = \alpha_2,\quad \beta_1 = \beta_2.
    \end{aligned}
    \]
    That is, the equality holds if and only if $(\alpha_1, \beta_1) = (\alpha_2, \beta_2)$, which implies that the log-convexity of $B(\alpha, \beta)$ is strict.

\end{proof}

\end{document}